\tikzset{->-/.style={decoration={
  markings,
  mark=at position .5 with {\arrow{>}}},postaction={decorate}}}
\newcommand{\algorithmfootnote}[2][\footnotesize]{%
  \let\old@algocf@finish\@algocf@finish
  \def\@algocf@finish{\old@algocf@finish
    \leavevmode\rlap{\begin{minipage}{\linewidth}
    #1#2
    \end{minipage}}%
  }%
}
\newcommand{\xmark}{\ding{55}}%
\def\xscale{1}
\def\yscale{1}
\def\nodescale{1}
\tikzset{->-/.style={decoration={
  markings,
  mark=at position .5 with {\arrow{>}}},postaction={decorate}}}
\definecolor{shadethmcolor}{rgb}{1,0.95,0.96} 
\definecolor{shaderulecolor}{rgb}{1,1,1} 
\NewDocumentEnvironment{lemma}{o}
  {\IfNoValueTF{#1}
      {\begin{env2}}
      {\begin{env2}[#1]}
  }
  {\end{env2}}
\NewDocumentEnvironment{proposition}{o}
  {\IfNoValueTF{#1}
      {\begin{env3}}
      {\begin{env3}[#1]}
  }
  {\end{env3}}
  \NewDocumentEnvironment{corollary}{o}
  {\IfNoValueTF{#1}
      {\begin{env4}}
      {\begin{env4}[#1]}
  }
  {\end{env4}}
   \NewDocumentEnvironment{remark}{o}
  {\IfNoValueTF{#1}
      {\begin{env5}}
      {\begin{env5}[#1]}
  }
  {\end{env5}}
     \NewDocumentEnvironment{definition}{o}
  {\IfNoValueTF{#1}
      {\begin{env6}}
      {\begin{env6}[#1]}
  }
  {\end{env6}}
       \NewDocumentEnvironment{problem}{o}
  {\IfNoValueTF{#1}
      {\begin{env7}}
      {\begin{env7}[#1]}
  }
  {\end{env7}}
 \renewenvironment{proof}{{\noindent\bfseries Proof.}}{\hfill{\color{OrangeRed}{\leafNE}}}
\DeclareMathOperator*{\degr}{deg}
\DeclareMathOperator*{\height}{\mathcal{H}}
\DeclareMathOperator*{\child}{\mathcal{C}}
\DeclareMathOperator*{\leaves}{\mathcal{L}}
\DeclareMathOperator*{\des}{\mathcal{D}}
\DeclareMathOperator*{\subtrees}{\mathcal{S}}
\DeclareMathOperator*{\red}{\mathfrak{R}}
\newcommand{\alphabet}[2][]{\mathcal{A}_{#2}^{#1}}
\newcommand{\lang}[2][]{\Lambda_{#1}^{#2}}
\renewcommand{\Im}{Im\,}
\newcommand{\origin}{o} 
\DeclareMathOperator*{\stufforigin}{\mathbb{o}}
\DeclareMathOperator*{\presence}{\pi}
\DeclareMathOperator*{\topord}{\psi}
\DeclareMathOperator*{\childc}{\mathcal{C}_{\topord}}
\newcommand{\lex}[1]{#1_\text{lex.}}
\newcommand{\scut}{SC}
\newcommand\restr[2]{{
  \left.\kern-\nulldelimiterspace 
  #1 
  \vphantom{\big|} 
  \right|_{#2} 
  }}
   \DeclareMathOperator*{\tree}{\mathcal{T}}
 \DeclareMathOperator*{\forest}{\mathcal{F}}
\tikzset{
    ncbar angle/.initial=90,
    ncbar/.style={
        to path=(\tikztostart)
        -- ($(\tikztostart)!#1!\pgfkeysvalueof{/tikz/ncbar angle}:(\tikztotarget)$)
        -- ($(\tikztotarget)!($(\tikztostart)!#1!\pgfkeysvalueof{/tikz/ncbar angle}:(\tikztotarget)$)!\pgfkeysvalueof{/tikz/ncbar angle}:(\tikztostart)$)
        -- (\tikztotarget)
    },
    ncbar/.default=0.5cm,
}
\tikzset{round left paren/.style={ncbar=0.5cm,out=120,in=-120}}
\tikzset{round right paren/.style={ncbar=0.5cm,out=60,in=-60}}
\newcommand{\elongation}[1][1]{%
\raisebox{-1pt}{\begin{tikzpicture}[scale=#1]%

\draw[fill] (0,0) circle [radius=0.2ex] {};

\draw[->,thick] (0,0) -- (0,1.2ex);%

    \draw [thick] (-0.7ex,-0.2ex) to [round left paren ] (-0.7ex,1.2ex);
    \draw [thick] (0.7ex,-0.2ex) to [round right paren] (0.7ex,1.2ex);
\end{tikzpicture}}
}
\newcommand{\node}[1][white]{
\begin{tikzpicture}
\tikzstyle{noeud}=[draw,circle,fill=#1,scale=\nodescale*0.7]
\node[noeud] (0) at ({0},{0}) {};
\end{tikzpicture}
}
\newcommand{\widening}[1][1]{%
\raisebox{-1pt}{\begin{tikzpicture}[scale=#1]%

\draw[fill] (0,0) circle [radius=0.2ex] {};
\draw[fill] (0,1ex) circle [radius=0.2ex] {};

\draw[thick] (0,0)--(0,1ex);
\draw[->,thick] (0,0) -- (1.2ex,1.2ex);%

    \draw [thick] (-0.7ex,-0.2ex) to [round left paren ] (-0.7ex,1.2ex);
    \draw [thick] (1.7ex,-0.2ex) to [round right paren] (1.7ex,1.2ex);
\end{tikzpicture}}
}
\newcommand{\branching}[1][1]{%
\raisebox{-1pt}{\begin{tikzpicture}[scale=#1]%

\draw[fill] (0,1ex) circle [radius=0.2ex] {};

\draw[->,thick] (0,1ex) -- (0,-0.2ex);%
\draw[->,thick] (0,1ex) to [bend right=45](-1ex,-0.2ex);
\draw[->,thick] (0,1ex) to [bend left=45] (1ex,-0.2ex);

    \draw [thick] (-1.7ex,-0.2ex) to [round left paren ] (-1.7ex,1.2ex);
    \draw [thick] (1.7ex,-0.2ex) to [round right paren] (1.7ex,1.2ex);
\end{tikzpicture}}
}
\newcommand{\elong}[1][1]{%
\raisebox{-3pt}{\begin{tikzpicture}[scale=#1]%

\draw[fill] (0,0) circle [radius=0.2ex] {};

\draw[->,thick] (0,0) -- (0,1.2ex);%

\end{tikzpicture}}
}
\newcommand{\widen}[1][1]{%
\raisebox{-3pt}{\begin{tikzpicture}[scale=#1]%

\draw[fill] (0,0) circle [radius=0.2ex] {};
\draw[fill] (0,1ex) circle [radius=0.2ex] {};

\draw[-,thick] (0,0)--(0,1ex);
\draw[->,thick] (0,0) -- (1.2ex,1.2ex);%

\end{tikzpicture}}
}
\newcommand{\branch}[1][1]{%
\raisebox{-3pt}{\begin{tikzpicture}[scale=#1]%

\draw[fill] (0,1ex) circle [radius=0.2ex] {};

\draw[->,thick] (0,1ex) -- (0,-0.2ex);%
\draw[->,thick] (0,1ex) to [bend right=45](-1ex,-0.2ex);
\draw[->,thick] (0,1ex) to [bend left=45] (1ex,-0.2ex);

\end{tikzpicture}}
}
\title{\Large\textsc{Enumeration Of Irredundant Forests}}
\author{Florian Ingels \\ \href{mailto:florian.ingels@inria.fr}{florian.ingels@inria.fr}
\and
Romain Aza\"is \\ \href{mailto:romain.azais@inria.fr}{romain.azais@inria.fr}}
\date{\small Laboratoire Reproduction et D\'eveloppement des Plantes, Univ Lyon, ENS de Lyon, UCB Lyon 1, CNRS, INRAE, Inria, F-69342, Lyon, France}
\begin{document}

\maketitle

\vspace{0.25cm}


{\small
\begin{center}\textbf{Abstract}\end{center}

\vspace{-0.1cm}

\noindent
Reverse search is a convenient method for enumerating structured objects, that can be used both to address theoretical issues and to solve data mining problems. This method has already been successfully developed to handle unordered trees. If the literature proposes solutions to enumerate singletons of trees, we study in this article a more general problem, the enumeration of sets of trees -- forests. Specifically, we mainly study \emph{irredundant} forests, i.e., where no tree is a subtree of another. By compressing each such forest into a Directed Acyclic Graph (DAG), we develop a reverse search like method to enumerate DAGs compressing irredundant forests. Remarkably, we prove that these DAGs are in bijection with the row-Fishburn matrices, a well-studied class of combinatorial objects. In a second step, we derive our irredundant forest enumeration to provide algorithms for tackling related problems: (i) enumeration of forests in their classical sense (where redundancy is allowed); (ii) the enumeration of ``subforests'' of a forest, and (iii) the frequent ``subforest'' mining problem. All the methods presented in this article enumerate each item uniquely, up to isomorphism.

\smallskip

\noindent
\textbf{keywords:} Directed Acyclic Graph, Reverse Search, Unordered Trees, Enumeration, Forest
}


\section{Introduction}

\subsection{Context of the work}\label{ss:def:reverse}
Enumeration of trees is a long-term problem, where Cayley was the first to propose a formula for counting unordered trees in the mid-19th century \cite[I.5.2]{flajolet2009analytic}.
The exhaustive enumeration of ordered and unordered trees\footnote{A rooted tree is a connected graph without cycles such that there exists a special vertex called the root that has no parent, and the other vertices have exactly one parent. Trees are called ordered or unordered whether the order among siblings is important or not. See Subsection~\ref{ss:def:treedag}.} was successfully tackled in the early 00's by Nakano and Uno in \cite{nakano2002efficient,nakano2003efficient}. In the unordered case, an extension of the algorithm has been proposed to solve the problem of frequent substructure mining \cite{asai2003discovering}. Moreover, in the field of machine learning, we have recently demonstrated that exhaustive enumeration of the subtrees of a tree makes it possible to design classification algorithms significantly more efficient than their counterpart without such enumeration  \cite{azais2020weight}.

\medskip
\noindent
Our ambition in this article is to take these two problems of enumeration -- trees and subtrees -- to a higher order, i.e. to enumerate sets of trees instead of singletons. Specifically, we call an irredundant forest (shortened to forest in the sequel) a set of trees that contains no repetition -- in the sense that no tree is a subtree of another (see upcoming Subsection~\ref{ss:def:treedag} for a precise definition).  This condition of non-repetition is in line with a parsimonious enumeration approach, where the objects considered are all different and enumerated up to isomorphism. Besides, this condition is not restrictive since one can always introduce repetition afterwards. We are therefore interested in the problem of enumerating forests of unordered trees, and then, given a tree or forest, to enumerate all its ``subforests'' -- as forests of subtrees. The latter has already been discussed in the literature, but without consideration on isomorphism \cite{schwikowski2002enumerating}. We re-emphasize that we aim to enumerate these various items -- forests and subforests -- up to isomorphism.

\medskip
\noindent
Such an ambition immediately raises a number of obstacles. First of all, the trees are indeed unordered, but so are the sets of trees. For the former the literature has introduced the notion of the canonical form of a tree  \cite{nakano2003efficient,asai2003discovering}, which is a unique ordered representation of an unordered tree. The enumeration therefore focuses only on these canonical trees. Unfortunately, if it is possible to order a set of vertices, there is no total order on the set of trees, to the best of our knowledge. In addition, the condition of non-repetition filters the set of forests in a non-trivial way, making, a priori, the enumeration problem trickier.

\medskip
\noindent
Enumeration problems are recurrent in many fields, notably combinatorial optimization and data mining. They involve the exhaustive listing of a subset of the elements of a search set (possibly all of them), e.g. graphs, trees or vertices of a simplex. Given the possibly high combinatorial nature of these elements, it is essential to adopt clever exploration strategies as opposed to brute-force enumeration, typically to avoid areas of the search set not belonging to the objective subset.

\medskip
\noindent
One proven way of proceeding is to provide the search set with an enumeration tree structure; starting from the root, the branches of the tree are explored recursively, eliminating those that do not address the problem. Based on this principle, we can notably mention the well-known ``branch and bound'' method in combinatorial optimization \cite{land2010automatic} and the gSpan algorithm for frequent subgraph mining in data mining \cite{yan2002gspan}. Another of these methods is the so-called reverse search technique, which requires that the search set has a partial order structure, and which has solved a large number of enumeration problems since its introduction \cite{avis1996reverse} until very recently \cite{yamazaki2020enumeration}.  Actually, the algorithms previously introduced in the literature to enumerate trees are based on this technique \cite{nakano2002efficient,nakano2003efficient,asai2003discovering}.

\medskip
\noindent
In the present paper, we restrict ourselves to reverse search methods, for which the following formalism is adapted from the one that can be found in \cite[p. 45-51]{nowozin2009learning}, and slightly differ from the original definition by Avis and Fukuda \cite{avis1996reverse}. We refer the reader to these two references for further details.

\medskip
\noindent
Let $(\mathbb{S},\subseteq)$ be a partially ordered set, and $g:\mathbb{S} \to \lbrace \top,\bot \rbrace $ be a \emph{property}, satisfying anti-monotonicity
$$\forall s,t\in \mathbb{S} : ( s\subseteq t  )\land g(t)\implies g(s).$$
\noindent
The \emph{enumeration problem} for the property $g$ is the problem of listing all elements of $E_{\mathbb{S}}(g) = \lbrace s\in \mathbb{S} : g(s)=\top\rbrace$. An enumeration algorithm is an algorithm that returns $E_{\mathbb{S}}(g)$.

\medskip
\noindent
The reverse search technique relies on inverting a \emph{reduction rule} $f:\mathbb{S}\setminus \emptyset \to \mathbb{S}$, where $f$ satisfies the two properties of (i) \emph{covering}: $\forall s\in \mathbb{S}\setminus \emptyset, f(s) \subset s$ and (ii) \emph{finiteness}: $\forall s \in \mathbb{S}\setminus \emptyset, \exists k\in\mathbb{N}^\ast, f^k(s)=\emptyset$. Then, the \emph{expansion rule} is defined as $f^{-1}(t) = \lbrace s\in \mathbb{S} : f(s)=t\rbrace$. This defines an enumeration tree rooted in $\emptyset$, and repeated call to $f^{-1}$ can therefore enumerates all the elements of $\mathbb{S}$.

\medskip
\noindent
\begin{minipage}{0.45\textwidth}
The reverse search algorithm is shown in Algorithm~\ref{algo:reverse:def}. $E_{\mathbb{S}}(g)$ can be obtained from the call of \textsc{ReverseSearch}($(\mathbb{S},\subseteq),f^{-1},g,\emptyset$). As $g$ is anti-monotone, if $g(s)=\bot$, then all elements $s\subseteq t$ also have $g(t)=\bot$, and thefore pruning the enumeration tree in $s$ does not miss any element of $E_{\mathbb{S}}(g)$.
\end{minipage}
\hfill
\begin{minipage}{0.5\textwidth}
\begin{algorithm}[H]
\caption{\textsc{ReverseSearch}}\label{algo:reverse:def}
\KwIn{$(\mathbb{S},\subseteq)$, $f^{-1}$, $g$, $s_0\in \mathbb{S}$ -- s.t. $g(s_0)=\top$}
\textbf{output} $s_0$\\
\For{$t\in \lbrace s\in f^{-1}(s_0) | g(s)=\top\rbrace$}{
\textsc{ReverseSearch}($(\mathbb{S},\subseteq),f^{-1},g,t$)
}
\end{algorithm}
\end{minipage}

\medskip
\noindent
When successfully designed, a reverse search technique should yield polynomial output delay \cite{johnson1988generating,avis1996reverse}, i.e., the time between the output of one element and the next is bounded by a polynomial function in the size of the input.

\begin{remark}
It would have been possible to define directly the set $\mathbb{S}$ as the set of elements verifying the property $g$. Separating the two induces that the reduction rule $f$ formally depends only on $\mathbb{S}$, and not on $g$. This allows, once $f$ is constructed once and for all, to filter $\mathbb{S}$ according to various properties $g$ without additional work. In particular, this is useful in the case where $g$ depends on a tunable parameter  -- as in the \emph{frequent pattern mining problem} introduced in Section~\ref{sec:datamining}.
\end{remark}

\subsection{Precise formulation of the problem}\label{ss:def:treedag}

A rooted tree $T$ is a connected graph with no cycle such that there exists a unique vertex called the root, which has no parent, and any vertex different from the root has exactly one parent. Rooted trees are said unordered if the order between the sibling vertices of any vertex is not significant. As such, the set of children of a vertex $v$ is considered as a multiset and denoted by $\child(v)$. The leaves $\leaves(T)$ are all the vertices without children. The height of a vertex $v$ of a tree $T$ can be recursively defined as

\begin{equation}\label{eq:height}
\height(v)=\begin{cases} 0 & \text{if } v\in \leaves(T), \\ 1+\max_{u \in \child(v)}\height(u) & \text{otherwise.} \end{cases}
\end{equation}
\noindent
The height $\height(T)$ of the tree $T$ is defined as the height of its root. The outdegree of a vertex $v\in T$ is defined as $\degr(v)=\#\child(v)$\footnote{The notation $\#$ is used in this paper to denote both (i) the cardinality $\#S$ of any set $S$, and (ii) the number of vertices $\#G$ of any graph $G$.}; the outdegree of $T$ is then defined as $\degr(T)=\max_{v\in T}\degr(v)$. The depth of a vertex $v$ is the number of edges on the path from $v$ to the root of the tree.

\medskip
\noindent
Two trees $T_1$ and $T_2$ are isomorphic if there exists a one-to-one correspondance $\phi$ between the vertices of the trees such that (i) $u\in \child(v)$ in $T_1 \iff \phi(u) \in \child(\phi(v))$ in $T_2$ and (ii) the roots are mapped together. For any vertex $v$ of $T$, the subtree $T[v]$ rooted in $v$ is the tree composed of $v$ and all its descendants -- denoted by $\des(v)$. $\subtrees(T)$ denotes the set of all distinct subtrees of $T$, which is the quotient set of $\lbrace T[v] : v\in T\rbrace$ by the tree isomorphism relation. In this article, we consider only unordered rooted trees that will simply be called trees in the sequel. We denote by $\tree$ the set of all trees.

\medskip
\noindent
As mentioned before, we are interested in this paper in the enumeration of \emph{forests}. The literature acknowledges two definitions for a forest  \cite[p. 172]{bender2010lists}: (i) an undirected graph in which any two vertices are connected by at most one path or (ii) a disjoint union of trees. We adopt a variation of the latter one, that forbids repetitions inside the forest.

\begin{definition}
A set $\lbrace T_1,\dots, T_n \rbrace$ of trees is an irredundant forest if and only if
\begin{equation}\label{eq:forest}
\forall i\neq j, T_i \notin \subtrees(T_j).
\end{equation}
\end{definition}

\noindent
We denote by $\forest$ the set of all irredundant forests -- shortened to forests in the sequel of the paper. Our goal is to provide a reverse search method that outputs $\forest$. As already stated, this goal raises two major difficulties: firstly, the twofold unordered nature of forests (the set of trees and the trees themselves), and secondly, the non-trivial condition of non-repetition. While the latter problem is intrinsic, the main idea of this paper to address the former is to resort to the reduction of a forest into a Directed Acyclic Graph (DAG).

\begin{wrapfigure}[16]{R}{0.45\textwidth}
\vspace{-\baselineskip}
\centering
\def\xscale{0.6}
\def\yscale{0.6}
\def\nodescale{0.6}
\begin{tikzpicture}[xscale=\xscale,yscale=\yscale]
\tikzstyle{noeud}=[draw,circle,fill=blue,scale=\nodescale*1]
\node[noeud] (0) at (0,0) {};
\tikzstyle{noeud}=[draw,circle,fill=magenta,scale=\nodescale*1]
\node[noeud] (1) at (0,1) {};
\tikzstyle{noeud}=[draw,circle,fill=orange,scale=\nodescale*1]
\node[noeud] (4) at (0,2) {};

 \node at (0,3) {$T_1$};

\tikzstyle{fleche}=[-,>=latex]
\draw[fleche] (1)--(0) {};
\draw[fleche] (4)--(1) {};

\def\xshift{2}

\tikzstyle{noeud}=[draw,circle,fill=blue,scale=\nodescale*1]
\node[noeud] (0) at (0+\xshift,0) {};
\node[noeud] (0b) at (-1+\xshift,0) {};
\node[noeud] (0c) at (1+\xshift,0) {};
\node[noeud] (0d) at (2+\xshift,0) {};
\tikzstyle{noeud}=[draw,circle,fill=magenta,scale=\nodescale*1]
\node[noeud] (1) at (-1+\xshift,1) {};
\node[noeud] (1b) at (0+\xshift,1) {};
\tikzstyle{noeud}=[draw,circle,fill=cyan,scale=\nodescale*1]
\node[noeud] (2) at (1.5+\xshift,1) {};
\tikzstyle{noeud}=[draw,circle,fill=yellow,scale=\nodescale*1]
\node[noeud] (5) at (0+\xshift,2) {};
\tikzstyle{fleche}=[-,>=latex]
\draw[fleche] (1b)--(0) {};
\draw[fleche] (1)--(0b) {};
\draw[fleche] (2)--(0c) {};
\draw[fleche] (2)--(0d) {};
\draw[fleche] (5)--(2) {};
\draw[fleche] (5)--(1) {};
\draw[fleche] (5)--(1b) {};

\node at (0+\xshift,3) {$T_2$};

\def\xshift{6}

\tikzstyle{noeud}=[draw,circle,fill=blue,scale=\nodescale*1]
\node[noeud] (0) at (0+\xshift,0) {};
\node[noeud] (0b) at (-1+\xshift,0) {};
\node[noeud] (0c) at (1+\xshift,0) {};
\tikzstyle{noeud}=[draw,circle,fill=brown,scale=\nodescale*1]
\node[noeud] (3) at (0+\xshift,1) {};

\tikzstyle{fleche}=[-,>=latex]

\draw[fleche] (3)--(0) {};
\draw[fleche] (3)--(0b) {};
\draw[fleche] (3)--(0c) {};

\node at (0+\xshift,3) {$T_3$};

\draw [decorate,decoration={brace,amplitude=5pt,raise=2ex}] (-0.5,3) -- (7.5,3) node[midway,yshift=2em]{$F=\lbrace T_1,T_2,T_3\rbrace$};

\def\xshift{10}

\def\xscale{0.7}
\def\yscale{0.7}
\def\nodescale{0.7}

\tikzstyle{noeud}=[draw,circle,fill=blue,scale=\nodescale*1]
\node[noeud] (0) at (0+\xshift,0) {};
\tikzstyle{noeud}=[draw,circle,fill=magenta,scale=\nodescale*1]
\node[noeud] (1) at (-1+\xshift,1) {};
\tikzstyle{noeud}=[draw,circle,fill=cyan,scale=\nodescale*1]
\node[noeud] (2) at (0+\xshift,1) {};
\tikzstyle{noeud}=[draw,circle,fill=brown,scale=\nodescale*1]
\node[noeud] (3) at (1+\xshift,1) {};
\tikzstyle{noeud}=[draw,circle,fill=orange,scale=\nodescale*1]
\node[noeud] (4) at (-1+\xshift,2) {};
\tikzstyle{noeud}=[draw,circle,fill=yellow,scale=\nodescale*1]
\node[noeud] (5) at (0+\xshift,2) {};
\tikzstyle{fleche}=[->-,>=latex,black]
\draw[fleche] (1)--(0) {};
\draw[fleche] (2)--(0) {}node [right,midway,scale=\nodescale] {$2$};
\draw[fleche] (3)--(0) {}node [right,midway,scale=\nodescale] {$3$};
\draw[fleche] (4)--(1) {};
\draw[fleche] (5)--(2) {};
\draw[fleche] (5)--(1) {}node [left,midway,scale=\nodescale] {$2$};

\draw[->,red,ultra thick] (0+\xshift,3)--(5) {};
\draw[->,red,ultra thick] (-1+\xshift,3)--(4) {};
\draw[->,red,ultra thick] (1+\xshift,2)--(3) {};

\node[yshift=2em] at (0+\xshift,3) {$\red(F)$};

\end{tikzpicture}
\captionof{figure}{A forest $F$ (left) and its DAG reduction (right). Roots of isomorphic subtrees are identically colored, as well as the corresponding vertex of the DAG. The sources of the DAG (indicated with red arrows) correspond exactly to the roots of the trees in $F$. For the sake of clarity, arcs of multiplicity greater than one are drawn only once and their multiplicity is written next to the arc.}
\label{fig:dag:reduction}
\end{wrapfigure}

\medskip
\noindent
DAG reduction is a method meant to eliminate internal repetitions in the structure of trees and forests of trees. Beginning with \cite{Sutherland:1963:SMG:1461551.1461591}, DAG representations of trees are also much used in computer graphics where the process of condensing a tree into a graph is called object instancing \cite{Hart:1991:EAR:127719.122728}. A precise definition of DAG reduction of trees, together with algorithms to compute it, are provided in \cite{GODI}, whereas one technique to extend those algorithms to forests is presented in \cite[Section 3.2]{azais2020weight}.  DAG reduction can be interpreted as the construction of the quotient graph of a forest by the tree isomorphism relation. However, in this paper, we provide the general idea of DAG reduction as a vertex coloring procedure.

\medskip
\noindent
Consider a forest $F=\lbrace T_1,\dots, T_n\rbrace$ to reduce. Each vertex of each tree is given a color such that if two distinct vertices $u,v$ belonging respectively to $T_i, T_j$ (not necessarily distinct) have the same color, then $T_i[u]$ and $T_j[v]$ are isomorphic. Reciprocally, if two subtrees are isomorphic, their roots have to be identically colored. Let us denote $c(\cdot)$ the function that associates a color to any vertex. Then, we build a directed graph $D=(V,A)$ with as many vertices as colors used, i.e. $\#V = \# \Im(c)$. For any two vertices $u,v$ in the forest, if $u\in\child(v)$, then we create an arc $c(v)\to c(u)$ in $D$. Note that this definition implies that multiples arcs are possible in $D$, as if there exist $u,u' \in \child(v)$, for $v\in T$, such that $T[u]$ and $T[u']$ are isomorphic, then the arcs $c(v)\to c(u)$ and $c(v)\to c(u')$ are identical. The graph $D$ is a DAG \cite[Proposition 1]{GODI}, i.e. a connected directed (multi)graph without cycles. We refer to Figure~\ref{fig:dag:reduction} for an example of DAG reduction.

\medskip
\noindent
In this paper, $\red(F)$ denotes the DAG reduction of $F$. It is crucial to notice that the function $\red$ is a one-to-one correspondence \cite[Proposition~4]{GODI}, which means that DAG reduction is a lossless compression algorithm. Since $F$ fulfills condition (\ref{eq:forest}), no tree of $F$ is a subtree of another. If this were the case, say $T_i \in \subtrees(T_j)$, then $\red(T_i)$ would be a subDAG of $\red(T_j)$, and therefore the numbers of roots in $\red(F)$ would be strictly less than $\#F$. Since such a situation can not occur, there are exactly as many roots in $\red(F)$ as there are elements in $F$: no information is lost. In other words, $F$ can be reconstructed from $\red(F)$ and $\red^{-1}$ stands for the inverse function.

\medskip
\noindent
The DAG structure inherits of some properties of trees. For a vertex $v$ in a DAG $D$, we will denote by $\child(v)$ the set of children of $v$. $\height(v)$ and $\deg(v)$ are inherited as well. Similarly to trees, we denote by $D[v]$ the subDAG rooted in $v$ composed of $v$ and all its descendants $\des(v)$. Note that since $D[v]$ has a unique root $v$, it compresses a forest made of a single tree. For the simplicity of notation, we use $\red^{-1}(D[v])$ to designate the tree compressed by $D[v]$ -- instead of the singleton.

\medskip
\noindent
In the sequel, DAGs compressing forests are called \emph{FDAGs}, to distinguish them from general directed acyclic graphs.

\medskip
\noindent
Since DAG compression is lossless, and since a forest can be reconstructed from its DAG reduction, it should be clear that enumerating all forests is equivalent to enumerating all FDAGs. Yet, the latter approach has the merit of transforming set of trees into unique objects, which makes it possible, if able to design a canonical representation -- like the trees in \cite{nakano2003efficient,asai2003discovering}, to get rid of the twofold unordered nature of forests, as claimed earlier. Indeed, any ordering of the vertices of the DAG induces an order on the roots of the DAG, and therefore on the elements of the forest, as well on the vertices of the trees themselves.

\subsection{Aim of the paper}\label{ss:aim}

To the best of our knowledge, the enumeration of DAGs has never been considered in the literature. The aim of this article is twofold, i.e (i) to open the way by presenting a reverse search algorithm enumerating FDAGs, in Section~\ref{sec:enumeration}, and (ii) to derive from it an algorithm for enumerating substructures in Section~\ref{sec:mining}. The frequent pattern mining problem is a classical data mining problem  -- see \cite{han2007frequent} for a survey on that question -- and we provide in Section~\ref{sec:datamining} a slight variation of the algorithm of Section~\ref{sec:mining} to tackle this issue. In addition, Section~\ref{sec:analysis} analyses the growth of the enumeration tree defined in Section~\ref{sec:enumeration}, while Section~\ref{sec:variant} proposes two variations of it. In more detail, our outline is as follows:

\begin{itemize}
\item The first step is to introduce a canonical form for FDAG. For trees \cite[Section 3]{nakano2003efficient}, this consisted in associating an integer (its depth) to each vertex, and maximizing the sequence by choosing an appropriate ordering over the vertices. The notion of depth does not apply to FDAGs, which forces us to find another strategy. DAGs are characterized by the existence of a topological ordering \cite{kahn1962topological}, and we introduce in Subsection~\ref{ss:canonical} a topological ordering that is unique if and only if a DAG compresses a forest. This canonical ordering is defined so that the sequence of children of the vertices is strictly increasing, where the multisets of children are ordered by the lexicographical order. In fact, these ordered multisets of children are considered as formal words, which brings us to a detour through the theory of formal languages in Subsection~\ref{ss:words} to introduce useful results for the rest of the article. Compared to trees, we have here a first gain in complexity insofar as we maximize a sequence of words instead of a sequence of integers.

\item The expansion rule used for trees \cite[Section 4]{nakano2003efficient} is to add a new vertex in the tree as a child of some other vertex, so that the depth-sequence remains maximal. Consequently, a single arc is also added. On the other hand, for a FDAG, we want to be able to add either vertices or arcs independently. In Subsection~\ref{ss:rules}, we define three expansion rules, reflecting the full spectrum of possible operations, so that the DAG obtained afterward is still a FDAG. Specifically, the branching rule allows to add an arc, where the elongation and widening rules add vertices at different height. We show in Proposition~\ref{prop:canonical} that the rules preserve the canonicalness and in Proposition~\ref{prop:bij} that they are ``bijective'': any FDAG can be reached by applying the expansion rules to a unique FDAG. In Subsection~\ref{ss:enumtree} we derive from them an enumeration tree covering the set of FDAGs.

\item Notably, a bijection between FDAGs and row-Fishburn matrices, a class of combinatorial objects much exploited in the literature  \cite[Section 2]{hwang2019asymptotics}, is shown in Theorem~\ref{th:bijdagmatrix} -- which proof lies in Appendix~\ref{sec:dagmatrix}. The asymptotic behavior of these matrices being well known \cite{jelinek2012counting,bringmann2014asymptotics}, this allows us to derive from it the behavior of the enumeration tree. In return, since our bijection is constructive, the enumeration tree can be used to enumerate row-Fishburn matrices -- and all the objects they are in bijection with -- via the reverse search method. Remarkably, this bijection operates between two objects that, at first sight, have little in common.

\item For an enumeration algorithm to have any practical interest, it is necessary that the associated enumeration tree has a ``reasonable'' growth -- with regard to the size of the explored space. This is the case for our algorithm since we prove, in Subsection~\ref{ss:branching}, that a FDAG with $n$ vertices has a number of successors in the enumeration tree in the order of $\Theta(n)$ -- and that those successors can be computed in quasi-quadratic time. We also show, in Theorem~\ref{th:polynomial_delay}, that our algorithm runs with polynomial delay \cite{johnson1988generating}.

\item Subsection~\ref{ss:redundant} introduces a way of enumerating forests in their classical definition, i.e., with redundancy, where some trees may be equal to or subtrees of others. The proposed method takes a redundancy-free forest, as enumerated by our algorithm, and adds repetition in an extra enumeration step. Finally, Subsection~\ref{ss:constraint} concludes on enumeration by proposing sets of constraints that make the enumeration tree finite. Indeed, since the rules only allow to increase the height, degree or number of vertices, it is sufficient to set maximum values for some of these parameters to achieve this goal; however the combination of parameters has to be wisely chosen, as we show it.

\item Since the structures we enumerate are forests, it is natural that the substructures we are interested in are ``subforests''. A precise definition of the latter is given in Section~\ref{sec:mining}, i.e. forests of subtrees, and are referred to as subFDAGs. An algorithm to enumerate all subFDAGs appearing in a FDAG is also provided. The frequent subFDAG mining problem is finally addressed in Section~\ref{sec:datamining}.
\end{itemize}

\noindent
Concluding remarks concerning the implementation of our results in the Python library \verb+treex+ \cite{azais2019treex} are briefly mentioned at the end of the article. In Appendix~\ref{annex:notations}, the interested reader will find an index of frequent notations used throughout the paper.

\section{Exhaustive enumeration of FDAGs}\label{sec:enumeration}

In this section, we introduce our main result, that is, a reverse search algorithm for the enumeration of FDAGs. As we will consider the multisets of children of vertices as formal words built on the alphabet formed by the set of vertices, we introduce in Subsection~\ref{ss:words} some definitions and results on formal languages that will be useful for the sequel. We characterize unambiguously in Subsection~\ref{ss:canonical} our objects of study, through the lens of topological orderings, defining a canonical topological ordering for DAGs, that is unique if and only if a DAG compresses a forest of unordered trees, i.e. it is a FDAG -- see Theorem~\ref{prop:dag:cutf}. We then define three expansion rules that are meant to extend the structures of FDAGs in Subsection~\ref{ss:rules}, and we study their properties in Subsection~\ref{ss:analysis_rule}. In Subsection~\ref{ss:enumtree}, we show with Theorem~\ref{th:reduction} that these expansion rules define an enumeration tree on the set of FDAGs.

\subsection{Preliminary: a detour through formal languages}\label{ss:words}

We present in this subsection some definitions and results on formal languages that will be useful for the sequel of Section~\ref{sec:enumeration}.

\medskip
\noindent
Let $\alphabet[]{}$ be a totally ordered finite set, called alphabet, whose elements are called letters. A word is a finite sequence of letters of $\alphabet[]{}$. The length of a word $w$ is equal to its number of letters and is denoted by $\# w$. There is a unique word with no letter called the empty word and denoted  by $\epsilon$. The set of all words is denoted  by $\alphabet[*]{}$. Words can be concatenated to create a new word whose length is the sum of the lengths of the original words; $\epsilon$ is the neutral element of this concatenation operation.

\medskip
\noindent
The lexicographical order over $\alphabet[*]{}$, denoted by $\lex{<}$ is defined as follows. Let $w_1=a_0\cdots a_p$ and $w_2 = b_0\cdots b_q$ be two words, with $a_i,b_j\in \alphabet[]{}$. If $\# w_1 = \#w_2$, then $w_1\lex{<} w_2$ if and only if $\exists k \in [\![0,p]\!], a_i=b_i\; \forall i<k$ and $a_k<b_k$. Otherwise, let $m=\min(p,q)$; $w_1\lex{<}w_2$ if and only if either (i) $a_0\cdots a_m \lex{<} b_0\cdots b_m$ or (ii) $a_0\cdots a_m \lex{=} b_0\cdots b_m$ and $m<q$ -- that is, $p<q$. Note that, by convention, $\epsilon \lex{<} w$ for any word $w$.

\medskip
\noindent
Let $w\in \alphabet[*]{}$. We define the suffix-cut operator $\scut(w)$, which removes the last letter of $w$:

\begin{equation}\label{eq:scut}
\scut(w) = \begin{cases} w' & \text{if } w =w'a \text{ with } a\in \alphabet[]{} \text{ and } w' \in \alphabet[*]{},\\
\epsilon & \text{otherwise}.
\end{cases}
\end{equation}

\medskip
\noindent
A \emph{language} is a set of words satisfying some construction rules. We introduce hereafter two languages that will be useful in the sequel of the paper.

\begin{definition}\label{def:decreasingwords}
The language of decreasing words is defined as
$$\lang{} =\big\lbrace w=a_0\cdots a_m \in \alphabet[*]{}  :  a_i \lex{\geq} a_{i+1} \; \forall i\in [\![ 0,m-1]\!] \big\rbrace.$$
\end{definition}

\begin{definition}\label{def:boundedwords}
Let $\overline{w}\in \lang{}$. The language of decreasing words bounded by $\overline{w}$ is defined as
$$\lang{\overline{w}} = \left\lbrace w \in \lang{}  : w \lex{>} \overline{w}\right\rbrace.$$
Any word $w\in \lang{\overline{w}}$ is said to be minimal if and only if $w\in \lang{\overline{w}}$ but $\scut(w)\notin \lang{\overline{w}}$.
\end{definition}
\noindent
As an example, if $\alphabet[]{}=\lbrace 0,1,2,3\rbrace$, then  $\overline{w}=211\in\lang{}$, whereas $121\notin \lang{}$. In addition, $\lang{\overline{w}}$ contains words such as $31$, $22$, $21110$, etc. $22$ is a minimal word of $\lang{\overline{w}}$ as $22 \lex{>} 211$ but $\scut(22)=2 \lex{<}211$.

\medskip
\noindent
Our focus is now on the construction of the minimal words of $\lang{\overline{w}}$. Let $\overline{w} = a_0\cdots a_p$ and $w=b_0\cdots b_q\in \lang{\overline{w}}$. Taking into account that $w\lex{>} \overline{w}$ and that they both are decreasing words, there are only two possibles cases:
\begin{enumerate}[label=(\roman*)]
\item $w$ and $\overline{w}$ share a common prefix $a_0\cdots a_m$. Then $w=a_0\cdots a_m b_{m+1} \cdots b_q$, and the word $a_0\cdots a_m b_{m+1}$ is minimal by applying successive suffix-cut operations.
\item $w$ and $\overline{w}$ do not share a common prefix. Necessarily $b_0 \lex{>} a_0$, and then the word $b_0$ is minimal by applying several suffix-cut operations.
\end{enumerate}

\noindent
From the above, we deduce a method for constructing all minimal word of $\lang{\overline{w}}$. First, we partition $\alphabet[]{}$ into disjoint -- potentially empty -- subsets:
\begin{align*}
\alphabet[0]{} & =  \lbrace a \in \alphabet[]{} : a \lex{>} a_0 \rbrace,\\
\alphabet[i]{} &= \lbrace a \in \alphabet[]{} : a_{i-1} \lex{\geq} a \lex{>} a_{i}\rbrace \quad 1\leq i \leq p, \\
\alphabet[p+1]{} &=  \lbrace a \in \alphabet[]{} :  a_p \lex{\geq} a\rbrace.
\end{align*}

It then follows that -- empty $\alphabet[i]{}$'s not being considered,
\begin{itemize}
\item $\forall b\in \alphabet[0]{}$, the word $b$ is minimal,
\item $\forall b\in \alphabet[i]{}$ with $i \in \lbrace 1,\dots, p\rbrace$, the word $a_0\cdots a_{i-1}b$ is minimal,
\item $\forall b \in \alphabet[p+1]{}$, the word $\overline{w} b$ is minimal.
\end{itemize}

\medskip
\noindent
As we partitioned $\alphabet[]{}$, we have proved the following proposition.

\begin{proposition}\label{prop:widening}
The number of minimal words of $\lang{\overline{w}}$ is exactly $\#\alphabet[]{}$.
\end{proposition}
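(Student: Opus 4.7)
The plan is to construct an explicit bijection $\Phi$ between the alphabet $\alphabet[]{}$ and the set of minimal words of $\lang{\overline{w}}$, from which the cardinality count follows immediately. The partition $\alphabet[]{} = \alphabet[0]{} \sqcup \alphabet[1]{} \sqcup \cdots \sqcup \alphabet[p+1]{}$ introduced just above the statement is the key device: I would send every letter $b \in \alphabet[i]{}$ to the word $\Phi(b)$ equal to $b$ itself if $i=0$, to $a_0\cdots a_{i-1}\, b$ if $1\leq i\leq p$, and to $\overline{w}\, b$ if $i=p+1$.

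The first step is to verify that $\Phi(b) \in \lang{\overline{w}}$ in every case. The decreasing property of $\Phi(b)$ follows directly from the bound $a_{i-1} \lex{\geq} b$ built into the definition of $\alphabet[i]{}$, while the strict inequality $\Phi(b) \lex{>} \overline{w}$ follows from the complementary bound $b \lex{>} a_i$ (or from the length increase in the $i=p+1$ boundary case). Minimality is equally direct: one application of $\scut$ erases $b$ and returns a prefix of $\overline{w}$, which is $\lex{\leq} \overline{w}$ and so lies outside $\lang{\overline{w}}$.

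The step requiring the most care, which I view as the sole real obstacle, is the surjectivity of $\Phi$: showing that every minimal word $w = b_0\cdots b_q$ is of the form $\Phi(b_q)$ for some block of the partition. Here I would invoke the dichotomy spelled out in the paragraph preceding the statement -- namely, either $w$ and $\overline{w}$ share a maximal common prefix $a_0\cdots a_m$ with $b_{m+1} \lex{>} a_{m+1}$ (or simply $m=p<q$, meaning $w$ extends $\overline{w}$), or else $w$ and $\overline{w}$ disagree from the very first letter, forcing $b_0 \lex{>} a_0$. The minimality condition $\scut(w) \notin \lang{\overline{w}}$ then collapses $w$ to one of the three forms defining $\Phi$, and the inequalities obtained along the way place $b_q$ in the correct block $\alphabet[m+1]{}$ (respectively $\alphabet[0]{}$).

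Injectivity of $\Phi$ is immediate, since the $\alphabet[i]{}$ are pairwise disjoint and $b$ is recovered as the last letter of $\Phi(b)$. Summing over the blocks of the partition therefore yields exactly $\sum_{i=0}^{p+1} \#\alphabet[i]{} = \#\alphabet[]{}$ minimal words, as claimed.
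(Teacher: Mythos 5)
Your proof is correct and takes essentially the same route as the paper: the paper's argument is precisely the partition of $\alphabet[]{}$ into the blocks $\alphabet[0]{},\dots,\alphabet[p+1]{}$ combined with the common-prefix dichotomy, and your map $\Phi$ simply makes the resulting letter-to-minimal-word correspondence explicit. The well-definedness, injectivity and surjectivity checks you carry out are exactly the details the paper leaves implicit in the sentence ``as we partitioned $\alphabet[]{}$, we have proved the following proposition.''
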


\noindent
\begin{minipage}{0.45\textwidth}
As a follow-up of the example some lines ago, with $\alphabet[]{}=\lbrace 0,1,2,3\rbrace$ and $\overline{w}=211$, we apply the proposed method to find the minimal elements of $\lang{\overline{w}}$. We partition $\alphabet[]{}$ into: $ \alphabet[0]{} = \lbrace 3 \rbrace$, $ \alphabet[1]{} = \lbrace 2 \rbrace$, $ \alphabet[2]{} =\emptyset$, $ \alphabet[3]{} = \lbrace 0,1 \rbrace$. The four minimal words are therefore $3$, $22$, $2111$ and $2110$.

\medskip
\noindent
Although the previous result is completely general, if we require that $\alphabet[]{}=\lbrace 0,\dots,n\rbrace$, then the partition method described above can be rewritten into Algorithm~\ref{algo:widening}. While this is not included in the pseudocode provided, note that the algorithm should return an empty list if $a_0>n$, as in this case there would be no minimal word to look for.

\end{minipage}\hfill
\begin{minipage}{0.5\textwidth}
\begin{algorithm}[H]
\caption{\textsc{MinimalWords}}\label{algo:widening}
\KwIn{$\overline{w} = a_0\cdots a_p$, $\alphabet[]{}=\lbrace 0,\dots, n\rbrace$}
\KwOut{All minimal words of $\lang{\overline{w}}$}
Set $L$ to the empty list\\
\If{$a_0< n$}{
\For{$i\in \lbrace a_0+1,\dots, n\rbrace $}{Add the word $i$ to $L$}
}
\For{$k\in \lbrace 1,\dots,p\rbrace$}{
\If{$a_k< a_{k-1}$}{
\For{$i\in \lbrace a_k+1,\dots, a_{k-1}\rbrace $}{Add the word $a_0\cdots a_{k-1}i$ to $L$}
}
}
\For{$i\in \lbrace 0,\dots, a_p\rbrace $}{Add the word $a_0\cdots a_{p}i$ to $L$}
\Return{$L$}
\end{algorithm}
\end{minipage}

\subsection{Canonical FDAGs}\label{ss:canonical}
FDAGs are unordered objects, like the trees they compress, and therefore their enumeration requires to reflect this nature. In practice, finding a systematic way to order them makes it possible to design a simpler reduction rule, as done for trees \cite{nakano2003efficient}, ignoring the combinatorics of permutations. The purpose of this subsection is to provide a unique way to order FDAGs. We show that such an order exists in Theorem~\ref{prop:dag:cutf}, unambiguously characterizing FDAGs. The approach chosen is based on the notion of topological order.

\begin{wrapfigure}[9]{R}{0.5\textwidth}
  \centering
 \begin{minipage}[c]{0.17\textwidth}
   \centering
 \def\xscale{1.5}
\def\yscale{1.5}
\def\nodescale{1}
\begin{tikzpicture}[xscale=\xscale,yscale=\yscale]
\tikzstyle{arc}=[->-,>=latex]
\tikzstyle{noeud}=[draw,circle,scale=\nodescale*1]
\node[noeud,fill=blue] (E) at ({0},{0}) {};
\tikzstyle{noeud}=[draw,circle,scale=\nodescale*1]
\node[noeud,fill=red] (G) at ({-0.5},{1}) {};
\tikzstyle{noeud}=[draw,circle,scale=\nodescale*1]
\node[noeud,fill=magenta] (H) at ({0.5},{1}) {};
\node[noeud,fill=cyan] (F) at (1,0) {};
\draw[arc] (G) to [bend right=30 ](E) {};
\draw[arc] (G) to [bend left=30 ](E) {};
\draw[arc] (H)--(E) {};
\draw[arc] (H)--(F) {};
\end{tikzpicture}
\end{minipage}~
\begin{minipage}[c]{0.32\textwidth}
  \centering
  \small
  \begin{tabular}{c|cccc}
  \node &  \node[red] & \node[magenta] & \node[blue] &\node[cyan]\\
  \hline
  $\topord_1(\node)$ & $3$ & $2$ & $1$ & $0$\\
  $\topord_2(\node)$  &$3$ & $2$ & $0$ & $1$\\
   $\topord_3(\node)$  &$2$ & $3$ & $0$ & $1$\\
    $\topord_4(\node)$  &$2$ & $3$ & $1$ & $0$\\
     $\topord_5(\node)$  &$1$ & $3$ & $0$ & $2$\\
  \end{tabular}
 \end{minipage}
   \caption{The DAG on the left admits five topological orderings, which are shown in the table.}\label{fig:topord:nonunique}
\end{wrapfigure}

\paragraph{Topological ordering}
Let $D$ be a directed graph, where multiple arcs are allowed. A topological ordering on $D$ is an ordering of the vertices of $D$ such that for every arc $uv$ from vertex $u$ to vertex $v$, $u$ comes after $v$ in the ordering. Formally, $\topord : D \to [\![ 0, \# D -1]\!]$ is a topological ordering if and only if $\topord$ is bijective and $\topord(u)>\topord(v)$ for all $u,v\in D$ such that there exists at least one arc $uv$ in $D$. A well known result establishes that $D$ is a DAG if and only if it admits a topological ordering \cite{kahn1962topological}. Nonetheless, when a topological ordering exists, it is in general not unique -- see Figure~\ref{fig:topord:nonunique}. A reverse search enumeration of topological orderings of a given DAG can actually be found in \cite[Section 3.5]{avis1996reverse}.

\paragraph{Constrained topological ordering} We aim to reduce the number of possible topological orderings of a DAG by constraining them. Let $D$ be a DAG and $\topord$ a topological ordering. Taking advantage of the vertical hierarchy of DAG, our first constraint is
\begin{equation}\label{eq:topord:h}
\forall (u,v) \in D^2, \quad \height(u) > \height(v) \implies \topord(u)>\topord(v).
\end{equation}
\noindent
Applying (\ref{eq:topord:h}) to the topological orderings presented in Figure~\ref{fig:topord:nonunique}, $\topord_5$ must be removed, as $\topord_5(\node[cyan])>\topord_5(\node[red])$ and $\height(\node[red]) > \height(\node[cyan])$.

\medskip
\noindent
For any vertex $v$, and any $u\in\child(v)$, by definition, $\height(v)>\height(u)$. Therefore, there can be no arcs between vertices at same height. Any arbitrary order on them leads to a different topological ordering. The next constraint we propose relies on the lexicographical order:
\begin{equation}\label{eq:topord:child}
\forall (u,v) \in D^2, \quad \height(u) = \height(v) \text{ and } {\child}_{\topord}(u) \lex{>} {\child}_{\topord}(v)\implies \topord(u)>\topord(v),
\end{equation}
where $\child_{\topord}(v)$ is the list $\left[ \topord(v_i) : v_i \in  \child(v)\right]$ sorted by decreasing order w.r.t. the lexicographical order. In other words, $\childc(v)$ is a decreasing word -- see Definition~\ref{def:decreasingwords} -- on the alphabet $\alphabet[]{}=[\![ 0, \#D-1]\!]$. Table~\ref{tab:topord:child} illustrates the behavior of (\ref{eq:topord:child}) on the followed example of Figure~\ref{fig:topord:nonunique}.

\begin{table}[h]
  \centering
\begin{minipage}[c]{0.3\textwidth}
  \begin{tabular}{c|cc||c}
  \node &  \node[red] & \node[magenta] & (\ref{eq:topord:child})\\
  \hline
  ${\child}_{\topord_1}(\node)$ & $11$ & $10$ &   \checkmark\\
  ${\child}_{\topord_2}(\node)$  &$00$  & $10$   & \xmark \\
  ${\child}_{\topord_3}(\node)$& $00$  & $10$   & \checkmark \\
    ${\child}_{\topord_4}(\node)$& $11$  & $10$  & \xmark \\
  \end{tabular}
 \end{minipage}~
 \begin{minipage}[c]{0.6\textwidth}
   \caption{Application of (\ref{eq:topord:child}) to the remaining topological orderings of Figure~\ref{fig:topord:nonunique} that satisfy (\ref{eq:topord:h}). As ${\child}_{\topord}(\node[blue])={\child}_{\topord}(\node[cyan])$, we only need to consider vertices \node[red] and \node[magenta]. As ${\topord_i}(\node[red])>{\topord_i}(\node[magenta]) \iff i\in \lbrace 1,2\rbrace$, the only orderings that are kept are $\topord_1$ and $\topord_3$.
   }\label{tab:topord:child}
 \end{minipage}
\end{table}

\noindent
The combination of those two constraints imposes uniqueness in all cases except when there exists $(u,v) \in D^2$ such that ${\child}_{\topord}(u)={\child}_{\topord}(v)$ and $u\neq v$. It should be clear that if we impose the upcoming condition (\ref{eq:dag:forest}), such a pathological case can not occur.
\begin{equation}\label{eq:dag:forest}
\forall (u, v) \in D^2, \quad u\neq v \implies \child(u)\neq \child(v)
\end{equation}

\medskip
\noindent
Upcoming Theorem~\ref{prop:dag:cutf} establishes that a DAG compresses a forest if and only if the topological order constrained by (\ref{eq:topord:h}) and (\ref{eq:topord:child}) is unique. In other words, an unambiguous characterization of FDAGs is exhibited.

\begin{theorem}\label{prop:dag:cutf}
The following statements are equivalent:
\begin{enumerate}[label=(\roman*)]
\item $D$ fulfills (\ref{eq:dag:forest}),
\item there exists a unique topological ordering $\topord$ of $D$ that satisfies both constraints (\ref{eq:topord:h}) and (\ref{eq:topord:child}),
\item there exists a unique forest $F\in\forest$ -- cf. (\ref{eq:forest}) -- such that $D=\red(F)$,
\end{enumerate}
where $\red$ is the DAG reduction operation defined in Subsection~\ref{ss:def:treedag}.
\end{theorem}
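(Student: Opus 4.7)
The plan is to establish $(i)\Leftrightarrow(iii)$ and $(i)\Leftrightarrow(ii)$ independently, with $(i)$ serving as the central condition; this split is natural since $(i)$ is purely structural, $(ii)$ concerns orderings, and $(iii)$ concerns combinatorial meaning.

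For $(iii)\Rightarrow(i)$, I would use the definition of $\red$ from Subsection~\ref{ss:def:treedag}: distinct vertices of $D=\red(F)$ correspond to non-isomorphic subtree classes, so $\child(u)=\child(v)$ with $u\ne v$ would force, by an easy induction on height, isomorphism of the represented subtrees and hence a contradiction. Conversely, for $(i)\Rightarrow(iii)$, I would construct $F$ by unfolding $D$ at each of its sources into a tree. Condition $(i)$ is precisely what is needed to verify, by induction on height, that two distinct vertices of $D$ never unfold to isomorphic subtrees, so that applying $\red$ to the unfolding recovers $D$ and the uniqueness of $F$ follows from the bijectivity of $\red$. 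The irredundance $F\in\forest$ then follows from the source property of the chosen roots: if $T_s$ appeared as a subtree of $T_{s'}$ for $s\ne s'$, the corresponding DAG vertex would simultaneously equal $s$ and lie in $\des(s')$, contradicting $s$ being a source.

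For $(i)\Rightarrow(ii)$, I would construct $\topord$ by induction on height. The base case $h=0$ is trivial since $(i)$ permits at most one vertex with empty children. At height $h$, the orderings on all lower levels are fixed by induction, so every vertex $v$ of height $h$ has a well-defined decreasing word $\childc(v)$ on the alphabet $[\![0,\#D-1]\!]$; condition $(i)$ makes $v\mapsto\child(v)$ injective, and since $\topord$ is bijective on lower levels, $v\mapsto\childc(v)$ is injective on level $h$. Sorting the level-$h$ vertices by $\lex{<}$ is therefore forced and produces the unique extension of $\topord$ compatible with both (\ref{eq:topord:h}) and (\ref{eq:topord:child}).

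For $(ii)\Rightarrow(i)$, I would argue by contraposition: given $u\ne v$ with $\child(u)=\child(v)$, one has $\height(u)=\height(v)$ and $\childc(u)=\childc(v)$ under any valid ordering, so (\ref{eq:topord:child}) imposes no strict order between them. Executing the greedy level-by-level procedure above twice, with opposite tie-breaking choices at the level of $u$ and $v$, yields two distinct orderings; each run remains valid because at every subsequent level the sort by $\childc$ with arbitrary tie-breaking satisfies (\ref{eq:topord:h}) and (\ref{eq:topord:child}) by construction, so $(ii)$ fails. The most delicate step will be the simultaneous induction in $(i)\Rightarrow(iii)$, which must handle well-definedness of the unfolding, non-isomorphism of distinct DAG vertices, and the source-based ruling out of nested trees all at once; the remaining implications reduce to bookkeeping once the role of $(i)$ as an injectivity condition on children-multisets is isolated.
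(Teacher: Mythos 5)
Your proposal is correct and follows essentially the same route as the paper: condition $(i)$ serves as the hub, $(i)\Leftrightarrow(ii)$ is the level-by-level ordering argument (the paper delegates it to the discussion preceding the theorem, where uniqueness fails exactly on ties $\childc(u)=\childc(v)$), $(iii)\Rightarrow(i)$ comes from the definition of $\red$, and $(i)\Rightarrow(iii)$ unfolds $D$ at its sources into trees, with injectivity of children-multisets giving distinct subtrees and the source property giving irredundance. Your version merely makes explicit the inductions the paper leaves implicit -- including correctly re-running the greedy construction rather than naively swapping $u$ and $v$ in $(ii)\Rightarrow(i)$, a swap that would break constraint (\ref{eq:topord:child}) at higher levels -- and folds the paper's unique-leaf observation into the base case of your induction.
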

\begin{proof}
$(i)\iff (ii)$ follows from the above discussion. $(iii)\implies (i)$ follows from the definition of $\red$. Indeed, if there was two distinct vertices $(u,v)\in D^2$ with the same multiset of children, they would have been compressed as a unique vertex in the reduction. We now prove that $(i) \implies (iii)$.

\medskip
\noindent
In the first place, if $D$ fulfills (\ref{eq:dag:forest}), then $D$ must admit a unique leaf, denoted  by $\leaves(D)$. Indeed, if there were two leaves $l_1$ and $l_2$, we would have $\height(l_1) = \height(l_2) = 0$ but also $\child(l_1) = \child(l_2) = \emptyset$, which would violate (\ref{eq:dag:forest}). Let $r_1, \dots, r_k$ be the vertices in $D$ that have no parent. We define $D_1,\dots, D_k$ as the subDAG rooted respectively in $r_1,\dots, r_k$. Then, we define $T_i = \red^{-1}(D_i)$ and $F=\lbrace T_1,\dots, T_k\rbrace$. The $T_i$'s are well defined as all vertices in $D$ (consequently in $D_i$) have a different multiset of children, and therefore compress distinct subtrees -- i.e. $F$ fulfills (\ref{eq:forest}), therefore $F\in\forest$. Moreover, $D=\red(F)$.
\end{proof}

\begin{wrapfigure}[11]{R}{0.5\textwidth}
\vspace{-\baselineskip}
\begin{minipage}[c]{0.12\textwidth}
\centering
\def\xscale{0.7}
\def\yscale{0.7}
\def\nodescale{0.7}
\begin{tikzpicture}[xscale=\xscale,yscale=\yscale]
\tikzstyle{noeud}=[draw,circle,fill=blue,scale=\nodescale*1]
\node[noeud] (0) at (0,0) {};
\tikzstyle{noeud}=[draw,circle,fill=magenta,scale=\nodescale*1]
\node[noeud] (1) at (-1,1) {};
\tikzstyle{noeud}=[draw,circle,fill=cyan,scale=\nodescale*1]
\node[noeud] (2) at (0,1) {};
\tikzstyle{noeud}=[draw,circle,fill=brown,scale=\nodescale*1]
\node[noeud] (3) at (1,1) {};
\tikzstyle{noeud}=[draw,circle,fill=orange,scale=\nodescale*1]
\node[noeud] (4) at (-1,2) {};
\tikzstyle{noeud}=[draw,circle,fill=yellow,scale=\nodescale*1]
\node[noeud] (5) at (0,2) {};
\tikzstyle{fleche}=[->-,>=latex,black]
\draw[fleche] (1)--(0) {};
\draw[fleche] (2)--(0) {}node [right,midway,scale=\nodescale] {$2$};
\draw[fleche] (3)--(0) {}node [right,midway,scale=\nodescale] {$3$};
\draw[fleche] (4)--(1) {};
\draw[fleche] (5)--(2) {};
\draw[fleche] (5)--(1) {}node [left,midway,scale=\nodescale] {$2$};

\draw[->,red,ultra thick] (0,3)--(5) {};
\draw[->,red,ultra thick] (-1,3)--(4) {};
\draw[->,red,ultra thick] (1,2)--(3) {};
\end{tikzpicture}
\end{minipage}~
\begin{minipage}[c]{0.28\textwidth}
\centering
$\begin{array}{c|c:ccc:cc}
v & \node[blue] & \node[magenta] &\node[cyan]&\node[brown]&\node[orange]&\node[yellow]\\
\topord(v)&0&1&2&3&4&5 \\
\childc(v) & &0&00& 000 & 1 &211
\end{array}$
\end{minipage}
\caption{A FDAG $D$ (left) and its canonical ordering $\topord$ (right). Vertices that are at the same height are enclosed in the table between the dashed lines. Red arrows indicate the roots of the trees of the forest that is compressed by $D$.}
\label{fig:ordering:canonical}
\end{wrapfigure}

\medskip
\noindent
In the sequel of the article, we shall only consider FDAGs. Consequently, from Proposition~\ref{prop:dag:cutf}, they admit a unique topological ordering $\topord$ satisfying both constraints (\ref{eq:topord:h}) and (\ref{eq:topord:child}), called \emph{canonical ordering}. Thus, for any FDAG $D$, the associated canonical ordering $\topord$ will be implicitly defined. The vertices will be numbered accordingly to their ordering, i.e $D=(v_0,\dots,v_n)$ with $\topord(v_i)=i$. Finally, as a consequence of constraints (\ref{eq:topord:h}) and (\ref{eq:topord:child}), note that $D$ can be partitioned in subsets of vertices with same height, each of them containing only consecutive numbered vertices. Figure~\ref{fig:ordering:canonical} provides an example of a FDAG and its canonical ordering.

\subsection{Expansion rules}\label{ss:rules}
Reverse search techniques implies finding reduction rules, and then inverse them. Equally, we will define instead three expansion rules, of which inverse will be reduction rules. An expansion rule takes a FDAG and create a new DAG, that is ``expanded'' in the sense of having either more vertices or more arcs. Our rules are analysed at the end of the subsection, where notably we prove in Proposition~\ref{prop:canonical} that expansion rules preserve the canonicalness. Moreover, we show in Proposition~\ref{prop:bij} that they are ``bijective'': any FDAG is in the image of a unique FDAG through the expansion rules. We begin with a preliminary definition.

\begin{definition}
Let $D$ be a FDAG, with $D=(v_0,\dots,v_n)$. We define the two following alphabets
\begin{align*}
\alphabet{=} &= \lbrace \topord(v) : v\in D, \height(v) = \height(v_n) \rbrace =  \lbrace p+1,\dots,n\rbrace, \\
\alphabet{<} &= \lbrace \topord(v)  : v\in D, \height(v) < \height(v_n) \rbrace = \lbrace 0,\dots,p\rbrace,
\end{align*}
where $p\in[\![0,n-1]\!]$ and $\topord(\cdot)$ is the canonical ordering of $D$.
\end{definition}
\noindent
In other words, $\alphabet{=}$ contains the indices of all vertices that have the same height as the vertex with the highest index according to $\topord$, and $\alphabet{<}$ the indices of all vertices that have an inferior height. The FDAG presented in Figure~\ref{fig:ordering:canonical} will serve as a guideline example all along this subsection. Here, we have $\alphabet{=}=\lbrace 4,5\rbrace$ and $\alphabet{<} = \lbrace 0,1,2,3\rbrace$.

\medskip
\noindent
The three expansion rules are now introduced. Let $D=(v_0,\dots,v_n)$. Each of these rules is associated with an explicit symbol, which may be used, when necessary, to designate the rule afterward. It is worth noting that all of these rules will operate according to the vertex of highest index, $v_n$.

\paragraph{Branching rule \branching}
This rule adds an arc between $v_n$ and a vertex below. The end vertex of the new arc is chosen such that $\childc(v_n)$ remains a decreasing word. In Figure~\ref{fig:rules:branching}, \branching is applied on our guideline example.

\begin{definition}
\branching Let $\childc(v_n) = a_0\cdots a_m$. Choose $a_{m+1}\in \alphabet{<}$ such that $a_{m} \lex{\geq} a_{m+1}$ and add an arc between $\topord^{-1}(a_{m+1})$ and $v_n$.
\end{definition}

\begin{figure}[h]
\centering
\begin{subfigure}[t]{0.45\textwidth}
\begin{minipage}[c]{0.37\textwidth}
\centering
\def\xscale{0.7}
\def\yscale{0.7}
\def\nodescale{0.7}

\begin{tikzpicture}[xscale=\xscale,yscale=\yscale]
\tikzstyle{noeud}=[draw,circle,fill=blue,scale=\nodescale*1]
\node[noeud] (0) at (0,0) {};
\tikzstyle{noeud}=[draw,circle,fill=magenta,scale=\nodescale*1]
\node[noeud] (1) at (-1,1) {};
\tikzstyle{noeud}=[draw,circle,fill=cyan,scale=\nodescale*1]
\node[noeud] (2) at (0,1) {};
\tikzstyle{noeud}=[draw,circle,fill=brown,scale=\nodescale*1]
\node[noeud] (3) at (1,1) {};
\tikzstyle{noeud}=[draw,circle,fill=orange,scale=\nodescale*1]
\node[noeud] (4) at (-1,2) {};
\tikzstyle{noeud}=[draw,circle,fill=yellow,scale=\nodescale*1]
\node[noeud] (5) at (0,2) {};
\tikzstyle{fleche}=[->-,>=latex,black]
\draw[fleche] (1)--(0) {};
\draw[fleche] (2)--(0) {}node [right,midway,scale=\nodescale] {$2$};
\draw[fleche] (3)--(0) {}node [right,midway,scale=\nodescale] {$3$};
\draw[fleche] (4)--(1) {};
\draw[fleche] (5)--(2) {};
\draw[fleche] (5)--(1) {}node [left,midway,red,scale=\nodescale] {$3$};
\end{tikzpicture}
\end{minipage}~
\begin{minipage}[c]{0.52\textwidth}
$\begin{array}{c|cc}
v & \dots&\node[yellow]\\
\topord(v)&\dots&5 \\
\childc(v) & \dots&211\textcolor{red}{1}
\end{array}$
\end{minipage}
\caption{\label{fig:rules:branching:a}}
\end{subfigure}~
\begin{subfigure}[t]{0.45\textwidth}
\begin{minipage}[c]{0.37\textwidth}
\centering
\def\xscale{0.7}
\def\yscale{0.7}
\def\nodescale{0.7}

\begin{tikzpicture}[xscale=\xscale,yscale=\yscale]
\tikzstyle{noeud}=[draw,circle,fill=blue,scale=\nodescale*1]
\node[noeud] (0) at (0,0) {};
\tikzstyle{noeud}=[draw,circle,fill=magenta,scale=\nodescale*1]
\node[noeud] (1) at (-1,1) {};
\tikzstyle{noeud}=[draw,circle,fill=cyan,scale=\nodescale*1]
\node[noeud] (2) at (0,1) {};
\tikzstyle{noeud}=[draw,circle,fill=brown,scale=\nodescale*1]
\node[noeud] (3) at (1,1) {};
\tikzstyle{noeud}=[draw,circle,fill=orange,scale=\nodescale*1]
\node[noeud] (4) at (-1,2) {};
\tikzstyle{noeud}=[draw,circle,fill=yellow,scale=\nodescale*1]
\node[noeud] (5) at (0,2) {};
\tikzstyle{fleche}=[->-,>=latex,black]
\draw[fleche] (1)--(0) {};
\draw[fleche] (2)--(0) {}node [right,midway,scale=\nodescale] {$2$};
\draw[fleche] (3) to [bend left =30]node [right,midway,scale=\nodescale] {$3$} (0) {};
\draw[fleche] (4)--(1) {};
\draw[fleche] (5)--(2) {};
\draw[fleche] (5)--(1) {}node [left,midway,scale=\nodescale] {$2$};
\draw[fleche,red] (5) to [bend left=60](0) {};
\end{tikzpicture}
\end{minipage}~
\begin{minipage}[c]{0.52\textwidth}
$\begin{array}{c|cc}
v & \dots&\node[yellow]\\
\topord(v)&\dots&5 \\
\childc(v) & \dots&211\textcolor{red}{0}
\end{array}$
\end{minipage}
\caption{\label{fig:rules:branching:b}}
\end{subfigure}
\caption{Branching rule applied to the FDAG of Figure~\ref{fig:ordering:canonical}. As $\childc(v_5) = 211$, the only letters $a$ we can pick from $\alphabet{<}=\lbrace 0,1,2,3\rbrace$, satisfying $a\lex{\leq} 1$, are $0$ and $1$. The only two possibles outcomes of \branching are the words (\subref{fig:rules:branching:a}) $2111$ and (\subref{fig:rules:branching:b}) $2110$.}
\label{fig:rules:branching}
\vspace{-\baselineskip}
\end{figure}

\paragraph{Elongation rule \elongation}
This rule adds a new vertex $v_{n+1}$ such that $\height(v_{n+1})=\height(v_n)+1$. Consequently, the alphabets change and become $\alphabet{=} = \lbrace n+1 \rbrace$ and $\alphabet{<} = \lbrace 0, \dots, n \rbrace$. Note that after using this rule, it is not possible to ever add a new vertex at height $\height(v_n)$. See Figure~\ref{fig:rules:elongation} for an illustration of this rule on the guideline example.

\begin{definition}
\elongation Add new vertex $v_{n+1}$ such that $\childc(v_{n+1})=a_0 \in \alphabet{=}$.
\end{definition}

\begin{figure}[h]
\centering
\begin{subfigure}[t]{0.45\textwidth}
\begin{minipage}[c]{0.37\textwidth}
\centering
\def\xscale{0.7}
\def\yscale{0.7}
\def\nodescale{0.7}

\begin{tikzpicture}[xscale=\xscale,yscale=\yscale]
\tikzstyle{noeud}=[draw,circle,fill=blue,scale=\nodescale*1]
\node[noeud] (0) at (0,0) {};
\tikzstyle{noeud}=[draw,circle,fill=magenta,scale=\nodescale*1]
\node[noeud] (1) at (-1,1) {};
\tikzstyle{noeud}=[draw,circle,fill=cyan,scale=\nodescale*1]
\node[noeud] (2) at (0,1) {};
\tikzstyle{noeud}=[draw,circle,fill=brown,scale=\nodescale*1]
\node[noeud] (3) at (1,1) {};
\tikzstyle{noeud}=[draw,circle,fill=orange,scale=\nodescale*1]
\node[noeud] (4) at (-1,2) {};
\tikzstyle{noeud}=[draw,circle,fill=yellow,scale=\nodescale*1]
\node[noeud] (5) at (0,2) {};
\tikzstyle{noeud}=[draw,circle,red,fill=red,scale=\nodescale*1]
\node[noeud] (6) at (0,3) {};
\tikzstyle{fleche}=[->-,>=latex,black]
\draw[fleche] (1)--(0) {};
\draw[fleche] (2)--(0) {}node [right,midway,scale=\nodescale] {$2$};
\draw[fleche] (3)--(0) {}node [right,midway,scale=\nodescale] {$3$};
\draw[fleche] (4)--(1) {};
\draw[fleche] (5)--(2) {};
\draw[fleche] (5)--(1) {}node [left,midway,scale=\nodescale] {$2$};
\draw[fleche,red] (6)--(4) {};
\end{tikzpicture}
\end{minipage}~
\begin{minipage}[c]{0.52\textwidth}
$\begin{array}{c|ccc}
v & \dots&\node[yellow] & \node[red]\\
\topord(v)&\dots&5 & \textcolor{red}{6}\\
\childc(v) & \dots&211& \textcolor{red}{4}
\end{array}$
\end{minipage}
\caption{\label{fig:rules:elongation:a}}
\end{subfigure}~
\begin{subfigure}[t]{0.45\textwidth}
\begin{minipage}[c]{0.37\textwidth}
\centering
\def\xscale{0.7}
\def\yscale{0.7}
\def\nodescale{0.7}

\begin{tikzpicture}[xscale=\xscale,yscale=\yscale]
\tikzstyle{noeud}=[draw,circle,fill=blue,scale=\nodescale*1]
\node[noeud] (0) at (0,0) {};
\tikzstyle{noeud}=[draw,circle,fill=magenta,scale=\nodescale*1]
\node[noeud] (1) at (-1,1) {};
\tikzstyle{noeud}=[draw,circle,fill=cyan,scale=\nodescale*1]
\node[noeud] (2) at (0,1) {};
\tikzstyle{noeud}=[draw,circle,fill=brown,scale=\nodescale*1]
\node[noeud] (3) at (1,1) {};
\tikzstyle{noeud}=[draw,circle,fill=orange,scale=\nodescale*1]
\node[noeud] (4) at (-1,2) {};
\tikzstyle{noeud}=[draw,circle,fill=yellow,scale=\nodescale*1]
\node[noeud] (5) at (0,2) {};
\tikzstyle{noeud}=[draw,circle,red,fill=red,scale=\nodescale*1]
\node[noeud] (6) at (0,3) {};
\tikzstyle{fleche}=[->-,>=latex,black]
\draw[fleche] (1)--(0) {};
\draw[fleche] (2)--(0) {}node [right,midway,scale=\nodescale] {$2$};
\draw[fleche] (3)--(0) node [right,midway,scale=\nodescale] {$3$}  {};
\draw[fleche] (4)--(1) {};
\draw[fleche] (5)--(2) {};
\draw[fleche] (5)--(1) {}node [left,midway,scale=\nodescale] {$2$};
\draw[fleche,red] (6)--(5) {};
\end{tikzpicture}
\end{minipage}~
\begin{minipage}[c]{0.52\textwidth}
$\begin{array}{c|ccc}
v & \dots&\node[yellow]& \node[red]\\
\topord(v)&\dots&5 &\textcolor{red}{6} \\
\childc(v) & \dots&211 & \textcolor{red}{5}
\end{array}$
\end{minipage}
\caption{\label{fig:rules:elongation:b}}
\end{subfigure}
\caption{Elongation rule applied to the FDAG of Figure~\ref{fig:ordering:canonical}. As $\alphabet{=}=\lbrace 4,5\rbrace$, there are only two choices leading to (\subref{fig:rules:branching:a}) $\childc(v_6) = 4$ and (\subref{fig:rules:branching:b}) $\childc(v_6) = 5$. The alphabets become $\alphabet{<}=\lbrace 0,\dots,5 \rbrace$ and $\alphabet{=}=\lbrace 6 \rbrace$.}
\label{fig:rules:elongation}
\vspace{-\baselineskip}
\end{figure}

\paragraph{Widening rule \widening}
This rule adds a new vertex $v_{n+1}$ at height $\height(v_n)$. The vertex is added with children that respects the canonicalness of the DAG, that is, such that $\childc(v_{n+1})\lex{>} \childc(v_n)$ -- as in condition (\ref{eq:topord:child}). In other terms, denoting $\lang[<]{}$ the language of decreasing words on alphabet $\alphabet[]{<}$, and with $\overline{w}=\childc(v_n)$, $\childc(v_{n+1})$ must be chosen in $\lang[<]{\overline{w}}$ -- see Definition~\ref{def:boundedwords}. However, this set is infinite, so we restrict $\childc(v_{n+1})$ to be chosen among the minimal words of $\lang[<]{\overline{w}}$. It follows from the definition of suffix-cut operator $\scut(\cdot)$ that, by inverting the said operator, the other words in $\lang[<]{\overline{w}}$ can be obtained by performing repeated \branching operations. Finally, this new vertex is added to $\alphabet{=}$.

\begin{definition}
\widening Add new vertex $v_{n+1}$ such that $$\childc(v_{n+1})\in \left\lbrace w \in \lang[<]{\overline{w}} : w \text{ is a minimal word of } \lang[<]{\overline{w}}\right\rbrace$$
with $\overline{w}=\childc(v_n)$.
\end{definition}

\noindent
From Proposition~\ref{prop:widening} we now that such minimal words exist. We prove in the upcoming lemma that, as claimed, $\height(v_{n+1}) = \height(v_n)$.

\begin{lemma}
Any element of $\lang[<]{\overline{w}}$ defines a new vertex $v_{n+1}$ such that $\height(v_{n+1})=\height(v_n)$.
\end{lemma}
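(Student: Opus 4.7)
The plan is to read off the height of $v_{n+1}$ directly from the first letter of its word of children, and then compare with the first letter of $\overline{w}$.

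First I would recall that for any vertex $u$ with at least one child, the height is determined by the tallest child, that is $\height(u) = 1 + \max_{c\in \child(u)}\height(c)$, and I would note that constraint~(\ref{eq:topord:h}) implies that among $\child(u)$ the vertex with the largest value of $\topord$ realizes this maximum. Applying this to $v_n$ and to the decreasing word $\overline{w}=a_0\cdots a_m = \childc(v_n)$, whose first letter $a_0$ is by construction its maximum, yields
\[
\height\bigl(\topord^{-1}(a_0)\bigr) = \height(v_n)-1.
\]

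Next I would take an arbitrary $w = b_0\cdots b_q \in \lang[<]{\overline{w}}$ and argue about the new vertex $v_{n+1}$ whose children are $\{\topord^{-1}(b_j)\}_{j}$. Since every letter of $w$ lies in $\alphabet{<}$, each such child has height strictly less than $\height(v_n)$, hence at most $\height(v_n)-1$. Consequently $\height(v_{n+1}) \le \height(v_n)$. For the reverse inequality, I would use the fact that $w$ is a decreasing word, so its first letter $b_0$ is its maximum; combined with~(\ref{eq:topord:h}), $b_0$ realizes the maximal height among the children of $v_{n+1}$, giving $\height(v_{n+1}) = 1 + \height(\topord^{-1}(b_0))$.

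The only remaining point, and the one requiring a small argument, is the comparison $b_0 \geq a_0$: since $w\lex{>}\overline{w}$ and both words are decreasing (so their first letters are their respective maxima), if we had $b_0 < a_0$ then $w$ would be lexicographically smaller than $\overline{w}$ from the first letter onwards, a contradiction. Hence $b_0 \geq a_0$, and~(\ref{eq:topord:h}) gives $\height(\topord^{-1}(b_0)) \geq \height(\topord^{-1}(a_0)) = \height(v_n)-1$. Combined with the upper bound above, $\height(\topord^{-1}(b_0)) = \height(v_n)-1$, and therefore $\height(v_{n+1}) = \height(v_n)$.

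I don't foresee a real obstacle: the whole argument is just bookkeeping on the first letter of a decreasing word together with constraint~(\ref{eq:topord:h}). The only subtle spot is the inequality $b_0 \geq a_0$, which is a direct consequence of the definition of the lexicographical order on decreasing words; once that is stated cleanly the rest follows immediately.
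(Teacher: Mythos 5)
Your proof is correct and follows essentially the same route as the paper's: both reduce the claim to showing that the child $\topord^{-1}(b_0)$ given by the first letter of the decreasing word has height exactly $\height(v_n)-1$, using the lexicographic comparison $w\lex{>}\overline{w}$ to get $b_0\geq a_0$, the monotonicity from constraint (\ref{eq:topord:h}) for the lower bound, and membership of $b_0$ in $\alphabet{<}$ for the upper bound. The only cosmetic difference is that you handle $b_0\geq a_0$ uniformly where the paper splits into the cases $b_0=a_0$ and $b_0\lex{>}a_0$; the content is identical.
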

\begin{proof}
From the definition of $\height(\cdot)$ -- (\ref{eq:height}), it suffices to prove that $v_{n+1}$ admits at least one child at height $h=\height(v_n)-1$. Let us denote $b_0$ and $a_0$ the first letter of, respectively, $\childc(v_{n+1})$ and $\childc(v_n)$. Denoting $v=\topord^{-1}(b_0)$ and $u=\topord^{-1}(a_0)$, we already know that $\height(u)=h$ -- as $\topord$ respects (\ref{eq:topord:child}) and $\childc(v_n)$ is a decreasing word. Therefore, as by construction $\childc(v_{n+1})\lex{>}\childc(v_n)$, either (i) $b_0=a_0$ and therefore $v=u$, either (ii) $b_0\lex{>}a_0$. In the latter, as $\topord$ respects (\ref{eq:topord:h}) and (\ref{eq:topord:child}), $\height(v)\geq \height(u)=h$. But, as $b_0\in \alphabet{<}$, $\height(v)<\height(v_n)=h+1$. In both cases, $\height(v)=h$.
\end{proof}

\medskip
\noindent
Figure~\ref{fig:rules:widening} illustrates the use of the widening rule on the followed example. It should be noted that the possible outcomes of \widening are obtained by using Algorithm~\ref{algo:widening}, applied to $w=\childc(v_n)$ and $p$ -- with $\alphabet{<}=\lbrace 0,\dots,p\rbrace$.

\begin{figure}[h]
\centering
\begin{subfigure}[t]{0.45\textwidth}
\begin{minipage}[c]{0.37\textwidth}
\centering
\def\xscale{0.7}
\def\yscale{0.7}
\def\nodescale{0.7}

\begin{tikzpicture}[xscale=\xscale,yscale=\yscale]
\tikzstyle{noeud}=[draw,circle,fill=blue,scale=\nodescale*1]
\node[noeud] (0) at (0,0) {};
\tikzstyle{noeud}=[draw,circle,fill=magenta,scale=\nodescale*1]
\node[noeud] (1) at (-1,1) {};
\tikzstyle{noeud}=[draw,circle,fill=cyan,scale=\nodescale*1]
\node[noeud] (2) at (0,1) {};
\tikzstyle{noeud}=[draw,circle,fill=brown,scale=\nodescale*1]
\node[noeud] (3) at (1,1) {};
\tikzstyle{noeud}=[draw,circle,fill=orange,scale=\nodescale*1]
\node[noeud] (4) at (-1,2) {};
\tikzstyle{noeud}=[draw,circle,fill=yellow,scale=\nodescale*1]
\node[noeud] (5) at (0,2) {};
\tikzstyle{noeud}=[draw,circle,red,fill=red,scale=\nodescale*1]
\node[noeud] (6) at (1,2) {};
\tikzstyle{fleche}=[->-,>=latex,black]
\draw[fleche] (1)--(0) {};
\draw[fleche] (2)--(0) {}node [right,midway,scale=\nodescale] {$2$};
\draw[fleche] (3)--(0) {}node [right,midway,scale=\nodescale] {$3$};
\draw[fleche] (4)--(1) {};
\draw[fleche] (5)--(2) {};
\draw[fleche] (5)--(1) {}node [left,midway,scale=\nodescale] {$2$};
\draw[fleche,red] (6)--(3) {};
\end{tikzpicture}
\end{minipage}~
\begin{minipage}[c]{0.52\textwidth}
$\begin{array}{c|ccc}
v & \dots&\node[yellow] & \node[red]\\
\topord(v)&\dots&5 & \textcolor{red}{6}\\
\childc(v) & \dots&211& \textcolor{red}{3}
\end{array}$
\end{minipage}
\caption{\label{fig:rules:widening:a}}
\end{subfigure}~
\begin{subfigure}[t]{0.45\textwidth}
\begin{minipage}[c]{0.37\textwidth}
\centering
\def\xscale{0.7}
\def\yscale{0.7}
\def\nodescale{0.7}

\begin{tikzpicture}[xscale=\xscale,yscale=\yscale]
\tikzstyle{noeud}=[draw,circle,fill=blue,scale=\nodescale*1]
\node[noeud] (0) at (0,0) {};
\tikzstyle{noeud}=[draw,circle,fill=magenta,scale=\nodescale*1]
\node[noeud] (1) at (-1,1) {};
\tikzstyle{noeud}=[draw,circle,fill=cyan,scale=\nodescale*1]
\node[noeud] (2) at (0,1) {};
\tikzstyle{noeud}=[draw,circle,fill=brown,scale=\nodescale*1]
\node[noeud] (3) at (1,1) {};
\tikzstyle{noeud}=[draw,circle,fill=orange,scale=\nodescale*1]
\node[noeud] (4) at (-1,2) {};
\tikzstyle{noeud}=[draw,circle,fill=yellow,scale=\nodescale*1]
\node[noeud] (5) at (0,2) {};
\tikzstyle{noeud}=[draw,circle,red,fill=red,scale=\nodescale*1]
\node[noeud] (6) at (1,2) {};
\tikzstyle{fleche}=[->-,>=latex,black]
\draw[fleche] (1)--(0) {};
\draw[fleche] (2)--(0) {}node [right,midway,scale=\nodescale] {$2$};
\draw[fleche] (3)--(0) node [right,midway,scale=\nodescale] {$3$}  {};
\draw[fleche] (4)--(1) {};
\draw[fleche] (5)--(2) {};
\draw[fleche] (5)--(1) {}node [left,midway,scale=\nodescale] {$2$};
\draw[fleche,red] (6)--(2) {}node[left,midway,scale=\nodescale] {$2$};
\end{tikzpicture}
\end{minipage}~
\begin{minipage}[c]{0.52\textwidth}
$\begin{array}{c|ccc}
v & \dots&\node[yellow]& \node[red]\\
\topord(v)&\dots&5 &\textcolor{red}{6} \\
\childc(v) & \dots&211 & \textcolor{red}{22}
\end{array}$
\end{minipage}
\caption{\label{fig:rules:widening:b}}
\end{subfigure}
\begin{subfigure}[t]{0.45\textwidth}
\begin{minipage}[c]{0.37\textwidth}
\centering
\def\xscale{0.7}
\def\yscale{0.7}
\def\nodescale{0.7}

\begin{tikzpicture}[xscale=\xscale,yscale=\yscale]
\tikzstyle{noeud}=[draw,circle,fill=blue,scale=\nodescale*1]
\node[noeud] (0) at (0,0) {};
\tikzstyle{noeud}=[draw,circle,fill=magenta,scale=\nodescale*1]
\node[noeud] (1) at (-1,1) {};
\tikzstyle{noeud}=[draw,circle,fill=cyan,scale=\nodescale*1]
\node[noeud] (2) at (0,1) {};
\tikzstyle{noeud}=[draw,circle,fill=brown,scale=\nodescale*1]
\node[noeud] (3) at (1,1) {};
\tikzstyle{noeud}=[draw,circle,fill=orange,scale=\nodescale*1]
\node[noeud] (4) at (-1,2) {};
\tikzstyle{noeud}=[draw,circle,fill=yellow,scale=\nodescale*1]
\node[noeud] (5) at (0,2) {};
\tikzstyle{noeud}=[draw,circle,red,fill=red,scale=\nodescale*1]
\node[noeud] (6) at (1,2) {};
\tikzstyle{fleche}=[->-,>=latex,black]
\draw[fleche] (1)--(0) {};
\draw[fleche] (2)--(0) {}node [right,midway,scale=\nodescale] {$2$};
\draw[fleche] (3)--(0) {}node [right,midway,scale=\nodescale] {$3$};
\draw[fleche] (4)--(1) {};
\draw[fleche] (5)--(2) {};
\draw[fleche] (5)--(1) {}node [left,midway,scale=\nodescale] {$2$};
\draw[fleche,red] (6)--(2) {};
\draw[fleche,red] (6)--(1) {}node [left,midway,scale=\nodescale] {$3$};
\end{tikzpicture}
\end{minipage}~
\begin{minipage}[c]{0.52\textwidth}
$\begin{array}{c|ccc}
v & \dots&\node[yellow] & \node[red]\\
\topord(v)&\dots&5 & \textcolor{red}{6}\\
\childc(v) & \dots&211& \textcolor{red}{2111}
\end{array}$
\end{minipage}
\caption{\label{fig:rules:widening:c}}
\end{subfigure}~
\begin{subfigure}[t]{0.45\textwidth}
\begin{minipage}[c]{0.37\textwidth}
\centering
\def\xscale{0.7}
\def\yscale{0.7}
\def\nodescale{0.7}

\begin{tikzpicture}[xscale=\xscale,yscale=\yscale]
\tikzstyle{noeud}=[draw,circle,fill=blue,scale=\nodescale*1]
\node[noeud] (0) at (0,0) {};
\tikzstyle{noeud}=[draw,circle,fill=magenta,scale=\nodescale*1]
\node[noeud] (1) at (-1,1) {};
\tikzstyle{noeud}=[draw,circle,fill=cyan,scale=\nodescale*1]
\node[noeud] (2) at (0,1) {};
\tikzstyle{noeud}=[draw,circle,fill=brown,scale=\nodescale*1]
\node[noeud] (3) at (1,1) {};
\tikzstyle{noeud}=[draw,circle,fill=orange,scale=\nodescale*1]
\node[noeud] (4) at (-1,2) {};
\tikzstyle{noeud}=[draw,circle,fill=yellow,scale=\nodescale*1]
\node[noeud] (5) at (0,2) {};
\tikzstyle{noeud}=[draw,circle,red,fill=red,scale=\nodescale*1]
\node[noeud] (6) at (1,2) {};
\tikzstyle{fleche}=[->-,>=latex,black]
\draw[fleche] (1)--(0) {};
\draw[fleche] (2)--(0) {}node [right,midway,scale=\nodescale] {$2$};
\draw[fleche] (3)--(0) node [right,midway,scale=\nodescale] {$3$}  {};
\draw[fleche] (4)--(1) {};
\draw[fleche] (5)--(2) {};
\draw[fleche] (5)--(1) {}node [left,midway,scale=\nodescale] {$2$};
\draw[fleche,red] (6)--(2) {};
\draw[fleche,red] (6)--(1) {}node [left,midway,scale=\nodescale] {$2$};
\draw[fleche,red] (6)--(0) {};
\end{tikzpicture}
\end{minipage}~
\begin{minipage}[c]{0.52\textwidth}
$\begin{array}{c|ccc}
v & \dots&\node[yellow]& \node[red]\\
\topord(v)&\dots&5 &\textcolor{red}{6} \\
\childc(v) & \dots&211 & \textcolor{red}{2110}
\end{array}$
\end{minipage}
\caption{\label{fig:rules:widening:d}}
\end{subfigure}
\caption{We apply \widening to the FDAG of Figure~\ref{fig:ordering:canonical}. Here, $\alphabet{<}=\lbrace 0,1,2,3\rbrace$ and $\overline{w}=211$. As seen in Subsection~\ref{ss:words}, the minimal words of $\lang[<]{\overline{w}}$ are $3$, $22$, $2111$ and $2110$. Therefore, there are $4$ ways to add a new vertex $v_{6}$ via the widening rule, that are such that (\subref{fig:rules:widening:a}) $\childc(v_6)=3$, (\subref{fig:rules:widening:b}) $\childc(v_6)=22$, (\subref{fig:rules:widening:c}) $\childc(v_6)=2111$ or (\subref{fig:rules:widening:d}) $\childc(v_6)=2110$. Finally, we update $\alphabet{=}$ to be equal to $\lbrace 4,5, 6\rbrace$.}
\label{fig:rules:widening}
\vspace{-\baselineskip}
\end{figure}

\subsection{Analysis of the rules}\label{ss:analysis_rule}

Since our goal is to enumerate FDAGs, it is required that the expansion rules indeed construct FDAGs. This is achieved by virtue of the following proposition.

\begin{proposition}\label{prop:canonical}
The expansion rules preserve the canonicalness property.
\end{proposition}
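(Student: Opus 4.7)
The plan is to invoke Theorem~\ref{prop:dag:cutf} and reduce the preservation of canonicalness to checking two things for the resulting graph $D'$: (a) the natural extension of $\topord$ (keeping the indices of old vertices and, for \elongation{} and \widening, assigning index $n+1$ to the newly created vertex $v_{n+1}$) still satisfies both constraints (\ref{eq:topord:h}) and (\ref{eq:topord:child}); and (b) $D'$ still satisfies the uniqueness-of-children condition (\ref{eq:dag:forest}). Proving (a) guarantees that the extension is \emph{a} canonical topological ordering, while (b) ensures that it is the \emph{only} one. I would treat the three rules one after the other.

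For the branching rule, no vertex is added and only $\childc(v_n)$ is modified, being appended one letter $a_{m+1}\in\alphabet{<}$ with $a_m \lex{\geq} a_{m+1}$: this keeps $\childc(v_n)$ a decreasing word and makes it lex-larger, so (\ref{eq:topord:child}) still holds for $v_n$ versus all same-height vertices of smaller index. Since the new arc points to a vertex of strictly smaller height, (\ref{eq:topord:h}) is untouched. For (\ref{eq:dag:forest}), I would argue that if some other vertex $v'$ had the new $\childc(v_n)$ as its multiset of children, then by matching first letters (whose height is fixed by $\topord$) one forces $\height(v')=\height(v_n)$, and then the strict lex-inequality with every same-height vertex of smaller index rules out $v'\neq v_n$.

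For the elongation rule, $v_{n+1}$ is declared to be the unique vertex of height $\height(v_n)+1$; extending $\topord$ by $\topord(v_{n+1})=n+1$ trivially satisfies (\ref{eq:topord:h}) (largest index, largest height) and (\ref{eq:topord:child}) (no other vertex at that height). For (\ref{eq:dag:forest}), the only potential conflict is with an existing vertex $v'$ whose sole child would be $\topord^{-1}(a_0)\in\alphabet{=}$; but such a $v'$ would necessarily have height $\height(v_n)+1$, contradicting the maximality of $\height(v_n)$ in $D$. The widening rule is the most involved one and is the main obstacle: by the preceding lemma $\height(v_{n+1})=\height(v_n)$, so (\ref{eq:topord:h}) is immediate, while (\ref{eq:topord:child}) for $v_{n+1}$ versus $v_n$ is exactly the defining property $\childc(v_{n+1})\lex{>}\overline{w}=\childc(v_n)$ (and for any other same-height vertex $v$ it follows by transitivity since $\childc(v)\lex{<}\childc(v_n)$). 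The delicate point is (\ref{eq:dag:forest}): against same-height vertices I would use the strict lex-ordering just established, while against strictly lower vertices I would argue that the heights of the children (read off from the $\topord$-values appearing in $\childc$) could not coincide, since a strictly lower $v'$ would have all its children at height $<\height(v_n)-1$, whereas the first letter of $\childc(v_{n+1})$ points to a child of height $\height(v_n)-1$ by the lemma.

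Once the three rules are handled, Theorem~\ref{prop:dag:cutf} yields at once that $D'$ is a FDAG and that the extended $\topord$ is its canonical ordering, which is exactly the statement. The main obstacle throughout is the verification of (\ref{eq:dag:forest}); this is where most care is needed, and in every case the argument reduces to a dichotomy between same-height vertices (handled by lex-comparison of $\childc$) and lower-height vertices (handled by inspecting the heights of the children encoded in $\childc$).
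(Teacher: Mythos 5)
Your proof is correct, and it follows the same overall route as the paper -- a rule-by-rule verification that the constraints defining the canonical ordering survive each expansion -- but it is substantially more complete at exactly the point you flag as the main obstacle. The paper's proof is three lines: for \branching{} it notes $wa \lex{>} w \lex{>} \childc(v_{n-1})$, for \elongation{} that (\ref{eq:topord:h}) still holds, and for \widening{} that (\ref{eq:topord:child}) still holds, then concludes the result is a FDAG. What the paper leaves entirely implicit is the verification of condition (\ref{eq:dag:forest}) -- which, by Theorem~\ref{prop:dag:cutf}, is what ``being a FDAG'' actually means -- i.e., that the new or modified vertex cannot share its multiset of children with an existing vertex. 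Within a height class this does follow from the strict lex-inequalities the paper records, but across height classes it needs your additional observation that the first letter of $\childc$ pins down the height (the largest-index child has maximal height among the children by (\ref{eq:topord:h}), so equal children-multisets force equal heights), together with the maximality-of-$\height(v_n)$ argument in the \elongation{} case. Your explicit appeal to the equivalence in Theorem~\ref{prop:dag:cutf} -- checking both that the extended ordering satisfies (\ref{eq:topord:h}) and (\ref{eq:topord:child}) and that (\ref{eq:dag:forest}) holds, so that this ordering is the unique canonical one -- turns the paper's ``follows naturally from the definitions'' into an actual proof; the trade-off is length, since the paper's terse version is arguably acceptable once the reader has internalized the discussion preceding Theorem~\ref{prop:dag:cutf}, where the link between strict $\childc$-comparisons and (\ref{eq:dag:forest}) is established.
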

\begin{proof}
Let $D=(v_0,\dots, v_n)$ be a FDAG. The proposition follows naturally from the definitions:
\begin{description}[leftmargin=!,labelwidth=2.5em,align=right]
\item[\branching] Let $a$ be the letter added to ${w}=\childc(v_n)$. As ${w} a\lex{>} {w} \lex{>} \childc(v_{n-1})$, the ordering is unchanged.
\item[\elongation] The new vertex $v_{n+1}$ is such that $\height(v_{n+1})>\height(v_n)$, so condition (\ref{eq:topord:h}) is still met.
\item[\widening] The new vertex $v_{n+1}$ is chosen so that $\height(v_{n+1})=\height(v_n)$ and $\childc(v_{n+1})\lex{>}\childc(v_n)$, so condition (\ref{eq:topord:child}) is also still met.
\end{description}
Therefore, any DAG obtained from $D$ is still a FDAG.
\end{proof}

\medskip
\noindent
Secondly, since our goal is to provide the FDAGs space with an enumeration tree, which will be explored via the expansion rules, it is important that these expansion rules are ``bijective'' in the following sense: for any FDAG $D$, there exists a unique FDAG $D'$ such that $D$ is obtained from $D'$ via one of the three rules \branching, \elongation or \widening.

\medskip
\noindent
\begin{minipage}{0.35\textwidth}
Such $D'$ can be constructed via Algorithm~\ref{algo:reduction} as shown in upcoming Proposition~\ref{prop:bij}. Conditional expressions applied to $D$ are used to determine which modification should be applied to construct $D'$. The gray symbol (in the algorithm) next to these modifications indicates which expansion rule allows to retrieve $D$ from $D'$.

\begin{proposition}\label{prop:bij}
Algorithm~\ref{algo:reduction} applied to any FDAG constructs the unique antecedent of this FDAG.
\end{proposition}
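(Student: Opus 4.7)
The plan is to establish simultaneously (i) that the algorithm produces a FDAG $D'$ from which $D$ can be recovered via one of the three expansion rules, and (ii) that $D'$ is the unique such antecedent. Both parts will follow by identifying the structural ``signature'' that each rule leaves on the highest-indexed vertex $v_n$ of its output, and by checking that these signatures partition the set of FDAGs so that Algorithm~\ref{algo:reduction} need only test which signature applies.

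I would first catalogue the signatures by inspecting each rule applied to an arbitrary antecedent $D'$. Elongation appends a vertex at a strictly greater height, so in the output $v_n$ is the unique vertex at height $\height(v_n)$. Widening appends a vertex at the height of the former top vertex, and by construction its $\childc$-word is a minimal word of $\lang[<]{\childc(v_{n-1})}$. Branching does not add a vertex but appends one letter to $\childc(v_n)$; consequently $\scut(\childc(v_n))$ still lies in $\lang[<]{\childc(v_{n-1})}$, so $\childc(v_n)$ is \emph{not} a minimal word. These three cases are mutually exclusive and exhaustive once one observes that whenever $v_n$ shares its height with $v_{n-1}$, $\childc(v_n)$ belongs to $\lang[<]{\childc(v_{n-1})}$ and is either minimal or not.

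Next I would verify that each branch of the algorithm produces a valid FDAG. In the elongation and widening branches, $v_n$ is deleted; the remaining graph is clearly a DAG, and since no child multiset of a surviving vertex is altered, property (\ref{eq:dag:forest}) is inherited from $D$ and Theorem~\ref{prop:dag:cutf} yields a FDAG. Applying the corresponding rule to the result then recovers $D$ by construction. The delicate branch is branching, where the last arc of $v_n$ is removed: only $\child(v_n)$ changes, and the only potential collision with another $\child(v_k)$ is $k=n-1$. Such a collision would require $\scut(\childc(v_n))$ in $D$ to equal $\childc(v_{n-1})$ in $D$, which would make $\childc(v_n)$ a minimal word of $\lang[<]{\childc(v_{n-1})}$, contradicting the branching hypothesis. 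The height of $v_n$ is preserved because $\childc(v_n)$ is a decreasing word, so the removed last letter corresponds to the child of smallest height, not the one driving $\max_{u\in\child(v_n)}\height(u)$.

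Uniqueness then follows immediately from the mutual exclusivity of the signatures: any antecedent $D''$ of $D$ must have been produced by the rule dictated by $D$'s signature, and inverting that rule yields exactly the graph produced by the algorithm. The main obstacle, as noted above, is the FDAG-preservation check in the branching branch; the remaining arguments are essentially bookkeeping on the canonical ordering inherited from $D$.
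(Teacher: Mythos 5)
Your argument follows essentially the same skeleton as the paper's proof: catalogue the trace that each rule leaves on the top vertex $v_n$, match these traces against the conditional tests of Algorithm~\ref{algo:reduction}, and deduce uniqueness from their mutual exclusivity. But as written there is one genuine loose end in the exclusivity claim. Your signature for \elongation (``$v_n$ is the unique vertex at height $\height(v_n)$'') is \emph{also} satisfied by \branching outputs: applying \branching after an \elongation step leaves $v_n$ alone at its height, so ``unique at height'' alone does not dictate the rule. Your remark that the three cases are exclusive ``once one observes that whenever $v_n$ shares its height with $v_{n-1}$\dots'' only settles the \widening-versus-\branching split; it says nothing about \elongation versus \branching in the alone-at-height case. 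The paper discriminates there with the test $\#\childc(v_n)=1$ versus $\#\childc(v_n)>1$: an \elongation output always carries a single-letter child word, while any \branching output has word length at least $2$. In your language the repair is that a single-letter word is automatically minimal (its suffix-cut is $\epsilon$, and $\epsilon\lex{<}\childc(v_{n-1})$ by convention, while membership $w\lex{>}\childc(v_{n-1})$ follows from the height constraint on indices), so your minimal/non-minimal dichotomy does extend uniformly across both height configurations --- but this must be stated and connected to the algorithm's actual test, which your proposal never mentions.

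Your verification that the \branching branch returns a FDAG is mis-scoped but salvageable --- and it is, to your credit, a check the paper's own proof omits entirely. A collision $\scut(\childc(v_n))=\childc(v_k)$ could a priori occur for \emph{any} vertex $v_k$, not only $k=n-1$. Two observations close it fully: equal child multisets force equal heights (by the recursion (\ref{eq:height})), so cross-height collisions are impossible; and for same-height $k\leq n-1$ the canonical order gives $\childc(v_k)\lex{\leq}\childc(v_{n-1})$, so any collision would yield $\scut(\childc(v_n))\lex{\leq}\childc(v_{n-1})$, i.e.\ minimality of $\childc(v_n)$ in $\lang[<]{\childc(v_{n-1})}$ --- precisely the contradiction you invoke, now disposing of all $k$ at once. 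Your height-preservation argument for the suffix-cut is correct (the first letter, indexing a child at height $\height(v_n)-1$, survives). With these two repairs the proof is complete and, on the FDAG-preservation point, more thorough than the paper's.
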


\end{minipage}\hfill
\begin{minipage}{0.6\textwidth}
\begin{algorithm}[H]
\algorithmfootnote{$\scut(\cdot)$ is the suffix-cut operator defined in (\ref{eq:scut}).}
\caption{\textsc{Antecedent}}\label{algo:reduction}
\KwIn{$D=(v_0,\dots,v_n)$; $w=\childc(v_n)$; $w' = \childc(v_{n-1})$}
\eIf{$v_n$ \textup{\textbf{is the only vertex of height}} $\height(v_n)$}{
	\eIf{$\#w=1$}{\textcolor{gray}{\elongation} Delete vertex $v_n$}{\textcolor{gray}{\branching} $w \gets \scut(w)$}
}{
\eIf{$w$ \textup{\textbf{is a minimal word of}} $\lang[<]{w'}$}{\textcolor{gray}{\widening} Delete vertex $v_n$}{\textcolor{gray}{\branching} $w \gets \scut(w)$}
}
\end{algorithm}
\end{minipage}

\medskip
\begin{proof}
Let $D=(v_0,\dots,v_n)$ be a FDAG. Let $w=\childc(v_n)$ and $w'=\childc(v_{n-1})$. Two cases can occur: (i) either $v_n$ is the only vertex at height $\height(v_n)$, (ii) or it is not.
\begin{enumerate}[label=(\roman*)]
\item It is clear in this case that $D$ can not be obtained from any FDAG via the rule \widening -- otherwise $v_n$ would not be alone at its height. Concerning \branching and \elongation, let us look at the number of children of $v_n$.
\begin{enumerate}
\item  If $v_n$ admits only one child, it must come from an \elongation step, since \branching would imply that $\#w\geq 2$. Therefore, in this case, $D$ can be retrieved among the outcomes of rule \elongation applied to $D'=(v_0,\dots,v_{n-1})$.
\item Otherwise, when $\#w>1$, $D$ can not come from an \elongation step, and must therefore come from \branching. Denoting $v'_n$ the vertex with list of children $\scut(w)$ -- see (\ref{eq:scut}), $D$ is one of the outcomes of $D'=(v_0,\dots,v_{n-1},v'_n)$ via \branching.
\end{enumerate}
\item following the same logic as (i), $D$ can not be obtained via \elongation. We discrimine between rules \widening and \branching be comparing $w$ and $w'$. If $w$ is a minimal word of $\lang[<]{w'}$, then $D$ can not be obtained from \branching -- this would break the canonical order. Therefore, in this case, $D$ is an outcome of rule \widening applied to $D'=(v_0,\dots,v_{n-1})$. Otherwise, if $w$ is not a minimal word, then it can not be obtained from \widening, and must come from a \branching step, applied to $D'=(v_0,\dots,v_{n-1},v'_n)$ where $\childc(v'_n)=\scut(w)$.
\end{enumerate}
Whatever the case among those evoked, they correspond exactly to the conditional expressions of the Algorithm~\ref{algo:reduction}, which therefore constructs the correct antecedent of $D$, which is unique by virtue of the previous discussion.
\end{proof}

\subsection{Enumeration tree}\label{ss:enumtree}
In this subsection, we construct the enumeration tree of FDAGs derived from the expansion rules of Subsection~\ref{ss:rules}. As aimed, their inverse is indeed a reduction rule.

\begin{theorem}\label{th:reduction}
Algorithm~\ref{algo:reduction} is a reduction rule, as defined in Subsection~\ref{ss:def:reverse}.
\end{theorem}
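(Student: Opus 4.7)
The plan is to verify the two axioms of a reduction rule from Subsection~\ref{ss:def:reverse}, namely \emph{covering} ($f(D) \subset D$) and \emph{finiteness} ($f^k(D) = \emptyset$ for some $k$). By Propositions~\ref{prop:bij} and~\ref{prop:canonical}, Algorithm~\ref{algo:reduction} sends any non-empty FDAG $D$ to a unique FDAG $D'$ from which $D$ is recovered by exactly one of the three expansion rules \branching, \elongation, \widening; hence $f := \textsc{Antecedent}$ is already well defined as a map $\forest \setminus \emptyset \to \forest$, and it only remains to check the two quantitative properties.

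For covering, I would inspect the four branches of Algorithm~\ref{algo:reduction} in turn. In the two branches that delete the vertex $v_n$, namely the inverse of \elongation (when $v_n$ is alone at its height with $\#\childc(v_n)=1$) and the inverse of \widening (when $\childc(v_n)$ is a minimal word of $\lang[<]{\childc(v_{n-1})}$), the output $D'$ loses the vertex $v_n$ together with all its outgoing arcs. In the two \branching branches, $D'$ keeps the same vertex set but removes exactly the arc associated with the last letter of $\childc(v_n)$. In every case $D'$ is therefore a proper sub-DAG of $D$, so $D' \subset D$.

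For finiteness, I would introduce the size $\mu(D) = \#V(D) + \#A(D)$, with arcs counted with multiplicity. The case analysis above shows that $\mu(f(D)) \leq \mu(D) - 1$, so $(\mu(f^k(D)))_{k \geq 0}$ is a strictly decreasing sequence of non-negative integers. It must therefore reach $0$ in at most $\mu(D)$ steps, and the empty FDAG is the unique FDAG with $\mu = 0$, giving $f^k(D) = \emptyset$ as required.

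The only subtle point I foresee is the boundary case of the single-vertex FDAG (a lone leaf $v_0$ with $\childc(v_0) = \epsilon$), which is the image of the empty FDAG under an initial \elongation-type step. Strictly following Algorithm~\ref{algo:reduction}, one would take the \branching branch and apply $\scut(\epsilon) = \epsilon$, leaving the FDAG unchanged and breaking finiteness. I would therefore state explicitly -- as the intended reading of the algorithm -- that $f$ returns the empty FDAG on this input (equivalent to replacing the test ``$\#w=1$'' by ``$\#w \leq 1$''), which preserves both axioms and closes the argument.
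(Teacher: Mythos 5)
Your proof is correct, and its core is essentially the paper's own argument: covering is checked by the same four-branch case analysis (vertex-plus-leaving-arcs deletion for the inverses of \elongation and \widening, single-arc deletion for the two \branching branches), and finiteness via a strictly decreasing discrete size -- your explicit measure $\mu(D)=\#V(D)+\#A(D)$ merely sharpens the paper's one-line remark that the objects' size strictly decreases. The one genuine point of divergence is the bottom element. You correctly spotted that $D_0$, the single-vertex FDAG, is a fixed point of the algorithm as written (since $\childc(v_0)=\epsilon$ has length $0\neq 1$, the \branching branch applies and $\scut(\epsilon)=\epsilon$), and you resolve it by patching $f$ so that $f(D_0)=\emptyset$, conforming literally to the abstract definition with $\emptyset$ as the terminal object. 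The paper resolves it differently: it never applies $f$ to $D_0$, instead taking $D_0$ itself as the root of the enumeration tree and stating finiteness as ``for any $D\neq D_0$, there exists $k$ with $f^k(D)=D_0$'' -- i.e., $D_0$ plays the role that $\emptyset$ plays in Subsection~\ref{ss:def:reverse} (consistent with Subsection~\ref{ss:enumtree}, where the enumeration is rooted at $D_0$ rather than at the empty graph). Both conventions are sound; yours buys literal agreement with the abstract formalism at the cost of modifying the algorithm, while the paper's keeps the pseudocode untouched at the cost of re-basing the recursion (note that under the paper's convention your measure bottoms out at $\mu(D_0)=1$ rather than $0$, which does not affect the termination argument).
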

\begin{proof}
Let us denote $f(D)$ the output of Algorithm~\ref{algo:reduction} applied to a FDAG $D$. We need to prove that: (i) $f(D)$ is a subgraph of $D$ and (ii) for any $D\neq D_0$, there exists an integer $k$ such that $f^k(D)=D_0$, where $D_0$ is the FDAG with one vertex and no arcs.
\begin{enumerate}[label=(\roman*)]
\item Since Algorithm~\ref{algo:reduction} deletes either one vertex and its leaving arcs, or just one arc, $f(D)$ is indeed a subgraph of $D$.
\item The sequence of general term $f^k(D)$ is made of discrete objects whose size is strictly decreasing, therefore the sequence is finite and reaches $D_0$.
\end{enumerate}
\vspace{-\baselineskip}
\end{proof}

\medskip
\noindent
The associated expansion rule is exactly, in light of Proposition~\ref{prop:bij}, the union of the three expansion rules \elongation, \branching and \widening. Since $D_0$, the DAG with one vertex and no arcs, is a FDAG, by virtue of what precedes and with Algorithm~\ref{algo:reverse:def} -- here with $g(\cdot)=\top$, we just defined an enumeration tree covering the whole set of FDAGs, whose root is $D_0$. A fraction of this enumeration tree is shown in Figure~\ref{fig:enumtree}, illustrating the path from the root $D_0$ to the FDAG of Figure~\ref{fig:ordering:canonical}. Unexplored branches are ignored, but are still indicated by their respective root.

\begin{figure}
\centering

\def\xscale{0.53}
\def\yscale{0.7}
\def\nodescale{0.7}

\def\xdim{5*\xscale}
\def\ydim{5*\yscale}
\def\opacity{0.5}
\begin{tikzpicture}[remember picture]
    \node (1) at (0,0){\begin{tikzpicture}[xscale=\xscale,yscale=\yscale]
\tikzstyle{noeud}=[draw,circle,fill=blue,scale=\nodescale*1]
\node[noeud] (0a) at (0,0) {};
\end{tikzpicture}};
    \node (2) at (1*\xdim,0){\begin{tikzpicture}[xscale=\xscale,yscale=\yscale]
\tikzstyle{noeud}=[draw,circle,fill=blue,scale=\nodescale*1]
\node[noeud] (0a) at (0,0) {};
\tikzstyle{noeud}=[draw,circle,fill=magenta,scale=\nodescale*1]
\node[noeud] (1a) at (0,1) {};
\tikzstyle{fleche}=[->-,>=latex,black]
\draw[fleche] (1a)--(0a) {};\end{tikzpicture}};
    \node[opacity=\opacity] (3) at (\xdim,0.7*\ydim) {\begin{tikzpicture}[xscale=\xscale,yscale=\yscale]
\tikzstyle{noeud}=[draw,circle,fill=blue,scale=\nodescale*1]
\node[noeud] (0a) at (0,0) {};
\tikzstyle{noeud}=[draw,circle,fill=magenta,scale=\nodescale*1]
\node[noeud] (1a) at (0,1) {};
\tikzstyle{noeud}=[draw,circle,fill=cyan,scale=\nodescale*1]
\node[noeud] (2a) at (0,2) {};
\tikzstyle{fleche}=[->-,>=latex,black]
\draw[fleche] (1a)--(0a) {};
\draw[fleche] (2a)--(1a) {};
\end{tikzpicture}};
       \node[opacity=\opacity] (4) at (\xdim,-0.7*\ydim) {\begin{tikzpicture}[xscale=\xscale,yscale=\yscale]
\tikzstyle{noeud}=[draw,circle,fill=blue,scale=\nodescale*1]
\node[noeud] (0a) at (0,0) {};
\tikzstyle{noeud}=[draw,circle,fill=magenta,scale=\nodescale*1]
\node[noeud] (1a) at (0,1) {};
\tikzstyle{fleche}=[->-,>=latex,black]
\draw[fleche] (1a)--(0a) {}node [right,midway,scale=\nodescale] {$2$};
\end{tikzpicture}};
    \node (5) at (2*\xdim,0){\begin{tikzpicture}[xscale=\xscale,yscale=\yscale]
\tikzstyle{noeud}=[draw,circle,fill=blue,scale=\nodescale*1]
\node[noeud] (0a) at (0,0) {};
\tikzstyle{noeud}=[draw,circle,fill=magenta,scale=\nodescale*1]
\node[noeud] (1a) at (-0.5,1) {};
\tikzstyle{noeud}=[draw,circle,fill=cyan,scale=\nodescale*1]
\node[noeud] (2a) at (0.5,1) {};
\tikzstyle{fleche}=[->-,>=latex,black]
\draw[fleche] (1a)--(0a) {};
\draw[fleche] (2a)--(0a) {}node [right,midway,scale=\nodescale] {$2$};
\end{tikzpicture}};
       \node[red] (a) at (2*\xdim,0.4*\ydim){\elong};
       \node[opacity=\opacity] (6) at (1.7*\xdim,0.7*\ydim){\begin{tikzpicture}[xscale=\xscale,yscale=\yscale]
\tikzstyle{noeud}=[draw,circle,fill=blue,scale=\nodescale*1]
\node[noeud] (0a) at (0,0) {};
\tikzstyle{noeud}=[draw,circle,fill=magenta,scale=\nodescale*1]
\node[noeud] (1a) at (-0.5,1) {};
\tikzstyle{noeud}=[draw,circle,fill=cyan,scale=\nodescale*1]
\node[noeud] (2a) at (0.5,1) {};
\tikzstyle{noeud}=[draw,circle,fill=brown,scale=\nodescale*1]
\node[noeud] (3a) at (-0.5,2) {};
\tikzstyle{fleche}=[->-,>=latex,black]
\draw[fleche] (1a)--(0a) {};
\draw[fleche] (2a)--(0a) {}node [right,midway,scale=\nodescale] {$2$};
\draw[fleche] (3a)--(1a)  {};
\end{tikzpicture}};
        \node[opacity=\opacity] (7) at (2.3*\xdim,0.7*\ydim){\begin{tikzpicture}[xscale=\xscale,yscale=\yscale]
\tikzstyle{noeud}=[draw,circle,fill=blue,scale=\nodescale*1]
\node[noeud] (0a) at (0,0) {};
\tikzstyle{noeud}=[draw,circle,fill=magenta,scale=\nodescale*1]
\node[noeud] (1a) at (-0.5,1) {};
\tikzstyle{noeud}=[draw,circle,fill=cyan,scale=\nodescale*1]
\node[noeud] (2a) at (0.5,1) {};
\tikzstyle{noeud}=[draw,circle,fill=brown,scale=\nodescale*1]
\node[noeud] (3a) at (0.5,2) {};
\tikzstyle{fleche}=[->-,>=latex,black]
\draw[fleche] (1a)--(0a) {};
\draw[fleche] (2a)--(0a) {}node [right,midway,scale=\nodescale] {$2$};
\draw[fleche] (3a)--(2a) {};
\end{tikzpicture}};
   \node[opacity=\opacity] (8) at (2*\xdim,-0.7*\ydim){\begin{tikzpicture}[xscale=\xscale,yscale=\yscale]
\tikzstyle{noeud}=[draw,circle,fill=blue,scale=\nodescale*1]
\node[noeud] (0a) at (0,0) {};
\tikzstyle{noeud}=[draw,circle,fill=magenta,scale=\nodescale*1]
\node[noeud] (1a) at (-0.5,1) {};
\tikzstyle{noeud}=[draw,circle,fill=cyan,scale=\nodescale*1]
\node[noeud] (2a) at (0.5,1) {};
\tikzstyle{fleche}=[->-,>=latex,black]
\draw[fleche] (1a)--(0a) {};
\draw[fleche] (2a)--(0a) {}node [right,midway,scale=\nodescale] {$3$};
\end{tikzpicture}};
    \node (9) at (3*\xdim,0){\begin{tikzpicture}[xscale=\xscale,yscale=\yscale]
\tikzstyle{noeud}=[draw,circle,fill=blue,scale=\nodescale*1]
\node[noeud] (0a) at (0,0) {};
\tikzstyle{noeud}=[draw,circle,fill=magenta,scale=\nodescale*1]
\node[noeud] (1a) at (-1,1) {};
\tikzstyle{noeud}=[draw,circle,fill=cyan,scale=\nodescale*1]
\node[noeud] (2a) at (0,1) {};
\tikzstyle{noeud}=[draw,circle,fill=brown,scale=\nodescale*1]
\node[noeud] (3a) at (1,1) {};
\tikzstyle{fleche}=[->-,>=latex,black]
\draw[fleche] (1a)--(0a) {};
\draw[fleche] (2a)--(0a) {}node [right,midway,scale=\nodescale] {$2$};
\draw[fleche] (3a)--(0a) node [right,midway,scale=\nodescale] {$3$}  {};

\end{tikzpicture}};
    \node[opacity=\opacity] (10) at (3*\xdim,0.7*\ydim){\begin{tikzpicture}[xscale=\xscale,yscale=\yscale]
\tikzstyle{noeud}=[draw,circle,fill=blue,scale=\nodescale*1]
\node[noeud] (0a) at (0,0) {};
\tikzstyle{noeud}=[draw,circle,fill=magenta,scale=\nodescale*1]
\node[noeud] (1a) at (-1.5,1) {};
\tikzstyle{noeud}=[draw,circle,fill=cyan,scale=\nodescale*1]
\node[noeud] (2a) at (-0.5,1) {};
\tikzstyle{noeud}=[draw,circle,fill=brown,scale=\nodescale*1]
\node[noeud] (3a) at (0.5,1) {};
\tikzstyle{noeud}=[draw,circle,fill=orange,scale=\nodescale*1]
\node[noeud] (4a) at (1.5,1) {};
\tikzstyle{fleche}=[->-,>=latex,black]
\draw[fleche] (1a)--(0a) {};
\draw[fleche] (2a)--(0a) {}node [right,midway,scale=\nodescale] {$2$};
\draw[fleche] (3a)--(0a) node [right,midway,scale=\nodescale] {$3$}  {};
\draw[fleche] (4a)--(0a) node [right,midway,scale=\nodescale] {$4$}{};
\end{tikzpicture}};
    \node[opacity=\opacity] (11) at  (3*\xdim,-0.7*\ydim){\begin{tikzpicture}[xscale=\xscale,yscale=\yscale]
\tikzstyle{noeud}=[draw,circle,fill=blue,scale=\nodescale*1]
\node[noeud] (0a) at (0,0) {};
\tikzstyle{noeud}=[draw,circle,fill=magenta,scale=\nodescale*1]
\node[noeud] (1a) at (-1,1) {};
\tikzstyle{noeud}=[draw,circle,fill=cyan,scale=\nodescale*1]
\node[noeud] (2a) at (0,1) {};
\tikzstyle{noeud}=[draw,circle,fill=brown,scale=\nodescale*1]
\node[noeud] (3a) at (1,1) {};
\tikzstyle{fleche}=[->-,>=latex,black]
\draw[fleche] (1a)--(0a) {};
\draw[fleche] (2a)--(0a) {}node [right,midway,scale=\nodescale] {$2$};
\draw[fleche] (3a)--(0a) node [right,midway,scale=\nodescale] {$4$}  {};
\end{tikzpicture}};
\node[red] (b) at (3.5*\xdim, 0){\elong};
    \node[opacity=\opacity] (12) at (4*\xdim,0.7*\ydim){\begin{tikzpicture}[xscale=\xscale,yscale=\yscale]
\tikzstyle{noeud}=[draw,circle,fill=blue,scale=\nodescale*1]
\node[noeud] (0a) at (0,0) {};
\tikzstyle{noeud}=[draw,circle,fill=magenta,scale=\nodescale*1]
\node[noeud] (1a) at (-1,1) {};
\tikzstyle{noeud}=[draw,circle,fill=cyan,scale=\nodescale*1]
\node[noeud] (2a) at (0,1) {};
\tikzstyle{noeud}=[draw,circle,fill=brown,scale=\nodescale*1]
\node[noeud] (3a) at (1,1) {};
\tikzstyle{noeud}=[draw,circle,fill=orange,scale=\nodescale*1]
\node[noeud] (4a) at (1,2) {};
\tikzstyle{fleche}=[->-,>=latex,black]
\draw[fleche] (1a)--(0a) {};
\draw[fleche] (2a)--(0a) {}node [right,midway,scale=\nodescale] {$2$};
\draw[fleche] (3a)--(0a) node [right,midway,scale=\nodescale] {$3$}  {};
\draw[fleche] (4a)--(3a) {};
\end{tikzpicture}};
    \node[opacity=\opacity] (13) at (4*\xdim,0){\begin{tikzpicture}[xscale=\xscale,yscale=\yscale]
\tikzstyle{noeud}=[draw,circle,fill=blue,scale=\nodescale*1]
\node[noeud] (0a) at (0,0) {};
\tikzstyle{noeud}=[draw,circle,fill=magenta,scale=\nodescale*1]
\node[noeud] (1a) at (-1,1) {};
\tikzstyle{noeud}=[draw,circle,fill=cyan,scale=\nodescale*1]
\node[noeud] (2a) at (0,1) {};
\tikzstyle{noeud}=[draw,circle,fill=brown,scale=\nodescale*1]
\node[noeud] (3a) at (1,1) {};
\tikzstyle{noeud}=[draw,circle,fill=orange,scale=\nodescale*1]
\node[noeud] (4a) at (0,2) {};
\tikzstyle{fleche}=[->-,>=latex,black]
\draw[fleche] (1a)--(0a) {};
\draw[fleche] (2a)--(0a) {}node [right,midway,scale=\nodescale] {$2$};
\draw[fleche] (3a)--(0a) node [right,midway,scale=\nodescale] {$3$}  {};
\draw[fleche] (4a)--(2a) {};

\end{tikzpicture}};
    \node (14) at (4*\xdim,-0.7*\ydim){\begin{tikzpicture}[xscale=\xscale,yscale=\yscale]
\tikzstyle{noeud}=[draw,circle,fill=blue,scale=\nodescale*1]
\node[noeud] (0a) at (0,0) {};
\tikzstyle{noeud}=[draw,circle,fill=magenta,scale=\nodescale*1]
\node[noeud] (1a) at (-1,1) {};
\tikzstyle{noeud}=[draw,circle,fill=cyan,scale=\nodescale*1]
\node[noeud] (2a) at (0,1) {};
\tikzstyle{noeud}=[draw,circle,fill=brown,scale=\nodescale*1]
\node[noeud] (3a) at (1,1) {};
\tikzstyle{noeud}=[draw,circle,fill=orange,scale=\nodescale*1]
\node[noeud] (4a) at (-1,2) {};
\tikzstyle{fleche}=[->-,>=latex,black]
\draw[fleche] (1a)--(0a) {};
\draw[fleche] (2a)--(0a) {}node [right,midway,scale=\nodescale] {$2$};
\draw[fleche] (3a)--(0a) node [right,midway,scale=\nodescale] {$3$}  {};
\draw[fleche] (4a)--(1a) {};
\end{tikzpicture}};

    \node[opacity=\opacity] (15) at (4*\xdim,-1.5*\ydim){\begin{tikzpicture}[xscale=\xscale,yscale=\yscale]
\tikzstyle{noeud}=[draw,circle,fill=blue,scale=\nodescale*1]
\node[noeud] (0a) at (0,0) {};
\tikzstyle{noeud}=[draw,circle,fill=magenta,scale=\nodescale*1]
\node[noeud] (1a) at (-1,1) {};
\tikzstyle{noeud}=[draw,circle,fill=cyan,scale=\nodescale*1]
\node[noeud] (2a) at (0,1) {};
\tikzstyle{noeud}=[draw,circle,fill=brown,scale=\nodescale*1]
\node[noeud] (3a) at (1,1) {};
\tikzstyle{noeud}=[draw,circle,fill=orange,scale=\nodescale*1]
\node[noeud] (4a) at (-1,2) {};
\tikzstyle{noeud}=[draw,circle,fill=yellow,scale=\nodescale*1]
\node[noeud] (5a) at (-1,3) {};
\tikzstyle{fleche}=[->-,>=latex,black]
\draw[fleche] (1a)--(0a) {};
\draw[fleche] (2a)--(0a) {}node [right,midway,scale=\nodescale] {$2$};
\draw[fleche] (3a)--(0a) node [right,midway,scale=\nodescale] {$3$}  {};
\draw[fleche] (4a)--(1a) {};
\draw[fleche] (5a)--(4a) {};
\end{tikzpicture}};
    \node[red] (j) at (4.8*\xdim,-0.7*\ydim){\branch};
     \node[opacity=\opacity] (16) at (5.6*\xdim,0.2*\ydim){\begin{tikzpicture}[xscale=\xscale,yscale=\yscale]
\tikzstyle{noeud}=[draw,circle,fill=blue,scale=\nodescale*1]
\node[noeud] (0a) at (0,0) {};
\tikzstyle{noeud}=[draw,circle,fill=magenta,scale=\nodescale*1]
\node[noeud] (1a) at (-1,1) {};
\tikzstyle{noeud}=[draw,circle,fill=cyan,scale=\nodescale*1]
\node[noeud] (2a) at (0,1) {};
\tikzstyle{noeud}=[draw,circle,fill=brown,scale=\nodescale*1]
\node[noeud] (3a) at (1,1) {};
\tikzstyle{noeud}=[draw,circle,fill=orange,scale=\nodescale*1]
\node[noeud] (4a) at (-1,2) {};
\tikzstyle{fleche}=[->-,>=latex,black]
\draw[fleche] (1a)--(0a) {};
\draw[fleche] (2a)--(0a) {}node [right,midway,scale=\nodescale] {$2$};
\draw[fleche] (3a)--(0a) node [right,midway,scale=\nodescale] {$3$}  {};
\draw[fleche] (4a)--(1a) node [right,midway,scale=\nodescale]{$2$}{};
\end{tikzpicture}};
     \node[opacity=\opacity] (38) at (5.6*\xdim,-0.4*\ydim){\begin{tikzpicture}[xscale=\xscale,yscale=\yscale]
\tikzstyle{noeud}=[draw,circle,fill=blue,scale=\nodescale*1]
\node[noeud] (0a) at (0,0) {};
\tikzstyle{noeud}=[draw,circle,fill=magenta,scale=\nodescale*1]
\node[noeud] (1a) at (-1,1) {};
\tikzstyle{noeud}=[draw,circle,fill=cyan,scale=\nodescale*1]
\node[noeud] (2a) at (0,1) {};
\tikzstyle{noeud}=[draw,circle,fill=brown,scale=\nodescale*1]
\node[noeud] (3a) at (1,1) {};
\tikzstyle{noeud}=[draw,circle,fill=orange,scale=\nodescale*1]
\node[noeud] (4a) at (-1,2) {};
\tikzstyle{fleche}=[->-,>=latex,black]
\draw[fleche] (1a)--(0a) {};
\draw[fleche] (2a)--(0a) {}node [right,midway,scale=\nodescale] {$2$};
\draw[fleche] (3a)--(0a) node [right,midway,scale=\nodescale] {$3$}  {};
\draw[fleche] (4a)--(1a) {};
\draw[fleche] (4a)to [bend right=90] (0a) {};

\end{tikzpicture}};
    \node[red] (c) at (4.8*\xdim, -1.4*\ydim){\widen};
         \node[opacity=\opacity] (17) at (5.6*\xdim,-1.1*\ydim){\begin{tikzpicture}[xscale=\xscale,yscale=\yscale]
\tikzstyle{noeud}=[draw,circle,fill=blue,scale=\nodescale*1]
\node[noeud] (0a) at (0,0) {};
\tikzstyle{noeud}=[draw,circle,fill=magenta,scale=\nodescale*1]
\node[noeud] (1a) at (-1,1) {};
\tikzstyle{noeud}=[draw,circle,fill=cyan,scale=\nodescale*1]
\node[noeud] (2a) at (0,1) {};
\tikzstyle{noeud}=[draw,circle,fill=brown,scale=\nodescale*1]
\node[noeud] (3a) at (1,1) {};
\tikzstyle{noeud}=[draw,circle,fill=orange,scale=\nodescale*1]
\node[noeud] (4a) at (-1,2) {};
\tikzstyle{noeud}=[draw,circle,fill=yellow,scale=\nodescale*1]
\node[noeud] (5a) at (0,2) {};
\tikzstyle{fleche}=[->-,>=latex,black]
\draw[fleche] (1a)--(0a) {};
\draw[fleche] (2a)--(0a) {}node [right,midway,scale=\nodescale] {$2$};
\draw[fleche] (3a)--(0a) node [right,midway,scale=\nodescale] {$3$}  {};
\draw[fleche] (4a)--(1a) {};
\draw[fleche] (5a)--(1a) node[right,midway,scale=\nodescale] {$2$}{};
\end{tikzpicture}};
              \node[opacity=\opacity] (18) at (5.6*\xdim,-1.7*\ydim){\begin{tikzpicture}[xscale=\xscale,yscale=\yscale]
\tikzstyle{noeud}=[draw,circle,fill=blue,scale=\nodescale*1]
\node[noeud] (0a) at (0,0) {};
\tikzstyle{noeud}=[draw,circle,fill=magenta,scale=\nodescale*1]
\node[noeud] (1a) at (-1,1) {};
\tikzstyle{noeud}=[draw,circle,fill=cyan,scale=\nodescale*1]
\node[noeud] (2a) at (0,1) {};
\tikzstyle{noeud}=[draw,circle,fill=brown,scale=\nodescale*1]
\node[noeud] (3a) at (1,1) {};
\tikzstyle{noeud}=[draw,circle,fill=orange,scale=\nodescale*1]
\node[noeud] (4a) at (-1,2) {};
\tikzstyle{noeud}=[draw,circle,fill=yellow,scale=\nodescale*1]
\node[noeud] (5a) at (0,2) {};
\tikzstyle{fleche}=[->-,>=latex,black]
\draw[fleche] (1a)--(0a) {};
\draw[fleche] (2a)--(0a) {}node [right,midway,scale=\nodescale] {$2$};
\draw[fleche] (3a)--(0a) node [right,midway,scale=\nodescale] {$3$}  {};
\draw[fleche] (4a)--(1a) {};
\draw[fleche] (5a)--(1a) {};
\draw[fleche] (5a) to [bend left=100,looseness=2] (0a) {};
\end{tikzpicture}};
                   \node (20) at (4.5*\xdim,-2.1*\ydim){\begin{tikzpicture}[xscale=\xscale,yscale=\yscale]
\tikzstyle{noeud}=[draw,circle,fill=blue,scale=\nodescale*1]
\node[noeud] (0a) at (0,0) {};
\tikzstyle{noeud}=[draw,circle,fill=magenta,scale=\nodescale*1]
\node[noeud] (1a) at (-1,1) {};
\tikzstyle{noeud}=[draw,circle,fill=cyan,scale=\nodescale*1]
\node[noeud] (2a) at (0,1) {};
\tikzstyle{noeud}=[draw,circle,fill=brown,scale=\nodescale*1]
\node[noeud] (3a) at (1,1) {};
\tikzstyle{noeud}=[draw,circle,fill=orange,scale=\nodescale*1]
\node[noeud] (4a) at (-1,2) {};
\tikzstyle{noeud}=[draw,circle,fill=yellow,scale=\nodescale*1]
\node[noeud] (5a) at (0,2) {};
\tikzstyle{fleche}=[->-,>=latex,black]
\draw[fleche] (1a)--(0a) {};
\draw[fleche] (2a)--(0a) {}node [right,midway,scale=\nodescale] {$2$};
\draw[fleche] (3a)--(0a) node [right,midway,scale=\nodescale] {$3$}  {};
\draw[fleche] (4a)--(1a) {};
\draw[fleche] (5a)--(2a) {};
\end{tikzpicture}};
                        \node[opacity=\opacity] (19) at (5.1*\xdim,-2.1*\ydim){\begin{tikzpicture}[xscale=\xscale,yscale=\yscale]
\tikzstyle{noeud}=[draw,circle,fill=blue,scale=\nodescale*1]
\node[noeud] (0a) at (0,0) {};
\tikzstyle{noeud}=[draw,circle,fill=magenta,scale=\nodescale*1]
\node[noeud] (1a) at (-1,1) {};
\tikzstyle{noeud}=[draw,circle,fill=cyan,scale=\nodescale*1]
\node[noeud] (2a) at (0,1) {};
\tikzstyle{noeud}=[draw,circle,fill=brown,scale=\nodescale*1]
\node[noeud] (3a) at (1,1) {};
\tikzstyle{noeud}=[draw,circle,fill=orange,scale=\nodescale*1]
\node[noeud] (4a) at (-1,2) {};
\tikzstyle{noeud}=[draw,circle,fill=yellow,scale=\nodescale*1]
\node[noeud] (5a) at (1,2) {};
\tikzstyle{fleche}=[->-,>=latex,black]
\draw[fleche] (1a)--(0a) {};
\draw[fleche] (2a)--(0a) {}node [right,midway,scale=\nodescale] {$2$};
\draw[fleche] (3a)--(0a) node [right,midway,scale=\nodescale] {$3$}  {};
\draw[fleche] (4a)--(1a) {};
\draw[fleche] (5a)--(3a) {};
\end{tikzpicture}};

\node[red] (d) at (4.5*\xdim,-2.6*\ydim){\elong};
\node[red] (e) at (2.7*\xdim,-2.1*\ydim){\widen};
\node[red] (f) at (3.3*\xdim,-2.2*\ydim){\branch};

\node[opacity=\opacity] (21) at (4.5*\xdim,-3.3*\ydim){\begin{tikzpicture}[xscale=\xscale,yscale=\yscale]
\tikzstyle{noeud}=[draw,circle,fill=blue,scale=\nodescale*1]
\node[noeud] (0a) at (0,0) {};
\tikzstyle{noeud}=[draw,circle,fill=magenta,scale=\nodescale*1]
\node[noeud] (1a) at (-1,1) {};
\tikzstyle{noeud}=[draw,circle,fill=cyan,scale=\nodescale*1]
\node[noeud] (2a) at (0,1) {};
\tikzstyle{noeud}=[draw,circle,fill=brown,scale=\nodescale*1]
\node[noeud] (3a) at (1,1) {};
\tikzstyle{noeud}=[draw,circle,fill=orange,scale=\nodescale*1]
\node[noeud] (4a) at (-1,2) {};
\tikzstyle{noeud}=[draw,circle,fill=yellow,scale=\nodescale*1]
\node[noeud] (5a) at (0,2) {};
\tikzstyle{noeud}=[draw,circle,fill=red,scale=\nodescale*1]
\node[noeud] (6a) at (-1,3) {};
\tikzstyle{fleche}=[->-,>=latex,black]
\draw[fleche] (1a)--(0a) {};
\draw[fleche] (2a)--(0a) {}node [right,midway,scale=\nodescale] {$2$};
\draw[fleche] (3a)--(0a) node [right,midway,scale=\nodescale] {$3$}  {};
\draw[fleche] (4a)--(1a) {};
\draw[fleche] (5a)--(2a) {};
\draw[fleche] (6a)--(4a) {};
\end{tikzpicture}};
\node[opacity=\opacity] (22) at (5.1*\xdim,-2.8*\ydim){\begin{tikzpicture}[xscale=\xscale,yscale=\yscale]
\tikzstyle{noeud}=[draw,circle,fill=blue,scale=\nodescale*1]
\node[noeud] (0a) at (0,0) {};
\tikzstyle{noeud}=[draw,circle,fill=magenta,scale=\nodescale*1]
\node[noeud] (1a) at (-1,1) {};
\tikzstyle{noeud}=[draw,circle,fill=cyan,scale=\nodescale*1]
\node[noeud] (2a) at (0,1) {};
\tikzstyle{noeud}=[draw,circle,fill=brown,scale=\nodescale*1]
\node[noeud] (3a) at (1,1) {};
\tikzstyle{noeud}=[draw,circle,fill=orange,scale=\nodescale*1]
\node[noeud] (4a) at (-1,2) {};
\tikzstyle{noeud}=[draw,circle,fill=yellow,scale=\nodescale*1]
\node[noeud] (5a) at (0,2) {};
\tikzstyle{noeud}=[draw,circle,fill=red,scale=\nodescale*1]
\node[noeud] (6a) at (0,3) {};
\tikzstyle{fleche}=[->-,>=latex,black]
\draw[fleche] (1a)--(0a) {};
\draw[fleche] (2a)--(0a) {}node [right,midway,scale=\nodescale] {$2$};
\draw[fleche] (3a)--(0a) node [right,midway,scale=\nodescale] {$3$}  {};
\draw[fleche] (4a)--(1a) {};
\draw[fleche] (5a)--(2a) {};
\draw[fleche] (6a)--(5a) {};
\end{tikzpicture}};
\node[opacity=\opacity] (23) at (3.9*\xdim,-2.8*\ydim){\begin{tikzpicture}[xscale=\xscale,yscale=\yscale]
\tikzstyle{noeud}=[draw,circle,fill=blue,scale=\nodescale*1]
\node[noeud] (0a) at (0,0) {};
\tikzstyle{noeud}=[draw,circle,fill=magenta,scale=\nodescale*1]
\node[noeud] (1a) at (-1,1) {};
\tikzstyle{noeud}=[draw,circle,fill=cyan,scale=\nodescale*1]
\node[noeud] (2a) at (0,1) {};
\tikzstyle{noeud}=[draw,circle,fill=brown,scale=\nodescale*1]
\node[noeud] (3a) at (1,1) {};
\tikzstyle{noeud}=[draw,circle,fill=orange,scale=\nodescale*1]
\node[noeud] (4a) at (-1,2) {};
\tikzstyle{noeud}=[draw,circle,fill=yellow,scale=\nodescale*1]
\node[noeud] (5a) at (0,2) {};
\tikzstyle{fleche}=[->-,>=latex,black]
\draw[fleche] (1a)--(0a) {};
\draw[fleche] (2a)--(0a) {}node [right,midway,scale=\nodescale] {$2$};
\draw[fleche] (3a)--(0a) node [right,midway,scale=\nodescale] {$3$}  {};
\draw[fleche] (4a)--(1a) {};
\draw[fleche] (5a)--(2a) node[right,midway,scale=\nodescale]{$2$}{};
\end{tikzpicture}};
\node (24) at (2.5*\xdim,-2.8*\ydim){\begin{tikzpicture}[xscale=\xscale,yscale=\yscale]
\tikzstyle{noeud}=[draw,circle,fill=blue,scale=\nodescale*1]
\node[noeud] (0a) at (0,0) {};
\tikzstyle{noeud}=[draw,circle,fill=magenta,scale=\nodescale*1]
\node[noeud] (1a) at (-1,1) {};
\tikzstyle{noeud}=[draw,circle,fill=cyan,scale=\nodescale*1]
\node[noeud] (2a) at (0,1) {};
\tikzstyle{noeud}=[draw,circle,fill=brown,scale=\nodescale*1]
\node[noeud] (3a) at (1,1) {};
\tikzstyle{noeud}=[draw,circle,fill=orange,scale=\nodescale*1]
\node[noeud] (4a) at (-1,2) {};
\tikzstyle{noeud}=[draw,circle,fill=yellow,scale=\nodescale*1]
\node[noeud] (5a) at (0,2) {};
\tikzstyle{fleche}=[->-,>=latex,black]
\draw[fleche] (1a)--(0a) {};
\draw[fleche] (2a)--(0a) {}node [right,midway,scale=\nodescale] {$2$};
\draw[fleche] (3a)--(0a) node [right,midway,scale=\nodescale] {$3$}  {};
\draw[fleche] (4a)--(1a) {};
\draw[fleche] (5a)--(2a) {};
\draw[fleche] (5a)--(1a) {};
\end{tikzpicture}};
\node[opacity=\opacity] (37) at (3.3*\xdim,-2.8*\ydim){\begin{tikzpicture}[xscale=\xscale,yscale=\yscale]
\tikzstyle{noeud}=[draw,circle,fill=blue,scale=\nodescale*1]
\node[noeud] (0a) at (0,0) {};
\tikzstyle{noeud}=[draw,circle,fill=magenta,scale=\nodescale*1]
\node[noeud] (1a) at (-1,1) {};
\tikzstyle{noeud}=[draw,circle,fill=cyan,scale=\nodescale*1]
\node[noeud] (2a) at (0,1) {};
\tikzstyle{noeud}=[draw,circle,fill=brown,scale=\nodescale*1]
\node[noeud] (3a) at (1,1) {};
\tikzstyle{noeud}=[draw,circle,fill=orange,scale=\nodescale*1]
\node[noeud] (4a) at (-1,2) {};
\tikzstyle{noeud}=[draw,circle,fill=yellow,scale=\nodescale*1]
\node[noeud] (5a) at (0,2) {};
\tikzstyle{fleche}=[->-,>=latex,black]
\draw[fleche] (1a)--(0a) {};
\draw[fleche] (2a)--(0a) {}node [right,midway,scale=\nodescale] {$2$};
\draw[fleche] (3a)--(0a) node [right,midway,scale=\nodescale] {$3$}  {};
\draw[fleche] (4a)--(1a) {};
\draw[fleche] (5a)--(2a) {};
\draw[fleche] (5a) to [bend left=100,looseness=2] (0a) {};
\end{tikzpicture}};
\node[opacity=\opacity] (25) at (3.3*\xdim,-1.5*\ydim){\begin{tikzpicture}[xscale=\xscale,yscale=\yscale]
\tikzstyle{noeud}=[draw,circle,fill=blue,scale=\nodescale*1]
\node[noeud] (0a) at (0,0) {};
\tikzstyle{noeud}=[draw,circle,fill=magenta,scale=\nodescale*1]
\node[noeud] (1a) at (-1,1) {};
\tikzstyle{noeud}=[draw,circle,fill=cyan,scale=\nodescale*1]
\node[noeud] (2a) at (0,1) {};
\tikzstyle{noeud}=[draw,circle,fill=brown,scale=\nodescale*1]
\node[noeud] (3a) at (1,1) {};
\tikzstyle{noeud}=[draw,circle,fill=orange,scale=\nodescale*1]
\node[noeud] (4a) at (-1,2) {};
\tikzstyle{noeud}=[draw,circle,fill=yellow,scale=\nodescale*1]
\node[noeud] (5a) at (0,2) {};
\tikzstyle{noeud}=[draw,circle,fill=red,scale=\nodescale*1]
\node[noeud] (6a) at (1,2) {};
\tikzstyle{fleche}=[->-,>=latex,black]
\draw[fleche] (1a)--(0a) {};
\draw[fleche] (2a)--(0a) {}node [right,midway,scale=\nodescale] {$2$};
\draw[fleche] (3a)--(0a) node [right,midway,scale=\nodescale] {$3$}  {};
\draw[fleche] (4a)--(1a) {};
\draw[fleche] (5a)--(2a) {};
\draw[fleche] (6a)--(2a) node[right,midway,scale=\nodescale]{$2$}{};
\end{tikzpicture}};
\node[opacity=\opacity] (26) at (2.7*\xdim,-1.5*\ydim){\begin{tikzpicture}[xscale=\xscale,yscale=\yscale]
\tikzstyle{noeud}=[draw,circle,fill=blue,scale=\nodescale*1]
\node[noeud] (0a) at (0,0) {};
\tikzstyle{noeud}=[draw,circle,fill=magenta,scale=\nodescale*1]
\node[noeud] (1a) at (-1,1) {};
\tikzstyle{noeud}=[draw,circle,fill=cyan,scale=\nodescale*1]
\node[noeud] (2a) at (0,1) {};
\tikzstyle{noeud}=[draw,circle,fill=brown,scale=\nodescale*1]
\node[noeud] (3a) at (1,1) {};
\tikzstyle{noeud}=[draw,circle,fill=orange,scale=\nodescale*1]
\node[noeud] (4a) at (-1,2) {};
\tikzstyle{noeud}=[draw,circle,fill=yellow,scale=\nodescale*1]
\node[noeud] (5a) at (0,2) {};
\tikzstyle{noeud}=[draw,circle,fill=red,scale=\nodescale*1]
\node[noeud] (6a) at (1,2) {};
\tikzstyle{fleche}=[->-,>=latex,black]
\draw[fleche] (1a)--(0a) {};
\draw[fleche] (2a)--(0a) {}node [right,midway,scale=\nodescale] {$2$};
\draw[fleche] (3a)--(0a) node [right,midway,scale=\nodescale] {$3$}  {};
\draw[fleche] (4a)--(1a) {};
\draw[fleche] (5a)--(2a) {};
\draw[fleche] (5a)--(1a) {}node [left,midway,scale=\nodescale] {$2$};
\draw[fleche] (6a)--(3a) {};

\end{tikzpicture}};
\node[opacity=\opacity] (27) at (2.1*\xdim,-1.5*\ydim){\begin{tikzpicture}[xscale=\xscale,yscale=\yscale]
\tikzstyle{noeud}=[draw,circle,fill=blue,scale=\nodescale*1]
\node[noeud] (0a) at (0,0) {};
\tikzstyle{noeud}=[draw,circle,fill=magenta,scale=\nodescale*1]
\node[noeud] (1a) at (-1,1) {};
\tikzstyle{noeud}=[draw,circle,fill=cyan,scale=\nodescale*1]
\node[noeud] (2a) at (0,1) {};
\tikzstyle{noeud}=[draw,circle,fill=brown,scale=\nodescale*1]
\node[noeud] (3a) at (1,1) {};
\tikzstyle{noeud}=[draw,circle,fill=orange,scale=\nodescale*1]
\node[noeud] (4a) at (-1,2) {};
\tikzstyle{noeud}=[draw,circle,fill=yellow,scale=\nodescale*1]
\node[noeud] (5a) at (0,2) {};
\tikzstyle{noeud}=[draw,circle,fill=red,scale=\nodescale*1]
\node[noeud] (6a) at (1,2) {};
\tikzstyle{fleche}=[->-,>=latex,black]
\draw[fleche] (1a)--(0a) {};
\draw[fleche] (2a)--(0a) {}node [right,midway,scale=\nodescale] {$2$};
\draw[fleche] (3a)--(0a) node [right,midway,scale=\nodescale] {$3$}  {};
\draw[fleche] (4a)--(1a) {};
\draw[fleche] (5a)--(2a) {};
\draw[fleche] (5a)--(1a) {}node [left,midway,scale=\nodescale] {$2$};
\draw[fleche] (6a)--(2a) {};
\draw[fleche] (6a)--(1a) {};
\end{tikzpicture}};
\node[opacity=\opacity] (28) at (2.1*\xdim,-2.1*\ydim){\begin{tikzpicture}[xscale=\xscale,yscale=\yscale]
\tikzstyle{noeud}=[draw,circle,fill=blue,scale=\nodescale*1]
\node[noeud] (0a) at (0,0) {};
\tikzstyle{noeud}=[draw,circle,fill=magenta,scale=\nodescale*1]
\node[noeud] (1a) at (-1,1) {};
\tikzstyle{noeud}=[draw,circle,fill=cyan,scale=\nodescale*1]
\node[noeud] (2a) at (0,1) {};
\tikzstyle{noeud}=[draw,circle,fill=brown,scale=\nodescale*1]
\node[noeud] (3a) at (1,1) {};
\tikzstyle{noeud}=[draw,circle,fill=orange,scale=\nodescale*1]
\node[noeud] (4a) at (-1,2) {};
\tikzstyle{noeud}=[draw,circle,fill=yellow,scale=\nodescale*1]
\node[noeud] (5a) at (0,2) {};
\tikzstyle{noeud}=[draw,circle,fill=red,scale=\nodescale*1]
\node[noeud] (6a) at (1,2) {};
\tikzstyle{fleche}=[->-,>=latex,black]
\draw[fleche] (1a)--(0a) {};
\draw[fleche] (2a)--(0a) {}node [right,midway,scale=\nodescale] {$2$};
\draw[fleche] (3a)--(0a) node [right,midway,scale=\nodescale] {$3$}  {};
\draw[fleche] (4a)--(1a) {};
\draw[fleche] (5a)--(2a) {};
\draw[fleche] (5a)--(1a) {}node [left,midway,scale=\nodescale] {$2$};
\draw[fleche] (6a)--(2a) {};
\draw[fleche] (6a) to [bend left=100](0a){};
\end{tikzpicture}};

\node[red] (g) at (3*\xdim,-3.2*\ydim){\elong};
\node[red] (h) at (0.9*\xdim,-2.8*\ydim){\widen};
\node[red] (i) at (1.8*\xdim,-3.2*\ydim){\branch};

\node[opacity=\opacity] (29) at (2.7*\xdim,-3.8*\ydim){\begin{tikzpicture}[xscale=\xscale,yscale=\yscale]
\tikzstyle{noeud}=[draw,circle,fill=blue,scale=\nodescale*1]
\node[noeud] (0a) at (0,0) {};
\tikzstyle{noeud}=[draw,circle,fill=magenta,scale=\nodescale*1]
\node[noeud] (1a) at (-1,1) {};
\tikzstyle{noeud}=[draw,circle,fill=cyan,scale=\nodescale*1]
\node[noeud] (2a) at (0,1) {};
\tikzstyle{noeud}=[draw,circle,fill=brown,scale=\nodescale*1]
\node[noeud] (3a) at (1,1) {};
\tikzstyle{noeud}=[draw,circle,fill=orange,scale=\nodescale*1]
\node[noeud] (4a) at (-1,2) {};
\tikzstyle{noeud}=[draw,circle,fill=yellow,scale=\nodescale*1]
\node[noeud] (5a) at (0,2) {};
\tikzstyle{noeud}=[draw,circle,fill=red,scale=\nodescale*1]
\node[noeud] (6a) at (0,3) {};
\tikzstyle{fleche}=[->-,>=latex,black]
\draw[fleche] (1a)--(0a) {};
\draw[fleche] (2a)--(0a) {}node [right,midway,scale=\nodescale] {$2$};
\draw[fleche] (3a)--(0a) node [right,midway,scale=\nodescale] {$3$}  {};
\draw[fleche] (4a)--(1a) {};
\draw[fleche] (5a)--(2a) {};
\draw[fleche] (5a)--(1a) {};
\draw[fleche] (6a)--(5a) {};
\end{tikzpicture}};
\node[opacity=\opacity] (30) at (3.3*\xdim,-3.8*\ydim){\begin{tikzpicture}[xscale=\xscale,yscale=\yscale]
\tikzstyle{noeud}=[draw,circle,fill=blue,scale=\nodescale*1]
\node[noeud] (0a) at (0,0) {};
\tikzstyle{noeud}=[draw,circle,fill=magenta,scale=\nodescale*1]
\node[noeud] (1a) at (-1,1) {};
\tikzstyle{noeud}=[draw,circle,fill=cyan,scale=\nodescale*1]
\node[noeud] (2a) at (0,1) {};
\tikzstyle{noeud}=[draw,circle,fill=brown,scale=\nodescale*1]
\node[noeud] (3a) at (1,1) {};
\tikzstyle{noeud}=[draw,circle,fill=orange,scale=\nodescale*1]
\node[noeud] (4a) at (-1,2) {};
\tikzstyle{noeud}=[draw,circle,fill=yellow,scale=\nodescale*1]
\node[noeud] (5a) at (0,2) {};
\tikzstyle{noeud}=[draw,circle,fill=red,scale=\nodescale*1]
\node[noeud] (6a) at (-1,3) {};
\tikzstyle{fleche}=[->-,>=latex,black]
\draw[fleche] (1a)--(0a) {};
\draw[fleche] (2a)--(0a) {}node [right,midway,scale=\nodescale] {$2$};
\draw[fleche] (3a)--(0a) node [right,midway,scale=\nodescale] {$3$}  {};
\draw[fleche] (4a)--(1a) {};
\draw[fleche] (5a)--(2a) {};
\draw[fleche] (5a)--(1a) {};
\draw[fleche] (6a)--(4a) {};
\end{tikzpicture}};
\node (31) at (1.5*\xdim,-3.8*\ydim){\begin{tikzpicture}[xscale=\xscale,yscale=\yscale]
\tikzstyle{noeud}=[draw,circle,fill=blue,scale=\nodescale*1]
\node[noeud] (0a) at (0,0) {};
\tikzstyle{noeud}=[draw,circle,fill=magenta,scale=\nodescale*1]
\node[noeud] (1a) at (-1,1) {};
\tikzstyle{noeud}=[draw,circle,fill=cyan,scale=\nodescale*1]
\node[noeud] (2a) at (0,1) {};
\tikzstyle{noeud}=[draw,circle,fill=brown,scale=\nodescale*1]
\node[noeud] (3a) at (1,1) {};
\tikzstyle{noeud}=[draw,circle,fill=orange,scale=\nodescale*1]
\node[noeud] (4a) at (-1,2) {};
\tikzstyle{noeud}=[draw,circle,fill=yellow,scale=\nodescale*1]
\node[noeud] (5a) at (0,2) {};
\tikzstyle{fleche}=[->-,>=latex,black]
\draw[fleche] (1a)--(0a) {};
\draw[fleche] (2a)--(0a) {}node [right,midway,scale=\nodescale] {$2$};
\draw[fleche] (3a)--(0a) node [right,midway,scale=\nodescale] {$3$}  {};
\draw[fleche] (4a)--(1a) {};
\draw[fleche] (5a)--(2a) {};
\draw[fleche] (5a)--(1a) {}node [left,midway,scale=\nodescale] {$2$};
\end{tikzpicture}};
\node[opacity=\opacity] (32) at (2.1*\xdim,-3.8*\ydim){\begin{tikzpicture}[xscale=\xscale,yscale=\yscale]
\tikzstyle{noeud}=[draw,circle,fill=blue,scale=\nodescale*1]
\node[noeud] (0a) at (0,0) {};
\tikzstyle{noeud}=[draw,circle,fill=magenta,scale=\nodescale*1]
\node[noeud] (1a) at (-1,1) {};
\tikzstyle{noeud}=[draw,circle,fill=cyan,scale=\nodescale*1]
\node[noeud] (2a) at (0,1) {};
\tikzstyle{noeud}=[draw,circle,fill=brown,scale=\nodescale*1]
\node[noeud] (3a) at (1,1) {};
\tikzstyle{noeud}=[draw,circle,fill=orange,scale=\nodescale*1]
\node[noeud] (4a) at (-1,2) {};
\tikzstyle{noeud}=[draw,circle,fill=yellow,scale=\nodescale*1]
\node[noeud] (5a) at (0,2) {};

\tikzstyle{fleche}=[->-,>=latex,black]
\draw[fleche] (1a)--(0a) {};
\draw[fleche] (2a)--(0a) {}node [right,midway,scale=\nodescale] {$2$};
\draw[fleche] (3a)--(0a) node [right,midway,scale=\nodescale] {$3$}  {};
\draw[fleche] (4a)--(1a) {};
\draw[fleche] (5a)--(2a) {};
\draw[fleche] (5a)--(1a) {};
\draw[fleche] (5a) to [bend left=100,looseness=2](0a){};
\end{tikzpicture}};

\node[opacity=\opacity] (33) at (0.3*\xdim,-3.4*\ydim){\begin{tikzpicture}[xscale=\xscale,yscale=\yscale]
\tikzstyle{noeud}=[draw,circle,fill=blue,scale=\nodescale*1]
\node[noeud] (0a) at (0,0) {};
\tikzstyle{noeud}=[draw,circle,fill=magenta,scale=\nodescale*1]
\node[noeud] (1a) at (-1,1) {};
\tikzstyle{noeud}=[draw,circle,fill=cyan,scale=\nodescale*1]
\node[noeud] (2a) at (0,1) {};
\tikzstyle{noeud}=[draw,circle,fill=brown,scale=\nodescale*1]
\node[noeud] (3a) at (1,1) {};
\tikzstyle{noeud}=[draw,circle,fill=orange,scale=\nodescale*1]
\node[noeud] (4a) at (-1,2) {};
\tikzstyle{noeud}=[draw,circle,fill=yellow,scale=\nodescale*1]
\node[noeud] (5a) at (0,2) {};
\tikzstyle{noeud}=[draw,circle,fill=red,scale=\nodescale*1]
\node[noeud] (6a) at (1,2) {};
\tikzstyle{fleche}=[->-,>=latex,black]
\draw[fleche] (1a)--(0a) {};
\draw[fleche] (2a)--(0a) {}node [right,midway,scale=\nodescale] {$2$};
\draw[fleche] (3a)--(0a) node [right,midway,scale=\nodescale] {$3$}  {};
\draw[fleche] (4a)--(1a) {};
\draw[fleche] (5a)--(2a) {};
\draw[fleche] (5a)--(1a) {};
\draw[fleche] (6a)--(3a) {};

\end{tikzpicture}};
\node[opacity=\opacity] (34) at (0.9*\xdim,-2.2*\ydim){\begin{tikzpicture}[xscale=\xscale,yscale=\yscale]
\tikzstyle{noeud}=[draw,circle,fill=blue,scale=\nodescale*1]
\node[noeud] (0a) at (0,0) {};
\tikzstyle{noeud}=[draw,circle,fill=magenta,scale=\nodescale*1]
\node[noeud] (1a) at (-1,1) {};
\tikzstyle{noeud}=[draw,circle,fill=cyan,scale=\nodescale*1]
\node[noeud] (2a) at (0,1) {};
\tikzstyle{noeud}=[draw,circle,fill=brown,scale=\nodescale*1]
\node[noeud] (3a) at (1,1) {};
\tikzstyle{noeud}=[draw,circle,fill=orange,scale=\nodescale*1]
\node[noeud] (4a) at (-1,2) {};
\tikzstyle{noeud}=[draw,circle,fill=yellow,scale=\nodescale*1]
\node[noeud] (5a) at (0,2) {};
\tikzstyle{noeud}=[draw,circle,fill=red,scale=\nodescale*1]
\node[noeud] (6a) at (1,2) {};
\tikzstyle{fleche}=[->-,>=latex,black]
\draw[fleche] (1a)--(0a) {};
\draw[fleche] (2a)--(0a) {}node [right,midway,scale=\nodescale] {$2$};
\draw[fleche] (3a)--(0a) node [right,midway,scale=\nodescale] {$3$}  {};
\draw[fleche] (4a)--(1a) {};
\draw[fleche] (5a)--(2a) {};
\draw[fleche] (5a)--(1a) {};
\draw[fleche] (6a)--(2a) {} node[left,midway,scale=\nodescale]{$2$};

\end{tikzpicture}};
\node[opacity=\opacity] (35) at (0.3*\xdim,-2.2*\ydim){\begin{tikzpicture}[xscale=\xscale,yscale=\yscale]
\tikzstyle{noeud}=[draw,circle,fill=blue,scale=\nodescale*1]
\node[noeud] (0a) at (0,0) {};
\tikzstyle{noeud}=[draw,circle,fill=magenta,scale=\nodescale*1]
\node[noeud] (1a) at (-1,1) {};
\tikzstyle{noeud}=[draw,circle,fill=cyan,scale=\nodescale*1]
\node[noeud] (2a) at (0,1) {};
\tikzstyle{noeud}=[draw,circle,fill=brown,scale=\nodescale*1]
\node[noeud] (3a) at (1,1) {};
\tikzstyle{noeud}=[draw,circle,fill=orange,scale=\nodescale*1]
\node[noeud] (4a) at (-1,2) {};
\tikzstyle{noeud}=[draw,circle,fill=yellow,scale=\nodescale*1]
\node[noeud] (5a) at (0,2) {};
\tikzstyle{noeud}=[draw,circle,fill=red,scale=\nodescale*1]
\node[noeud] (6a) at (1,2) {};
\tikzstyle{fleche}=[->-,>=latex,black]
\draw[fleche] (1a)--(0a) {};
\draw[fleche] (2a)--(0a) {}node [right,midway,scale=\nodescale] {$2$};
\draw[fleche] (3a)--(0a) node [right,midway,scale=\nodescale] {$3$}  {};
\draw[fleche] (4a)--(1a) {};
\draw[fleche] (5a)--(2a) {};
\draw[fleche] (5a)--(1a) {};
\draw[fleche] (6a)--(2a) {};
\draw[fleche] (6a)--(1a) {}node[left,midway,scale=\nodescale]{$2$};

\end{tikzpicture}};
\node[opacity=\opacity] (36) at (0.3*\xdim,-2.8*\ydim){\begin{tikzpicture}[xscale=\xscale,yscale=\yscale]
\tikzstyle{noeud}=[draw,circle,fill=blue,scale=\nodescale*1]
\node[noeud] (0a) at (0,0) {};
\tikzstyle{noeud}=[draw,circle,fill=magenta,scale=\nodescale*1]
\node[noeud] (1a) at (-1,1) {};
\tikzstyle{noeud}=[draw,circle,fill=cyan,scale=\nodescale*1]
\node[noeud] (2a) at (0,1) {};
\tikzstyle{noeud}=[draw,circle,fill=brown,scale=\nodescale*1]
\node[noeud] (3a) at (1,1) {};
\tikzstyle{noeud}=[draw,circle,fill=orange,scale=\nodescale*1]
\node[noeud] (4a) at (-1,2) {};
\tikzstyle{noeud}=[draw,circle,fill=yellow,scale=\nodescale*1]
\node[noeud] (5a) at (0,2) {};
\tikzstyle{noeud}=[draw,circle,fill=red,scale=\nodescale*1]
\node[noeud] (6a) at (1,2) {};
\tikzstyle{fleche}=[->-,>=latex,black]
\draw[fleche] (1a)--(0a) {};
\draw[fleche] (2a)--(0a) {}node [right,midway,scale=\nodescale] {$2$};
\draw[fleche] (3a)--(0a) node [right,midway,scale=\nodescale] {$3$}  {};
\draw[fleche] (4a)--(1a) {};
\draw[fleche] (5a)--(2a) {};
\draw[fleche] (5a)--(1a) {}node [left,midway,scale=\nodescale] {$2$};
\draw[fleche] (6a)--(2a) {};
\draw[fleche] (6a)--(1a) {};
\draw[fleche] (6a)--(0a) {};
\end{tikzpicture}};

          \draw[->,color=red,ultra thick] (1)--(2) {} node[midway,above] {\elong};
   \draw[->,color=red] (2)--(3) {} node[midway,right] {\elong};
       \draw[->,color=red] (2)--(4) {} node[midway,right] {\branch};
       \draw[->,color=red,ultra thick] (2)--(5) {} node[midway,above] {\widen};
          \draw[-,color=red] (5)--(a) {} node[midway,right] {};
              \draw[->,color=red] (a)--(6) {} node[midway,right] {};
                  \draw[->,color=red] (a)--(7) {} node[midway,right] {};
         \draw[->,color=red] (5)--(8) {} node[midway,right] {\branch};
          \draw[->,color=red,ultra thick] (5)--(9) {} node[midway,above] {\widen};
             \draw[->,color=red] (9)--(11) {} node[midway,right] {\branch};
       \draw[->,color=red] (9)--(10) {} node[midway,right] {\widen};
        \draw[-,color=red,ultra thick] (9)--(b) {} node[midway,right] {};
          \draw[->,color=red] (b)--(12) {} node[midway,right] {};
           \draw[->,color=red] (b)--(13) {} node[midway,right] {};
            \draw[->,color=red,ultra thick] (b)--(14) {} node[midway,right] {};
                         \draw[->,color=red] (14)--(15) {} node[midway,right] {\elong};
       \draw[-,color=red] (14)--(j) {} node[midway,above] {};
        \draw[->,color=red] (j)--(16) {} node[midway,right] {};
         \draw[->,color=red] (j)--(38) {} node[midway,right] {};
           \draw[-,color=red,ultra thick] (14)--(c) {} node[midway,right] {};
       \draw[->,color=red] (c)--(17) {} node[midway,right] {};
          \draw[->,color=red] (c)--(18) {} node[midway,right] {};
             \draw[->,color=red] (c)--(19) {} node[midway,right] {};
                \draw[->,color=red,ultra thick] (c)--(20) {} node[midway,right] {};
                           \draw[-,color=red] (20)--(d) {} node[midway,right] {};
                                      \draw[-,color=red] (20)--(e) {} node[midway,right] {};
                                                 \draw[-,color=red,ultra thick] (20)--(f) {} node[midway,right] {};
\draw[->,color=red] (d)--(21) {} node[midway,right] {};
\draw[->,color=red] (d)--(22) {} node[midway,right] {};
\draw[->,color=red] (f)--(23) {} node[midway,right] {};
\draw[->,color=red,ultra thick] (f)--(24) {} node[midway,right] {};
\draw[->,color=red] (f)--(37) {} node[midway,right] {};
\draw[->,color=red] (e)--(25) {} node[midway,right] {};
\draw[->,color=red] (e)--(26) {} node[midway,right] {};
\draw[->,color=red] (e)--(27) {} node[midway,right] {};
\draw[->,color=red] (e)--(28) {} node[midway,right] {};

\draw[-,color=red] (24)--(g) {} node[midway,right] {};
\draw[-,color=red] (24)--(h) {} node[midway,right] {};
\draw[-,color=red,ultra thick] (24)--(i) {} node[midway,right] {};

\draw[->,color=red] (g)--(29) {} node[midway,right] {};
\draw[->,color=red] (g)--(30) {} node[midway,right] {};
\draw[->,color=red,ultra thick] (i)--(31) {} node[midway,right] {};
\draw[->,color=red] (i)--(32) {} node[midway,right] {};

\draw[->,color=red] (h)--(33) {} node[midway,right] {};
\draw[->,color=red] (h)--(34) {} node[midway,right] {};
\draw[->,color=red] (h)--(35) {} node[midway,right] {};
\draw[->,color=red] (h)--(36) {} node[midway,right] {};

\end{tikzpicture}

\def\nodescale{1}

\caption{The path (in bold) in the FDAGs enumeration tree leading to the FDAG of Figure~\ref{fig:ordering:canonical}. The unexplored branches are only displayed by their root, which are shown partially transparent. The order of insertion of the vertices of each FDAG is always the same, and follows the color code (in the order of insertion): \node[blue], \node[magenta], \node[cyan], \node[brown], \node[orange], \node[yellow] and \node[red]. With respect to the canonical ordering, they are numbered $0$ to $6$ in the same order.}
\label{fig:enumtree}
\end{figure}

\section{Growth of the tree}\label{sec:analysis}

In this section, we analyse the enumeration tree defined in Section~\ref{sec:enumeration}. In Subsection~\ref{ss:asymptotic}, we exhibit a bijection -- Theorem~\ref{th:bijdagmatrix} -- between FDAGs and a class of combinatorial objects from the literature, allowing us to obtain an asymptotic expansion of the growth of the tree. In Subsection~\ref{ss:branching}, we show that any FDAG has a linear number of children in that tree in Theorem~\ref{th:successors}, and that the time complexity to construct those children is quadratic -- see Proposition~\ref{prop:time}. Finally, Theorem~\ref{th:polynomial_delay} states that our algorithm runs with polynomial delay \cite{johnson1988generating}.

\subsection{Asymptotic growth}\label{ss:asymptotic}
\noindent
In this subsection, we show that FDAGs are in bijection with a set of particular matrices, whose combinatorial properties are known and give us access to an asymptotic expansion of the enumeration tree growth.

\medskip
\noindent
\begin{minipage}{0.45\textwidth}
Let us denote $E_k$ the set of all FDAGs that are accessible from $D_0$ in exactly $k$ steps in the enumeration tree -- with $E_0 = \lbrace D_0 \rbrace$; then Table~\ref{tab:enumtree} depicts the values of $\#E_k$ for the first nine values of $k$\footnotemark.
\end{minipage}\hfill
\begin{minipage}{0.5\textwidth}
\centering
\small
\setlength\tabcolsep{4pt}
\begin{tabular}{c|ccccccccc}
$k$ & 0 & 1 & 2&3&4&5&6&7&8\\
$\#E_k$&1&1&3&12&61&380&2,815&24,213& 237,348
\end{tabular}
\captionof{table}{Number of FDAGs accessible from $D_0$ in $k$ steps in the enumeration tree.}\label{tab:enumtree}
\end{minipage}
\footnotetext{These numbers were obtained numerically (cf. ``Implementation'' at the end of the article).}

\medskip
\noindent
Actually, the terms of Table~\ref{tab:enumtree} coincide with the first terms of OEIS sequence A158691\footnote{OEIS Foundation Inc. (2021), The On-Line Encyclopedia of Integer Sequences, \url{http://oeis.org/A158691}.}, which counts the number of \emph{row-Fishburn matrices}, that are upper-triangular matrices with at least one nonzero entry in each row. The \emph{size} of such a matrix is equal to the sum of its entries.

\begin{theorem}\label{th:bijdagmatrix}
There exists a bijection $\Phi$ between the set of FDAGs and the set of row-Fishburn matrices, such that if $D$ is a FDAG and $M=\Phi(D)$, then
$$D \in E_k \iff \text{size}(M)=k.$$
\end{theorem}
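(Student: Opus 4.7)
The strategy is to make the bijection $\Phi$ operational: by Proposition~\ref{prop:bij}, every FDAG $D$ admits a unique antecedent via Algorithm~\ref{algo:reduction}, so iterating that algorithm back to $D_0$ yields a well-defined and unique sequence of $k$ atomic expansion operations (of type \elongation, \widening, or \branching), where $k$ is precisely the depth of $D$ in the enumeration tree. The plan is to turn this sequence into a matrix by sending each operation to an increment of a single specific entry, and to verify that the resulting matrix is upper triangular with no zero row. Size preservation will then be immediate, because $k$ operations will produce $k$ unit increments and hence a matrix of size $k$.

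First I would fix the encoding rule. For each operation I need to decide which cell $(i,j)$ it increments. The canonical ordering provides a natural column index: column $j$ is associated with the vertex $v_j$ being created (by \elongation or \widening) or being modified (by \branching, which by Algorithm~\ref{algo:reduction} always acts on the current largest vertex). For the row index I would use either the height of $v_j$ or the height of the target of the newly created arc; the choice must be calibrated so that (i) all entries in column $j$ lie in rows $\leq j$, which gives upper triangularity from the canonical ordering $\topord$ and the fact that any child of $v_j$ has index $<j$, and (ii) every row introduced has at least one contribution, which should mirror the fact that every new height level is created by exactly one \elongation step and then never left empty. Because \widening with minimal word $w$ combines one vertex-creation with $\#w$ arcs, the encoding must treat a widening step as a single increment whose row is determined by which of the $\#\alphabet[]{<}$ minimal words (cf.\ Proposition~\ref{prop:widening}) was chosen; similarly, each \branching step is a single increment whose row is determined by the appended letter.

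Once the encoding is fixed, I would verify both structural properties of row-Fishburn matrices. Upper triangularity follows from the construction since every operation stores its ``choice index'' in a row bounded by the column's vertex index. For the no-zero-row condition, I would argue by induction on the operation sequence: the unique \elongation step that first creates a vertex at a given height contributes directly to its associated row, and canonicalness (Proposition~\ref{prop:canonical}) forbids any construction sequence that would leave a row empty. The inverse map $\Phi^{-1}$ is then read off column by column: given a row-Fishburn matrix $M$ of size $k$, process the columns in increasing order, and within each column translate entries into a block consisting of one \elongation or \widening followed by \branching steps, determining the type and choice of each operation from the row positions of the nonzero entries. The \widening step uses the minimal-word construction of Subsection~\ref{ss:words}, and the no-zero-row condition is precisely what guarantees that the decreasing-word requirement of Definitions~\ref{def:decreasingwords}--\ref{def:boundedwords} is maintained throughout.

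The main obstacle will be this calibration step: finding a row/column assignment that simultaneously produces upper triangularity, non-zero rows, and a single $+1$ per operation. The trickiest point is the interaction between \widening (which injects a new column whose nonzero entries may span several rows according to the minimal word used) and \branching (which deepens one of those rows within the same column). Showing that the resulting map is onto every row-Fishburn matrix amounts to proving that the non-zero-row constraint is not just a consequence but also a sufficient condition for the reconstructed sequence to satisfy the canonicalness of Theorem~\ref{prop:dag:cutf}; verifying this equivalence is, in my view, the heart of the proof and the reason the argument is deferred to Appendix~\ref{sec:dagmatrix}.
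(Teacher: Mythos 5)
Your high-level plan -- use Proposition~\ref{prop:bij} to extract the unique length-$k$ operation sequence, encode each operation as a single unit increment so that size preservation is automatic, and invert column by column -- is indeed the spirit of the paper's proof. But your proposal stops exactly where the actual proof begins: the ``calibration'' you defer to is the entire content of the argument, and the two concrete encodings you float would both fail. Indexing rows by $\height(v_j)$, or by the height of the arc's target, collapses distinct vertices at the same height into one row: the resulting entry would only record \emph{how many} children of a given vertex lie at a given height (essentially outdegree counts), not \emph{which} vertices they are, so two FDAGs differing only in which same-height child an arc targets would map to the same matrix and $\Phi$ would not be injective. Both coordinates must be taken from the canonical ordering $\topord$, not from heights: in the paper the row is the parent vertex $v_i$ (for $i\in[\![n,1]\!]$) and the column is the child vertex (for $j\in[\![n-1,0]\!]$).

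The missing idea is the two-stage construction. The paper first takes the \emph{reduced adjacency matrix} $M$, whose row $M_i$ is the multiplicity vector of $\childc(v_i)$; this is already upper triangular with no zero row (every $v_i$, $i\geq 1$, has a child of smaller index), i.e.\ already row-Fishburn -- but of the wrong size. The size is then repaired by the row-differencing operator $\ominus$: setting $\hat M_1=M_1$ and $\hat M_{i+1}=M_{i+1}\ominus M_i$, the prefix that $\childc(v_{i+1})$ shares with $\childc(v_i)$ is cancelled, leaving exactly one unit for the \elongation or \widening step that created $v_{i+1}$ plus one unit per subsequent \branching. This global cancellation against the \emph{previous row} is what makes a \widening step cost a single unit even though it creates several arcs -- a purely local ``one increment per operation'' encoding, as you propose, has no mechanism to discard the shared-prefix arcs. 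It also settles the point you identify as the heart of the proof: injectivity and surjectivity do not require a separate equivalence between the no-zero-row constraint and canonicalness, but follow directly once one exhibits the inverse operator $\oplus$ and checks $M_i\oplus(M_{i+1}\ominus M_i)=M_{i+1}$ and $(M_i\oplus\hat M_{i+1})\ominus M_i=\hat M_{i+1}$; the genuinely delicate part is rather the per-row size accounting, namely that the sum of $\hat M_{i+1}$ equals $p+1$ where $p$ is the number of \branching steps applied to $v_{i+1}$, which uses that \branching appends letters that are $\lex{\leq}$ the existing ones and hence land strictly to the right of the first-difference position in $\ominus$.
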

\begin{proof}
The proof lies in Appendix~\ref{sec:dagmatrix}.
\end{proof}

\medskip
\noindent
This connection is to our advantage since Fishburn matrices (in general) are combinatorial objects widely explored in the literature as they are in bijection with many others -- see \cite[Section 2]{hwang2019asymptotics} for a general overview. Notably, the asymptotic expansion of the number of row-Fishburn matrices has been conjectured first by Jel\'inek \cite{jelinek2012counting} and then proved by Bringmann et al. \cite{bringmann2014asymptotics}.

\begin{proposition}[Jel\'inek, Bringmann et al.]\label{prop:asymptotic}
As $k\to \infty$,
$$\#E_k = k!\left(\frac{12}{\pi^2}\right)^k \left(\beta+O\left(\frac{1}{k}\right)\right)$$
with $\beta = \frac{6\sqrt{2}}{\pi^2}e^{\pi^2/24} = 1.29706861206\dots$.
\end{proposition}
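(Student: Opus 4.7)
The plan is to reduce the statement to a known asymptotic result on row-Fishburn matrices via the bijection of Theorem~\ref{th:bijdagmatrix}. Since $\Phi$ sends FDAGs in $E_k$ bijectively onto row-Fishburn matrices of size $k$, we immediately obtain the identity $\#E_k = R_k$, where $R_k$ denotes the number of row-Fishburn matrices of size $k$. Hence the asymptotic behavior of $\#E_k$ is exactly the asymptotic behavior of $R_k$, and no further combinatorial work on the enumeration tree itself is required.

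First, I would recall the generating function identity for $\sum_{k\geq 0} R_k\, q^k$ obtained by Jel\'inek in \cite{jelinek2012counting}, which expresses this series as a $q$-hypergeometric expression closely related to the Kontsevich--Zagier ``strange'' series. Based on this closed form, Jel\'inek conjectured the precise constants that appear in the proposition. Next, I would invoke the asymptotic analysis of Bringmann, Li and Rhoades \cite{bringmann2014asymptotics}, who proved Jel\'inek's conjecture and established the expansion
$$R_k = k!\left(\frac{12}{\pi^2}\right)^k\left(\beta + O\!\left(\tfrac{1}{k}\right)\right),\qquad \beta = \frac{6\sqrt{2}}{\pi^2}\,e^{\pi^2/24}.$$
Combining this with $\#E_k = R_k$ yields the proposition directly.

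The main obstacle, or rather the deep step, is the analytic work underlying the asymptotics of $R_k$: it requires Wright's variant of the circle method applied to the $q$-hypergeometric generating function, together with modular-type transformation formulas near the dominant singularity $q=1$ and careful control of the contribution of minor arcs. All of this is carried out in the references cited above and lies outside the scope of the present paper. Within this article, therefore, the proof of Proposition~\ref{prop:asymptotic} is essentially a one-line consequence of Theorem~\ref{th:bijdagmatrix} combined with the main result of \cite{bringmann2014asymptotics}; the only nontrivial task left to us is the construction of the bijection $\Phi$, which is the content of Appendix~\ref{sec:dagmatrix}.
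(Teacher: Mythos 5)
Your proposal is correct and coincides with the paper's treatment: the proposition is exactly the result of Jel\'inek (conjecture) and Bringmann et al.\ (proof) on row-Fishburn matrices, transported through the bijection of Theorem~\ref{th:bijdagmatrix}, which gives $\#E_k = R_k$ with no further argument needed. The paper likewise offers no independent proof beyond this identification, so your sketch of the underlying circle-method analysis, while accurate, is correctly flagged as belonging to the cited references rather than to this article.
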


\subsection{Branching factor}\label{ss:branching}
\noindent
Given the overall structure of FDAGs, it is no surprise that the enumeration tree grows extremely fast. However, despite this combinatorial explosion, we show in this subsection that the branching factor, i.e., the outdegree of the nodes in the enumeration tree, is controlled. Actually, we prove that any FDAG has a linear number of successors\footnote{``successor'' in the sense of ``children in the enumeration tree''. We make the distinction to avoid confusion with the children denoted  by $\child(\cdot)$.} in the enumeration tree.

\begin{theorem}\label{th:successors}
Any FDAG $D$ has $\Theta(\#D)$ successors in the FDAG enumeration tree.
\end{theorem}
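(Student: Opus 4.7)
The plan is to count the successors contributed by each of the three expansion rules \elongation, \widening, \branching separately and then sum them; by Proposition~\ref{prop:bij} these contributions are disjoint, since every FDAG has a unique antecedent produced by a unique rule. Writing $D=(v_0,\dots,v_n)$, observe that $\alphabet{=}$ and $\alphabet{<}$ partition the vertex indices of $D$, so $\#\alphabet{=} + \#\alphabet{<} = \#D$.

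For the lower bound, I would note that \elongation produces exactly $\#\alphabet{=}$ successors (one per choice of $a_0 \in \alphabet{=}$), while Proposition~\ref{prop:widening}, applied to the alphabet $\alphabet{<}$ with bounding word $\childc(v_n)$, shows that \widening produces exactly $\#\alphabet{<}$ successors. These two rules alone yield $\#\alphabet{=} + \#\alphabet{<} = \#D$ successors.

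For the upper bound, I would bound the branching contribution: writing $\childc(v_n) = a_0\cdots a_m$, the new letter $a_{m+1}$ must satisfy $a_{m+1} \in \alphabet{<}$ and $a_{m+1} \lex{\leq} a_m$, giving at most $a_m + 1 \leq \#\alphabet{<}$ successors via \branching. Summing the three contributions yields at most $\#\alphabet{=} + 2\,\#\alphabet{<} \leq 2\,\#D$ successors, which combined with the lower bound gives $\Theta(\#D)$.

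The only case requiring separate treatment is the root $D_0$, where $\alphabet{<} = \emptyset$ and $v_0$ has no children: only \elongation applies, yielding exactly one successor, consistent with $\#D_0 = 1$. I expect the only (minor) technical point to be verifying that for $D \neq D_0$ the vertex $v_n$ has at least one child, so that \branching is well-defined and the argument above is not vacuous; this follows from the uniqueness of the leaf in a FDAG established in the proof of Theorem~\ref{prop:dag:cutf}, which forces $v_n$ to have positive height whenever $\#D \geq 2$.
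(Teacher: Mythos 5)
Your proof is correct and follows essentially the same route as the paper's: a rule-by-rule count in which \elongation contributes exactly $\#\alphabet{=}$ successors, \widening exactly $\#\alphabet{<}$ via Proposition~\ref{prop:widening}, and \branching at most $\#\alphabet{<}$, giving matching linear lower and upper bounds (the paper's are $n+2$ and $n+p+2\leq 2n+1$ for $\#D=n+1$). Your separate treatment of $D_0$ and the appeal to Proposition~\ref{prop:bij} for disjointness of the three contributions are harmless refinements of points the paper leaves implicit.
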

\begin{proof}
Let $D=(v_0,\dots, v_n)$ be a FDAG. We denote $\childc(v_n) = a_0\cdots a_m$. Depending on the rule chosen:
\begin{description}[leftmargin=!,labelwidth=2.5em,align=right]
\item[\branching] $a_{m+1}$ belongs to $\alphabet{<} = \lbrace 0, \dots, p \rbrace$, so the maximum number of successors is at most $p+1$, and at least $1$, depending on the condition $a_{m}\lex{\geq} a_{m+1}$.
\item[\elongation] The child of the new vertex is taken from $\alphabet{=}=\lbrace p+1,\dots, n\rbrace$ so the number of successors is exactly $n-p$.
\item[\widening] Following Proposition~\ref{prop:widening}, the number of successors is exactly $\#\alphabet{<} = p+1$.
\end{description}
Combining everything, the number of successors is at least $n+2$ and at most $n+p+2\leq 2n+1$ (as $p\leq n-1$, with equality for FDAGs obtained just after using \elongation rule).
\end{proof}

\medskip
\noindent
In the previous proof, we have shown that the number of successors of a FDAG with $n$ vertices is between $n+1$ and $2n-1$. Figure~\ref{fig:successors} illustrates that these boundaries are tight, on 1~000 randomly generated FDAGs. A random FDAG is constructed as follows.

\begin{definition}[Random FDAG]\label{def:random}
Let $k\geq 0$. Starting from $D_0$ -- the root, construct iteratively $D_i$ as a successor of $D_{i-1}$ in the enumeration tree, picked uniformly at random. We stop after $k$ steps, and keep $D_k$.
\end{definition}

\noindent
\begin{minipage}[c]{0.48\textwidth}
In Figure~\ref{fig:successors}, we have generated 10 random FDAGs for each $k\in\lbrace 1,\dots,100\rbrace$.

\medskip
\noindent
It is indeed a suitable property that any FDAG admits a linear number of successors; but it would be of little use if the time required to compute those successors is too important. We demonstrate in the following proposition that temporal complexity is manageable. There are two possible strategies: (i) one can keep the enumeration tree in memory, and store on each node only the increment allowing to construct a FDAG from its predecessor; or (ii) one can explicitly build the successors by copying the starting FDAG, so that the tree can be forgotten. Depending on whether one wants to build the tree itself or only the FDAGs that compose it, one will choose either strategy.
\end{minipage}\hfill
\begin{minipage}[c]{0.48\textwidth}
\includegraphics[width=\textwidth]{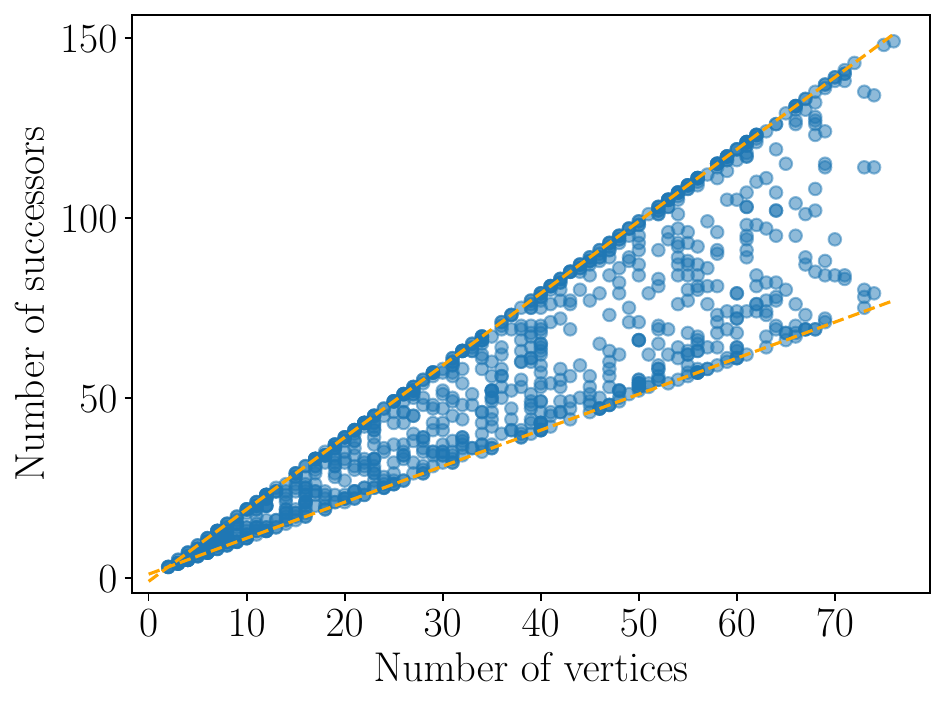}
\captionof{figure}{Numbers of successors of 1,000 random FDAGs in the enumeration tree, according to their number of vertices. Orange lines have equations $y=n+1$ and $y=2n-1$.}
\label{fig:successors}
\end{minipage}

\begin{proposition}\label{prop:time}
Computing the successors of any FDAG $D$ has complexity:
\begin{enumerate}[label=(\roman*)]
\item $O(\#D\deg(D))$ if the construction is incremental from $D$;
\item $O\left((\#D\deg(D))^2\right)$ if the construction involves copying $D$.
\end{enumerate}
\end{proposition}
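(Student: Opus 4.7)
The plan is to analyse the three expansion rules separately and then combine their costs. Throughout I would assume that the representation of $D$ stores, alongside the vertex and arc data, the canonical words $\childc(v_n)$ and $\childc(v_{n-1})$ together with the alphabets $\alphabet{=}$ and $\alphabet{<}$; these are inherited from the parent in the enumeration tree and can be updated as part of the expansion step without asymptotic overhead.

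First, I would bound the cost of \emph{describing} the successors, which will yield (i). For the branching rule, one scans $\alphabet{<}$ and retains the letters $a\lex{\leq}a_m$, where $a_m$ is the last letter of $\childc(v_n)$; this takes $O(\#D)$ time and each successor is described by a single new arc. For the elongation rule, the unique child of the new vertex ranges over $\alphabet{=}\subseteq\lbrace p+1,\ldots,n\rbrace$, giving at most $\#D$ successors, each of constant description size. For the widening rule, Algorithm~\ref{algo:widening} produces all minimal words of $\lang[<]{\childc(v_n)}$ in time proportional to their total length; by Proposition~\ref{prop:widening} there are at most $\#\alphabet{<}\leq\#D$ such words, each of length at most $\#\childc(v_n)+1\leq\deg(D)+1$, so the widening enumeration runs in $O(\#D\deg(D))$. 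Summing the three contributions yields $O(\#D\deg(D))$, establishing (i).

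For (ii), I would observe that constructing each successor by copy requires first duplicating $D$. Since $\child(\cdot)$ is a multiset, the total number of arcs of $D$ counted with multiplicity equals $\sum_{v\in D}\deg(v) = O(\#D\deg(D))$, so a single copy costs $O(\#D\deg(D))$. By Theorem~\ref{th:successors}, $D$ admits $O(\#D)$ successors in the enumeration tree, hence the aggregate copying cost is $O((\#D)^2\deg(D))$. Adding the enumeration cost from (i) gives the (loose) bound $O((\#D\deg(D))^2)$ claimed in the statement.

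The main obstacle is bookkeeping rather than anything combinatorial: one must choose a data structure for which the auxiliary information ($\childc(v_n)$, $\alphabet{=}$, $\alphabet{<}$, and the letter-by-letter content of the $\childc(v_i)$'s) is accessible in time consistent with the target bound, and one must consistently count arcs with multiplicity so that the $\deg(D)$ factor appears correctly in the size of $D$. Once this is settled, the rest reduces to Theorem~\ref{th:successors} together with the output-sensitive analysis of Algorithm~\ref{algo:widening}.
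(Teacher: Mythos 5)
Your proof is correct and follows essentially the same route as the paper's: a rule-by-rule cost analysis over the alphabets $\alphabet{<}$ and $\alphabet{=}$ (with Algorithm~\ref{algo:widening} handling widening) for the incremental case, and copy cost combined with the number of successors for the explicit case. The only noteworthy difference is bookkeeping: by multiplying the $O(\#D\deg(D))$ cost of one copy by the $\Theta(\#D)$ successors of Theorem~\ref{th:successors} you actually obtain the sharper intermediate bound $O((\#D)^2\deg(D))$, whereas the paper multiplies the copy cost by the entire per-rule work and lands directly on the looser stated bound $O\left((\#D\deg(D))^2\right)$.
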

\begin{proof}
Let $D=(v_0,\dots,v_n)$ be a FDAG with $n+1$ vertices, with $\childc(v_n) = a_0\cdots a_m$, $\alphabet{<}=\lbrace 0,\dots,p\rbrace$ and $\alphabet{=}=\lbrace p+1,\dots,n\rbrace$. Although the alphabets $\alphabet{<}$ and $\alphabet{=}$ can be retrieved in linear time, it is more efficient to maintain the pair $(n,p)$ during enumeration; how to update these indices has already been presented in Subsection~\ref{ss:rules}, when introducing each expansion rule.

\medskip
\noindent
The explicit construction of the successors in case $(ii)$ requires to copy the vertices of $D$ and their children, leading to a complexity in the order of $\sum_{i=0}^n (1+\deg(v_i))$, which can be roughly bounded by $(n+1)(\deg(D)+1)$.

\medskip
\noindent
Depending on the expansion rule, the complexity for computing the new vertex or new arc varies:
\begin{description}[leftmargin=!,labelwidth=2.5em,align=right]
\item[\branching] The last letter of $\childc(v_n)$ determines the number of successors -- but it is no more than $p+1$. In case $(i)$, although we could just store the information of the new letter, it is better to copy $\childc(v_n)$ and add the new letter and store the result. Indeed, this allows to always have the knowledge of $\childc(v_n)$ in the enumeration tree. The complexity for case $(i)$ is therefore bounded by $\deg(v_n)(p+1)$; whereas it is $p+1$ in case $(ii)$ since $\childc(v_n)$ is already copied.
\item[\elongation] Each successor is obtained by picking one element of $\alphabet{=}=\lbrace p+1,\dots, n\rbrace$. The complexity is exactly (up to a constant) $n-p$ in both cases.
\item[\widening] The successors are obtained by Algorithm~\ref{algo:widening}, involving copying subwords of $\childc(v_n)$ -- the overall complexity is bounded by $(p+1)\deg(v_n)$.
\end{description}
The overall complexity is therefore of the order of $2(p+1)\deg(v_n)+n-p$ in case $(i)$ and of $(n+1)(\deg(D)+1)\left[(p+1)(\deg(v_n)+1)+n-p\right]$ in case $(ii)$. Using rough bounds, with $\deg(v_n)\leq \deg(D)$ and $p\leq n$, we end up with the stated complexity.
\end{proof}

\medskip
\noindent
\begin{minipage}{0.48\textwidth}
\let\thempfootnote\thefootnote
Whereas Figure~\ref{fig:successors} shows the number of successors of 1,000 random FDAGs, we measured the time needed to compute \emph{explicitly} -- i.e., implying copy, which is case $(ii)$ in the previous Proposition~\ref{prop:time} -- these successors. The results are depicted in Figure~\ref{fig:successorstime}, where we plotted (in blue) the total time $t_D$ for computing all successors of a given FDAG $D$, and (in red) what we call \emph{amortized time}, i.e. $t_D / (\#D \deg(D))^2$. As expected from Proposition~\ref{prop:time}, one can observe an asymptotic quadratic behaviour for the total time (in blue); concerning amortized time (in red), despite some variability, the upper bound seems to be constant.

\medskip
\noindent
The computations have been made on a MacBook Pro (2014) with an Intel Core i7 2.8 GHz processor and 16 GB of RAM.

\end{minipage}\hfill
\begin{minipage}{0.48\textwidth}
\centering
\includegraphics[width=\textwidth]{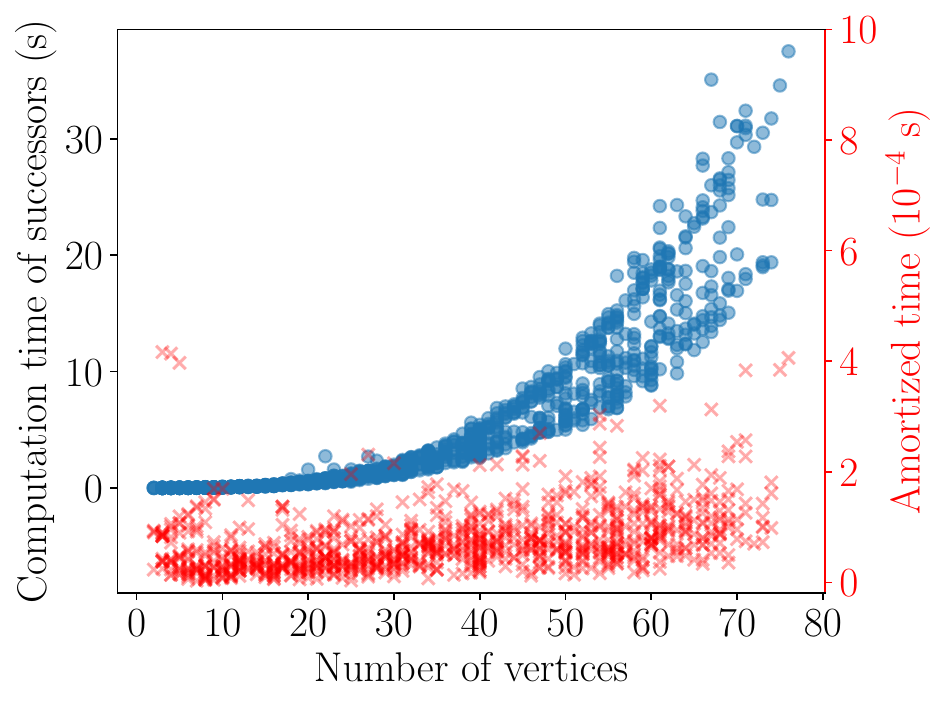}
\captionof{figure}{Total computation time (in blue) and amortized time (in red) for the explicit construction of the successors of the 1,000 random FDAGs of Figure~\ref{fig:successors}, according to their number of vertices.}
\label{fig:successorstime}
\end{minipage}

\subsection{Polynomial delay}\label{ss:poly_delay}

Let $E_{\leq K+1}=\bigcup_{k\leq K+1} E_k$ be the set of all FDAGs reachable in at most $K+1$ steps from the root of the FDAG enumeration tree. In this subsection, we show that the time complexity for enumerating $E_{\leq K+1}$ can be expressed as a function of the cardinality of $E_{\leq K}$ and has polynomial delay.

\begin{theorem}\label{th:polynomial_delay}
Enumerating $E_{\leq K+1}$ has time complexity $O(K^2 \#E_{\leq K})$.
\end{theorem}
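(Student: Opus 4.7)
The plan is to decompose the cost of enumerating $E_{\leq K+1}$ as a sum over its parents in the enumeration tree and then bound each term uniformly using Proposition~\ref{prop:time}. By Proposition~\ref{prop:bij}, every element of $E_{\leq K+1}\setminus\{D_0\}$ is the successor of a unique element of $E_{\leq K}$, so a depth-first traversal of the tree up to depth $K+1$ does work exactly $\sum_{D\in E_{\leq K}} T(D)$, where $T(D)$ is the time needed to compute incrementally all successors of $D$. Elements of $E_{K+1}$ contribute nothing to the sum since their successors are not explored.

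Second, I would derive uniform size and outdegree bounds on any $D \in E_{\leq K}$. Each \elongation{} or \widening{} step adds exactly one vertex, while \branching{} leaves the vertex set unchanged, so $D \in E_k$ implies $\#D \leq k+1$. For the outdegree: a \branching{} step increases $\deg(v_n)$ by one; an \elongation{} step introduces a new vertex of outdegree one; and a \widening{} step introduces a new vertex $v_{n+1}$ whose list of children is a minimal word of $\lang[<]{\childc(v_n)}$, which has length at most $\deg(v_n)+1$ by the explicit construction of minimal words recalled in Subsection~\ref{ss:words}. In each case $\max_v \deg(v)$ grows by at most one per step, so $\deg(D)\leq k$ for $D\in E_k$. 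Consequently $\#D\cdot\deg(D) = O(K^2)$ uniformly over $D \in E_{\leq K}$.

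Applying Proposition~\ref{prop:time}(i) then gives $T(D) = O(\#D\cdot\deg(D)) = O(K^2)$ for each parent, and summation yields $\sum_{D\in E_{\leq K}} T(D) = O(K^2\,\#E_{\leq K})$, which is the claimed bound. Note that, combined with Theorem~\ref{th:successors}, this also shows an amortized cost of $O(K)$ per enumerated FDAG, hence polynomial delay in the size of the outputs. The only subtle point is the degree bound under \widening{}, which rests on the $O(\deg(v_n)+1)$ length control on minimal words from Subsection~\ref{ss:words}; once this is accepted, the rest of the argument is routine bookkeeping on the expansion rules.
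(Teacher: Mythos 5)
Your proof is correct and follows essentially the same route as the paper's: the bounds $\#D\leq k+1$ and $\deg(D)\leq k$ for $D\in E_k$, case $(i)$ of Proposition~\ref{prop:time} giving $O(K^2)$ per parent, and a summation over all parents in $E_{\leq K}$ (the paper sums level by level as $O\bigl(\sum_{k\leq K}k^2\#E_k\bigr)$, you sum uniformly over $E_{\leq K}$ -- the same computation). Your explicit check that a widening step raises the maximal outdegree by at most one, via the length bound $\deg(v_n)+1$ on the minimal words of $\lang[<]{\childc(v_n)}$, is a detail the paper asserts without spelling out, and it is correct.
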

\begin{proof}
We adopt the configuration where we keep the enumeration tree in memory and where each node contains the incremental information to construct a FDAG from its predecessor.

\medskip
\noindent
We first observe the following: as $D_0$, the root, has one vertex and no arcs, and since the rules of expansion can only add one vertex and/or increase the degree of the last vertex by one, for any $D\in E_k$, it follows naturally that $\#D\leq k+1$ and $\deg(D)\leq k$. It implies that the time complexity for generating the successors of a FDAG in $E_k$ is $O(k^2)$, according to case $(i)$ of Proposition~\ref{prop:time}.

\medskip
\noindent
Thus, enumerating all the elements of $E_{k+1}$ requires a time complexity of $O(k^2 \#E_k)$.
It follows that we have a complexity of $O\left(\sum_{k\leq K} k^2 \#E_k\right)$ for enumerating $E_{\leq K+1}$. Since $k\leq K$ and $\sum_{k\leq K} \#E_K=\# E_{\leq K}$, we end up with the announced complexity.
\end{proof}

\medskip
\noindent
As such, our algorithm has a polynomial delay, which is desirable for this kind of enumeration \cite{johnson1988generating}.

\section{Variations on the enumeration tree}\label{sec:variant}

In this section, two variants of the enumeration tree presented in Section~\ref{sec:enumeration} are introduced. Subsection~\ref{ss:redundant} proposes a way to enumerate forests in their classical sense, i.e., where redundancies within the forest are accepted, by adding an extra step following the previous enumeration. Finally, options to constrain the enumeration tree -- on maximum number of vertices, height or outdegree -- and making it finite are proposed in Subsection~\ref{ss:constraint}.

\subsection{Extension to forests with repetitions}\label{ss:redundant}

The enumeration tree constructed in Section~\ref{sec:enumeration} only allows to enumerate, in their compressed form, irredundant forests, where no tree can be a subtree of (or equal to) another. In this subsection, we propose a method to enumerate forests in the usual sense, without this non-redundancy restriction.

\medskip
\noindent
Let $F$ be a forest in the classical sense, i.e., where some trees may be identical to or subtrees of other trees. If we compute $D=\red(F)$, by definition, all these redundancies will be lost: the trees which are subtrees of another will be compressed with these subtrees in the obtained DAG, and those which are identical will be compressed in the same source of the DAG. This is why we specified in Subsection~\ref{ss:def:treedag} that DAG compression is lossless if and only if the forest is irredundant.

\medskip
\noindent
We can preserve the information lost by the compression if we keep, in addition, a \emph{presence vector}. Let us rewrite $D=(v_0,\dots,v_n)$ according to the canonical order. Each tree $T\in F$ is associated with an index $i\in\lbrace 0,\dots,n\rbrace$ such that $T=\red^{-1}(D[v_i])$. The presence vector $\presence_F : \lbrace 0,\dots,n\rbrace \to \mathbb{N}$ is constructed such that $\presence_F(i)$ counts how many times the tree $\red^{-1}(D[v_i])$ appears in $F$. Thus, the couple $(D,\presence_f)$ completely characterizes the forest $F$. To enumerate all (redundant) forests, it is therefore sufficient to enumerate both all FDAGs (corresponding to irredundant forests) and the presence vectors that may be associated with them.

\medskip
\noindent
Let $D$ be an FDAG constructed in the FDAG enumeration tree. We define $\presence_D$ as the presence vector associated to the (irredundant) forest $\red^{-1}(D)$. This vector can be computed in a linear traversal of $D$, where the sources of $D$ are assigned a value of 1 and the other vertices are assigned a value of 0. Adding redundancies in a forest means incrementing the presence vector, each $+1$ resulting in a new tree, whether it is equal to an existing tree or a subtree of it.

\medskip
\noindent
Our strategy is to enumerate, from $\presence_D$, all presence vectors corresponding to forests whose DAG reduction would be exactly $D$. To do so, we use a reverse search structure, with the following expansion rule $(E)$. Let $j$ be the index of the last increment, initialized to $j=0$.

\begin{definition}[$E$]\label{rule:repetition}
Choose any index $j'\geq j$. Increase $\presence(j')$ by one and set $j\gets j'$.
\end{definition}

\noindent
This rule allows to get any presence vector from $\presence_D$ in a unique way, i.e., each index must be increased to its desired final value before moving to the next index. This defines an enumeration tree of presence vectors. If we implement this tree in such a way that each node contains only the new index $j'$, we obtain an algorithm that enumerates each presence vector from its parent in constant time and space. The growth of this (infinite) new enumeration tree is given by the following proposition.

\begin{proposition}
The number of redundant forests that can be constructed in at most $k\geq 1$ steps from $\red^{-1}(D)$ -- following expansion rule $(E)$ -- is given by $\binom{n+1+k}{k}-1.$
\end{proposition}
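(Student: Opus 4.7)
My plan is to translate the enumeration of presence vectors into a classical stars-and-bars count, then conclude with the hockey-stick identity.

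First I would observe that by Definition~\ref{rule:repetition}, the state after $i$ applications of $(E)$ is completely encoded by the sequence of chosen indices $j_1, j_2, \ldots, j_i$. Since $j$ is reset to $j'$ after each step and only indices $j' \geq j$ may be picked next, this sequence is forced to satisfy $0 \leq j_1 \leq j_2 \leq \cdots \leq j_i \leq n$. Conversely, every such non-decreasing sequence is a legal sequence of operations. So the set of length-$i$ histories is exactly the set of non-decreasing sequences of length $i$ with values in $\{0, 1, \ldots, n\}$.

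Next I would check that distinct histories yield distinct presence vectors, so that counting histories really counts redundant forests. Given a history $(j_1, \ldots, j_i)$, the resulting presence vector satisfies $\presence(\ell) = \presence_D(\ell) + \#\{m : j_m = \ell\}$ for every index $\ell$. The multiset of values of the $j_m$ is therefore recovered from $\presence - \presence_D$, and its non-decreasing rearrangement is unique. This is precisely the uniqueness statement already asserted in the text just before Definition~\ref{rule:repetition}, so the map from length-$i$ histories to reachable presence vectors is a bijection.

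Then I would invoke stars-and-bars: the number of non-decreasing sequences of length $i$ on an alphabet of size $n+1$ equals the number of multisets of size $i$ on $n+1$ symbols, which is $\binom{n+i}{i}$. Summing over $i = 1, \ldots, k$ (the case $i=0$ corresponds to $\red^{-1}(D)$ itself, which is not a newly-constructed redundant forest and accounts for the ``$-1$'' in the target formula), the total number of redundant forests constructible in at most $k$ steps is
\[
\sum_{i=1}^{k} \binom{n+i}{i} \;=\; \left(\sum_{i=0}^{k} \binom{n+i}{i}\right) - 1 \;=\; \binom{n+1+k}{k} - 1,
\]
where the last equality is the hockey-stick identity $\sum_{i=0}^{k} \binom{n+i}{i} = \binom{n+k+1}{k}$.

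None of the steps looks genuinely hard: the only point that requires a moment of care is ensuring that the monotonicity constraint on $j$ is exactly what is needed to make the ``sequence of operations to presence vector'' map a bijection (rather than merely surjective). I would make this explicit so that the stars-and-bars count is not conflated with the count of ordered insertion sequences, which would overcount by multinomial coefficients.
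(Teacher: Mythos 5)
Your proof is correct, but it takes a genuinely different route from the paper's. The paper counts the nodes of the presence-vector enumeration tree level by level: it introduces $n_p(j)$, the number of nodes at depth $p$ whose last-incremented index is $j$, derives the recursion $n_p(j)=\sum_{j'=0}^{j} n_{p-1}(j')$ from the branching structure of rule $(E)$, proves a closed form $n_p(j)=\binom{p-1+j}{j}$ by induction using the hockey-stick identity, and then sums over $p$ and $j$ with two further applications of that identity. You instead flatten the tree into its set of root-to-node paths, observe that the constraint $j'\geq j$ makes a length-$i$ history exactly a non-decreasing sequence in $\lbrace 0,\dots,n\rbrace$, i.e.\ a multiset of size $i$ on $n+1$ symbols, and count these directly by stars-and-bars as $\binom{n+i}{i}$, finishing with a single hockey-stick summation. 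Your argument is shorter and makes explicit a point the paper leaves implicit: that distinct histories yield distinct presence vectors (you recover the multiset from $\presence-\presence_D$), which is precisely what justifies equating tree nodes with redundant forests rather than overcounting ordered insertion orders. What the paper's recursive bookkeeping buys in exchange is finer information -- the closed form for $n_p(j)$ describes the population of each level of the enumeration tree, not just the cumulative total. Incidentally, the paper's stated closed form contains a typo ($\binom{k-1+j}{j}$ should read $\binom{p-1+j}{j}$), a pitfall your level-free derivation sidesteps entirely.
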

\begin{proof}
We first notice that the expected number is exactly the same as the number of presence vectors constructible in at most $k$ steps from $\presence_D$. We then notice that if the current node (in the presence vector enumeration tree) has index $j$, then it has $n+1-j$ successors by the expansion rule $(E)$ of Definition~\ref{rule:repetition}. For instance, since the starting index is $0$, for $k=1$, we obtain the indices $0,\dots,n$ in one copy each. We denote by $n_p(j)$ the number of times the index $j$ appears in the nodes obtained in exactly $p$ steps from the origin. Thus, $n_1(j)=1$ by the above. Each index $j'\leq j$ existing at step $p-1$ will induce a successor with index $j$ at step $p$, so that $n_p(j) = \sum_{j'=0}^j n_{p-1}(j')$.

\medskip
\noindent
We establish by induction on $p$ that $n_p(j) = \binom{k-1+j}{j}$, using the so called hockey-stick identity $\sum_{r=0}^m \binom{n+r}{r}=\binom{n+1+m}{m}$ \cite{jones1994generalized}. Since the number of presence vectors that can be constructed in at most $k\geq 1$ steps from $\presence_D$ is given by $\sum_{p=1}^k \sum_{j=0}^n n_p(j)$, we obtain the expected result after applying twice the hockey-stick identity.
\end{proof}

\medskip
\noindent
We can merge the enumeration tree of repetitions with the enumeration tree of FDAGs, to form a single enumeration tree, which enumerates forests in the classical sense (and in compressed form), as follows: the nodes of the enumeration tree carry a couple (FDAG, presence vector), and the available expansion rules are \branching, \elongation, \widening and $(E)$. However, successors created with the last rule produce branches where it becomes the only rule available. In other words, once one chooses repetition, one can not modify any longer the topology of the FDAG -- this is to ensure that each forest can only be enumerated in a unique way.

\subsection{Constraining the enumeration}\label{ss:constraint}
In \cite{nakano2003efficient}, the authors propose an algorithm to enumerate all trees with at most $n$ vertices. They simply check whether the current tree has $n$ vertices or not, and as their expansion rule adds one vertex at a time, they decide to cut a branch in the enumeration tree once they have reached $n$ vertices. Similarly, adding a vertex to a tree can only increase its height or outdegree, so we can proceed in the same way to enumerate all trees with maximal height $H$ and maximal outdegree $d$. Indeed, the number of trees satisfying those constraints is finite \cite[Appendix D.2]{azais:hal-01294013}.

\medskip
\noindent
This property also holds with the approach presented in Section~\ref{sec:enumeration}: following one of the three expansion rules, we can only increase the height, outdegree or number of vertices of the FDAG. So, it makes sense to define similar constraints on the enumeration. However, for this constrained enumeration to generate a finite number of FDAGs, constraints must be chosen wisely, as shown in the following proposition.

\begin{proposition}\label{prop:enum:constraints}
The enumeration tree of FDAGs is finite if at least one of those set of constraints is chosen:
\begin{enumerate}[label=(\roman*)]
\item maximum number of vertices $n$ and maximum outdegree $d$,
\item maximum height $H$ and maximum outdegree $d$.
\end{enumerate}
\end{proposition}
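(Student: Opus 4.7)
The plan is to observe that each of the three expansion rules strictly increases either the number of vertices of the current FDAG (in the case of \elongation{} and \widening, which each add exactly one vertex) or the outdegree of the last vertex $v_n$ (in the case of \branching, which appends one letter to $\childc(v_n)$). Thus, along any branch of the enumeration tree starting from $D_0$, the sequence of FDAGs has strictly increasing "size" measured by the pair (number of vertices, total number of arcs). Consequently, a branch is finite as soon as we can bound both the number of vertices and the total outdegree.

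For case $(i)$, the argument is essentially immediate. With at most $n$ vertices, the rules \elongation{} and \widening{} can be applied at most $n-1$ times along any branch, and \branching{} can be applied at most $d$ times to any fixed vertex $v_n$ before the constraint on outdegree is violated. Once $v_n$ reaches maximal outdegree, the only way to continue is to apply \elongation{} or \widening{} (because \branching{} operates only on the last vertex). Hence, every branch has depth at most $(n-1)(d+1)$, and since each node of the enumeration tree has a finite number of successors by Theorem~\ref{th:successors}, König's lemma yields finiteness of the whole tree.

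Case $(ii)$ reduces to case $(i)$ once one shows that bounded height $H$ and bounded outdegree $d$ already imply a bound on the number of vertices. This is the main step and relies on condition (\ref{eq:dag:forest}): distinct vertices of a FDAG must have distinct multisets of children. I would argue by induction on the height $h$, proving that the number $n_h$ of vertices of height exactly $h$ is bounded by a finite quantity depending only on $d$ and $n_0, \dots, n_{h-1}$. The base case $n_0 = 1$ was already established in the proof of Theorem~\ref{prop:dag:cutf}. For the inductive step, a vertex at height $h$ is determined by its multiset of children, which is a decreasing word of length at most $d$ on the alphabet of vertices of heights $< h$, i.e. on an alphabet of size $N_{h-1} := n_0 + \cdots + n_{h-1}$. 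Since the number of decreasing words of length at most $d$ on an alphabet of size $N_{h-1}$ is the finite quantity $\binom{N_{h-1}+d}{d}$, we obtain $n_h \leq \binom{N_{h-1}+d}{d}$, which closes the induction. Summing over $h \in \{0,\dots,H\}$ provides an explicit finite bound on the total number of vertices, reducing case $(ii)$ to case $(i)$.

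The main obstacle is the inductive counting argument in case $(ii)$; everything else is a direct consequence of the structure of the three expansion rules. I would emphasize in the write-up that condition (\ref{eq:dag:forest})---which is essentially the defining property of FDAGs---is what prevents the width $n_h$ from blowing up when only the height and outdegree are controlled.
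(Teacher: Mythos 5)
Your proposal is correct and takes essentially the same route as the paper: case $(i)$ follows directly from the fact that the rules only add vertices or arcs, and case $(ii)$ rests on bounding, height by height, the number of possible vertices, using condition (\ref{eq:dag:forest}) to identify a vertex with its multiset of children and counting multisets of size at most $d$ over the lower vertices. The only difference is cosmetic: the paper computes the exact maximal $n_h$ via stars and bars, whereas you settle for the cruder bound $n_h \leq \binom{N_{h-1}+d}{d}$, which is entirely sufficient for finiteness.
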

\begin{proof}
As \branching allows to add arcs indefinitely without changing the numbers of vertices, constraining on the maximum outdegree is mandatory in both cases. As the two others rules add vertices, constraining by the number of vertices leads to a finite enumeration tree -- (i) is proved. To conclude, we only need to prove that \widening can not be repeated an infinite number of times, i.e. there is only a finite number of new vertices that can be added at a given height, up to the maximum outdegree. This is achieved by virtue of the upcoming lemma.

\medskip
\noindent
Let $H>2$ and $d\geq 1$. Let $D$ be the FDAG constructed so that for each $0\leq h \leq H$, $D$ has the maximum possible number $n_h$ of vertices of height $h$ and with maximum outdegree $d$. Initial values are $n_0=1$ and $n_1=d$.

\begin{lemma}
$\forall 2\leq h\leq H,\quad n_h = \displaystyle\sum_{k=1}^d \binom{k+n_{h-1}-1}{k}\binom{d-k+n_0+\dots + n_{h-2}}{d-k}$.
\end{lemma}
Let $h\geq 2$ be fixed. To lighten the notation, let $n=n_{h-1}$ and $m=n_0+\dots + n_{h-2}$. Let $v$ be a vertex to be added at height $h$. For any vertex $v_i$ at height $h-1$, let $x_i$ be the multiplicity of $v_i$ in $\child(v)$ -- $0$ if $v_i\notin \child(v)$. Similarly, for any vertex $v_h$ with $\height(v_j)\leq h-2$, $y_j$ is the multiplicity of $v_j$ in $\child(v)$ -- possibly $0$. By definition of $\height(\cdot)$ -- see (\ref{eq:height}), at least one $x_i$ is non-zero. Therefore, there exist $k\in [\![1,d]\!]$ such that:
$$\begin{array}{cl}
&x_1+\dots +x_n =k\\
& y_1 + \dots + y_m \leq d-k \\
\end{array}$$
\noindent
By virtue of the stars and bars theorem, for a fixed $k$, there are $\binom{k+n-1}{k}$ choices for variables $x_i$, and $\binom{d-k+m}{d-k}$ for variables $y_j$. Summing upon all values for $k$ proves the claim.
\end{proof}

\begin{remark}
In the constrained enumeration proposed in \cite{nakano2003efficient}, all the trees with $n$ vertices are the leaves of the enumeration tree. To get all trees with $n+1$ vertices, it suffices to add to the enumeration all children of these leaves, i.e. trees obtained by adding a single vertex to them. This property -- moving from one parameter value to the next by enumerating just one step further -- does not hold anymore as soon as our set of constraints involve the maximum outdegree $d$, both for trees and FDAGs. For instance, from a FDAG of height $H$, one can obtain FDAG of height $H+1$ by using \elongation once and repeating \branching up to $d-1$ times.
\end{remark}

\section{Enumeration of forests of subtrees}\label{sec:mining}
Once the reverse search scheme has been set up to enumerate a certain type of structure, it is natural to move to a finer scale by using the same scheme to enumerate substructures. However, the notion of ``substructure'' is not obvious to derive from the main structure, as several choices are possible -- e.g. for trees one can think of subtrees \cite{vishwanathan2002fast, azais2020weight}, subset trees \cite{collins2002convolution}, etc. From a practical point of view, the enumeration of substructures permits to solve the frequent pattern mining problem -- which will be tackled in Section~\ref{sec:datamining}.

\medskip
\noindent
In this section we define forests of subtrees, which will be our substructures. Compressed as FDAGs, these objects will be called subFDAGs. We then address the problem of enumerating all subFDAGs appearing in an FDAG $D$ -- similar as the one of enumerating all subtrees of a tree.

\paragraph{Forests of subtrees}
Similarly to forest being tuple of trees, \emph{forests of subtrees} are tuple of subtrees, satisfying (\ref{eq:forest}). Formally:

\begin{definition}\label{def:stuff}
Let $F$ and $f$ be two forests. $f$ is a \emph{forest of subtrees} of $F$ if and only if
$$\forall t\in f,\; \exists T\in F,\; t\in \subtrees(T).$$
\end{definition}

\noindent
Forests of subtrees can be directly constructed from FDAGs, as shown by the upcoming proposition. Let $D$ be a FDAG, and $V$ be a subset of vertices of $D$.

\begin{proposition}\label{prop:fostdag}
If $\forall v\in V$, $\child(v)\subseteq V$, then $V$ defines a FDAG $\Delta$, such that $\red^{-1}(\Delta)$ is a forest of subtrees of $\red^{-1}(D)$.
\end{proposition}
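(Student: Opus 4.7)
The plan is to define $\Delta$ as the subgraph of $D$ induced on $V$, then verify that it is a FDAG and that its decompression is a forest of subtrees of $\red^{-1}(D)$.

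First, I would set $\Delta = (V, A_\Delta)$, where for every $u,v \in V$ the arc $uv$ appears in $A_\Delta$ with the same multiplicity as in $D$. The hypothesis $\child(v) \subseteq V$ for every $v \in V$ is essential here: it guarantees that the multiset of children of any $v \in V$, computed inside $\Delta$, coincides exactly with $\child(v)$ computed inside $D$ -- no arc leaves $V$, so none is discarded. Acyclicity of $\Delta$ is immediate, since any cycle in $\Delta$ would be a cycle in $D$. Hence $\Delta$ is a DAG.

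Next, I would verify that $\Delta$ satisfies condition (\ref{eq:dag:forest}), so that Theorem~\ref{prop:dag:cutf} yields that $\Delta$ is a FDAG and that $\red^{-1}(\Delta)$ is a well-defined forest. This step is direct: for any two distinct vertices $u, v \in V$, their multisets of children in $\Delta$ are the same as in $D$; since $D$ is itself a FDAG, these multisets are distinct, so $\Delta$ inherits property (\ref{eq:dag:forest}).

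Finally, I would check the subtree condition of Definition~\ref{def:stuff}. Let $s$ be any source of $\Delta$; it corresponds to one tree of $\red^{-1}(\Delta)$. Iterating the downward closure hypothesis along $\child$, all descendants of $s$ in $D$ also lie in $V$, so the subDAG $\Delta[s]$ coincides with $D[s]$. Consequently $\red^{-1}(\Delta[s]) = \red^{-1}(D[s])$, and by the very construction of DAG reduction (see Subsection~\ref{ss:def:treedag}), $\red^{-1}(D[s])$ is a subtree of some tree in $\red^{-1}(D)$. Thus every tree of $\red^{-1}(\Delta)$ is a subtree of some tree of $\red^{-1}(D)$, which is precisely Definition~\ref{def:stuff}.

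There is no real obstacle: the result essentially unpacks the definitions once the right object $\Delta$ is identified. The only point meriting careful statement is that the downward-closure hypothesis simultaneously (i) preserves multisets of children, so that $\Delta$ inherits the FDAG property from $D$, and (ii) forces $\Delta[s] = D[s]$ for every source $s$, so that the decompressions agree and the subtree relation follows.
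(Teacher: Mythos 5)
Your proof is correct and follows essentially the same route as the paper's: define $\Delta$ as the induced subgraph on $V$, observe that downward closure preserves the multisets of children so that $\Delta$ is a FDAG, and then identify $\Delta[\rho]$ with $D[\rho]$ for each source $\rho$ and locate a root $r$ of $D$ whose subDAG contains it. If anything, your verification that $\Delta$ inherits condition~(\ref{eq:dag:forest}) via Theorem~\ref{prop:dag:cutf} is spelled out more explicitly than in the paper, which simply asserts that removing the vertices outside $V$ yields a FDAG.
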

\begin{proof}
We recall from Subsection~\ref{ss:def:treedag} that the notation $D[v]$ stands for the subDAG of $D$ rooted in $v$ composed of $v$ and all its descendants $\des(v)$. The notation $\red^{-1}(D[v])$ stands for the tree compressed by $D[v]$.
The demonstration is in two steps. (i) Remove from $D$ the vertices that does not belong to $V$; as there are no arcs that leave $V$ by hypothesis, end up with a FDAG. Let us call $\Delta$ this FDAG. (ii) Let $\rho$ be a root of $\Delta$. By construction, $\rho$ is also a vertex in $D$. Among all roots of $D$, there exists a root $r$ such that $\rho\in \des(r)$. Therefore, $D[\rho]$ is a subDAG of $D[r]$, and then $t=\red^{-1}(D[\rho])$ is a subtree of $T=\red^{-1}(D[r])$ -- with $T\in F=\red^{-1}(D)$. As $\Delta[\rho]$ and $D[\rho]$ are isomorphic, $t\in f=\red^{-1}(\Delta)$. Therefore we have proved that $\forall t\in f,\; \exists T\in F, \; t\in \subtrees(T)$.
\end{proof}

\medskip
\noindent
We say that the FDAG $\Delta$ is a \emph{subFDAG}\footnote{Not to be confused with \emph{subDAG}, introduced in Subsection~\ref{ss:def:treedag}. A subDAG admits a single root and therefore compresses a single tree, whereas a subFDAG admits several roots and compresses a forest.} of $D$. Figure~\ref{fig:FoST} provides an example of such a construction.

\begin{figure}[h]
\centering
\def\xscale{0.7}
\def\yscale{0.7}
\def\nodescale{0.7}
\begin{subfigure}[t]{0.11\textwidth}
\begin{tikzpicture}[xscale=\xscale,yscale=\yscale]
\tikzstyle{noeud}=[draw,circle,fill=blue,scale=\nodescale*1]
\node[noeud] (0) at (0,0) {};
\tikzstyle{noeud}=[draw,circle,fill=magenta,scale=\nodescale*1]
\node[noeud] (1) at (-1,1) {};
\tikzstyle{noeud}=[draw,circle,fill=cyan,scale=\nodescale*1]
\node[noeud] (2) at (0,1) {};
\tikzstyle{noeud}=[draw,circle,fill=brown,scale=\nodescale*1]
\node[noeud] (3) at (1,1) {};
\tikzstyle{noeud}=[draw,circle,fill=orange,scale=\nodescale*1]
\node[noeud] (4) at (-1,2) {};
\tikzstyle{noeud}=[draw,circle,fill=yellow,scale=\nodescale*1]
\node[noeud] (5) at (0,2) {};
\tikzstyle{fleche}=[->-,>=latex,black]
\draw[fleche] (1)--(0) {};
\draw[fleche] (2)--(0) {}node [right,midway,scale=\nodescale] {$2$};
\draw[fleche] (3)--(0) {}node [right,midway,scale=\nodescale] {$3$};
\draw[fleche] (4)--(1) {};
\draw[fleche] (5)--(2) {};
\draw[fleche] (5)--(1) {}node [left,midway,scale=\nodescale] {$2$};

\tikzstyle{noeud}=[draw,circle,dashed,red,scale=1.8*\nodescale]
\node[noeud] (40) at (-1,2) {};
\node[noeud] (41) at (0,1) {};
\node[noeud] (42) at (-1,1) {};
\node[noeud] (43) at (0,0) {};

\end{tikzpicture}
\caption{\label{fig:fost:a}}
\end{subfigure}\hfill
\begin{subfigure}[t]{0.08\textwidth}
\begin{tikzpicture}[xscale=\xscale,yscale=\yscale]
\tikzstyle{noeud}=[draw,circle,fill=blue,scale=\nodescale*1]
\node[noeud] (0) at (0,0) {};
\tikzstyle{noeud}=[draw,circle,fill=magenta,scale=\nodescale*1]
\node[noeud] (1) at (-1,1) {};
\tikzstyle{noeud}=[draw,circle,fill=cyan,scale=\nodescale*1]
\node[noeud] (2) at (0,1) {};
\tikzstyle{noeud}=[draw,circle,fill=orange,scale=\nodescale*1]
\node[noeud] (4) at (-1,2) {};
\tikzstyle{fleche}=[->-,>=latex,black]
\draw[fleche] (1)--(0) {};
\draw[fleche] (2)--(0) {}node [right,midway,scale=\nodescale] {$2$};
\draw[fleche] (4)--(1) {};
\end{tikzpicture}
\caption{\label{fig:fost:b}}
\end{subfigure}\hfill
\begin{subfigure}[t]{0.15\textwidth}
\begin{tikzpicture}[xscale=\xscale,yscale=\yscale]
\tikzstyle{noeud}=[draw,circle,fill=blue,scale=\nodescale*1]
\node[noeud] (0) at (0,0) {};
\tikzstyle{noeud}=[draw,circle,fill=magenta,scale=\nodescale*1]
\node[noeud] (1) at (0,1) {};
\tikzstyle{noeud}=[draw,circle,fill=orange,scale=\nodescale*1]
\node[noeud] (4) at (0,2) {};

\node at (0,3) {$t_1$};

\tikzstyle{fleche}=[-,>=latex]
\draw[fleche] (1)--(0) {};
\draw[fleche] (4)--(1) {};

\def\xshift{2}

\tikzstyle{noeud}=[draw,circle,fill=blue,scale=\nodescale*1]
\node[noeud] (0c) at (-0.5+\xshift,0) {};
\node[noeud] (0d) at (0.5+\xshift,0) {};
\tikzstyle{noeud}=[draw,circle,fill=cyan,scale=\nodescale*1]
\node[noeud] (2) at (0+\xshift,1) {};
\tikzstyle{fleche}=[-,>=latex]
\draw[fleche] (2)--(0c) {};
\draw[fleche] (2)--(0d) {};

\node at (0+\xshift,3) {$t_2$};

\draw [decorate,decoration={brace,amplitude=5pt,raise=2ex}] (-0.5,3) -- (2.5,3) node[midway,yshift=2em]{$f=\lbrace t_1,t_2\rbrace$};

\end{tikzpicture}
\caption{\label{fig:fost:c}}
\end{subfigure}\hfill
\begin{subfigure}[t]{0.35\textwidth}
\begin{tikzpicture}[xscale=\xscale,yscale=\yscale]
\tikzstyle{noeud}=[draw,circle,fill=blue,scale=\nodescale*1]
\node[noeud] (0) at (0,0) {};
\tikzstyle{noeud}=[draw,circle,fill=magenta,scale=\nodescale*1]
\node[noeud] (1) at (0,1) {};
\tikzstyle{noeud}=[draw,circle,fill=orange,scale=\nodescale*1]
\node[noeud] (4) at (0,2) {};

\node at (0,3) {$T_1$};

\tikzstyle{fleche}=[-,>=latex]
\draw[fleche] (1)--(0) {};
\draw[fleche] (4)--(1) {};

\def\xshift{2}

\tikzstyle{noeud}=[draw,circle,fill=blue,scale=\nodescale*1]
\node[noeud] (0) at (0+\xshift,0) {};
\node[noeud] (0b) at (-1+\xshift,0) {};
\node[noeud] (0c) at (1+\xshift,0) {};
\node[noeud] (0d) at (2+\xshift,0) {};
\tikzstyle{noeud}=[draw,circle,fill=magenta,scale=\nodescale*1]
\node[noeud] (1) at (-1+\xshift,1) {};
\node[noeud] (1b) at (0+\xshift,1) {};
\tikzstyle{noeud}=[draw,circle,fill=cyan,scale=\nodescale*1]
\node[noeud] (2) at (1.5+\xshift,1) {};
\tikzstyle{noeud}=[draw,circle,fill=yellow,scale=\nodescale*1]
\node[noeud] (5) at (0+\xshift,2) {};
\tikzstyle{fleche}=[-,>=latex]
\draw[fleche] (1b)--(0) {};
\draw[fleche] (1)--(0b) {};
\draw[fleche] (2)--(0c) {};
\draw[fleche] (2)--(0d) {};
\draw[fleche] (5)--(2) {};
\draw[fleche] (5)--(1) {};
\draw[fleche] (5)--(1b) {};

\node at (0+\xshift,3) {$T_2$};

\def\xshift{6}

\tikzstyle{noeud}=[draw,circle,fill=blue,scale=\nodescale*1]
\node[noeud] (0) at (0+\xshift,0) {};
\node[noeud] (0b) at (-1+\xshift,0) {};
\node[noeud] (0c) at (1+\xshift,0) {};
\tikzstyle{noeud}=[draw,circle,fill=brown,scale=\nodescale*1]
\node[noeud] (3) at (0+\xshift,1) {};

\tikzstyle{fleche}=[-,>=latex]

\draw[fleche] (3)--(0) {};
\draw[fleche] (3)--(0b) {};
\draw[fleche] (3)--(0c) {};

\node at (0+\xshift,3) {$T_3$};

\draw [decorate,decoration={brace,amplitude=5pt,raise=2ex}] (-0.5,3) -- (7.5,3) node[midway,yshift=2em]{$F=\lbrace T_1,T_2,T_3\rbrace$};

\end{tikzpicture}
\caption{\label{fig:fost:d}}
\end{subfigure}~
\caption{Construction of a forest of subtrees from FDAG. (\subref{fig:fost:a}) A FDAG $D$. The set $V$ is circled in red. (\subref{fig:fost:b}) The FDAG $\Delta$ (\subref{fig:fost:c}) The forest $f$ compressed by $\Delta$. (\subref{fig:fost:d}) The forest $F$ compressed by $D$. One can spot that $t_1\in \subtrees(T_1)$, $t_2\in\subtrees(T_2)$ so $f$ is a forest of subtrees of $F$, and $\Delta$ a subFDAG of $D$.}
\label{fig:FoST}
\end{figure}

\paragraph{Enumeration of subFDAGs} We now solve the following enumeration problem: given a forest $F$, find all forests of subtrees of $F$. Equally, given a FDAG $D$, find all subFDAGs of $D$. To address this, we make extensive use of the reverse search technique, adapting the one presented in Section~\ref{sec:enumeration}.

\medskip
\noindent
Since a subFDAG is also a FDAG, it admits successors in the enumeration tree defined in Section~\ref{sec:enumeration}. We are interested in those of these successors who are also subFDAGs (if any). In fact, since a subFDAG can be defined from a set of vertices, all one has to do is determine which new vertex can be chosen to expand an existing subFDAG -- corresponding to a \elongation or \widening step.The covering of all added new arcs is implicit in this construction and corresponds to some steps of \branching.

\medskip
\noindent
Let $\Delta$ be a subFDAG of $D$ and $v$ its last inserted vertex --  it is also the vertex with the largest ordering number in $\Delta$. We denote by $S(\Delta)$ the set of all vertices $v'\in D$ that can be added to $\Delta$ to expand it to a new subFDAG. Let us call $S(\Delta)$ the \emph{set of candidate vertices} of $\Delta$. More precisely:

\begin{lemma}
$S(\Delta)$ is the set of vertices $v'\in D$ that satisfies both:
\begin{enumerate}[label=(\roman*)]
\item $\child(v') \subseteq \Delta$
\item $\topord(v') > \topord(v)$
\end{enumerate}
where $\topord(\cdot)$ is the canonical ordering of $D$.
\end{lemma}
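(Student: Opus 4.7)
I would prove the set equality by double inclusion, using as a key preliminary the fact that the canonical ordering of any subFDAG $\Delta$ of $D$ is the restriction of $\topord_D$ to the vertex set of $\Delta$.

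First I would verify this preliminary. By Proposition~\ref{prop:fostdag}, $\Delta$ corresponds to a vertex set $V \subseteq D$ closed under $\child(\cdot)$, so $\child_D(u) = \child_\Delta(u)$ for all $u \in V$ and consequently $\height_D(u) = \height_\Delta(u)$ by the recursive definition of height. Restricting $\topord_D$ to $V$ and renumbering to $[\![0, \#V - 1]\!]$ in an order-preserving way yields a topological ordering of $\Delta$ that trivially satisfies~(\ref{eq:topord:h}) since heights coincide, and~(\ref{eq:topord:child}) since an order-preserving relabeling preserves lexicographic comparisons of decreasing words. By the uniqueness part of Theorem~\ref{prop:dag:cutf}, this restriction is the canonical ordering of $\Delta$.

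For the forward inclusion, suppose $v' \in S(\Delta)$, so that $\Delta' = \Delta \cup \{v'\}$ is a subFDAG obtained from $\Delta$ by the insertion of $v'$ via an \elongation or \widening step (possibly followed by \branching steps to install the remaining arcs incident to $v'$). Proposition~\ref{prop:fostdag} yields $\child(v') \subseteq \Delta'$, and since a DAG has no self-loops, $v' \notin \child(v')$, proving~(i). Because both \elongation and \widening introduce the new vertex as the maximum of the canonical ordering of the resulting FDAG, $v'$ is the maximum of the canonical ordering of $\Delta'$; the preliminary then translates this into $\topord_D(v') > \topord_D(u)$ for every $u \in \Delta$, and in particular for $u = v$, proving~(ii).

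For the reverse inclusion, suppose $v'$ satisfies (i) and (ii). Condition~(i) ensures that $V' = \Delta \cup \{v'\}$ is closed under $\child(\cdot)$, so Proposition~\ref{prop:fostdag} gives a subFDAG $\Delta'$. By the preliminary and condition~(ii), $v'$ is the maximum of the canonical ordering of $\Delta'$. Constraint~(\ref{eq:topord:h}) then forces $\height_D(v') \geq \height_D(v)$: if equality holds, $v'$ can be inserted into $\Delta$ via \widening; if the inequality is strict, via \elongation; in both cases followed by \branching steps that install the remaining arcs of $v'$ toward its $D$-children (all of which lie in $\Delta$ by~(i)). Hence $v' \in S(\Delta)$.

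The main obstacle is the preliminary, specifically the transfer of the lexicographic constraint~(\ref{eq:topord:child}) through the restriction and renumbering of $\topord_D$; once this is settled, both inclusions follow directly from Proposition~\ref{prop:fostdag} and from the way \elongation and \widening extend the canonical ordering.
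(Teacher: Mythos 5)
Your proposal is correct and takes essentially the same route as the paper's proof: condition (i) is exactly the closure hypothesis of Proposition~\ref{prop:fostdag}, and condition (ii) is justified through the same dichotomy -- $\height(v')=\height(v)+1$ giving an elongation step, $\height(v')=\height(v)$ with $\childc(v')\lex{>}\childc(v)$ giving a widening step, with the remaining arcs supplied by implicit branching steps. Your explicit preliminary (that the canonical ordering of a subFDAG is the order-preserving restriction of $\topord$ on $D$, via the uniqueness in Theorem~\ref{prop:dag:cutf}) is left implicit in the paper's terser proof, so you have simply carried out the same argument in more detail.
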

\begin{proof}
(i) This condition is necessary so that $\Delta'=\Delta \cup \lbrace v'\rbrace$ fulfill the requirements for Proposition~\ref{prop:fostdag}. (ii) This condition is necessary so that $\Delta'$ remains a FDAG. As $\topord(v') > \topord(v)$, either $\height(v')=\height(v)+1$ -- then it is a \elongation step -- or $\height(v')=\height(v)$ and $\childc(v')\lex{>}\childc(v)$ -- for a \widening step.
\end{proof}

\begin{wrapfigure}[10]{R}{0.5\textwidth}
\vspace{-\baselineskip}
\begin{algorithm}[H]
\caption{\textsc{Heirs}}\label{algo:extendpattern}
\KwIn{$D, \big[\Delta, S(\Delta)\big]$}
Set $L$ to the empty list\\
\For{$s\in S(\Delta)$}{
Let $S'$ be a copy of $S(\Delta)$\\
$S'\gets S'\setminus \lbrace v' \in S' : \childc(v')\lex{\leq} \childc(s)\rbrace$\\
$S' \gets S'\cup \big\lbrace v'\in D : s\in {\child}(v') \subseteq \Delta\cup \lbrace s \rbrace  \big\rbrace$\\
Add $\big[\Delta\cup \lbrace s\rbrace , S'\big]$ to $L$}
\Return{$L$}
\end{algorithm}
\end{wrapfigure}

\medskip
\noindent
When $S(\Delta)$ is not empty, picking $s\in S(\Delta)$ ensure that $\Delta'=\Delta \cup \lbrace s\rbrace$ is a subFDAG of $D$. With respect to the enumeration tree of Section~\ref{sec:enumeration}, $\Delta$ is an ancestor of $\Delta'$ -- but not necessarily its parent, since the steps of \branching are implicit. $\Delta'$ is called an \emph{heir} of $\Delta$. We can in turn calculate $S(\Delta')$, by updating $S(\Delta)$: (i) remove from $S(\Delta)$ all vertices $v'$ such that $\topord(s)>\topord(v')$; (ii) in $D$, look only after the vertices $v'$ such that $s\in \child(v')\subseteq \Delta\cup\lbrace s\rbrace$ and add them to $S(\Delta')$.

\medskip
\noindent
If $\Delta'$ is an heir of $\Delta$, then by removing the last inserted vertex of $\Delta'$, one can retrieve $\Delta$. This define a reduction rule $f$, and therefore an enumeration tree. Algorithm~\ref{algo:extendpattern} is meant to construct the set $f^{-1}(\Delta)$. Applying Algorithm~\ref{algo:reverse:def} together with it, and starting from $\Delta= \leaves(D)$\footnote{where $\leaves(D)$ designates the leaf of $D$, i.e. the only vertex without children.} -- in this case, $S(\Delta)$ is the set of parents of $\leaves(D)$ of height $1$ -- permits to enumerate all subFDAGs of $D$. Figure~\ref{fig:stuff:enum} provide an example by enumerating all subFDAGs of the FDAG of Figure~\ref{fig:ordering:canonical}.


\begin{figure}
\centering
\def\scale{0.7}
\def\xdim{2*\scale}
\def\ydim{1.6*\scale}
\begin{tikzpicture}[xscale=\xscale,yscale=\yscale]

\tikzstyle{data}=[rectangle split,rectangle split parts=2,draw,text centered]

\node[data,scale=\scale] (0) at (0,0) {$0$ \nodepart{second} $1,2,3$};
\node[data,scale=\scale] (01) at (-2*\xdim,-\ydim) {$0,\textcolor{red}{1}$ \nodepart{second} $2,3,\textcolor{red}{4}$};
\node[data,scale=\scale] (02) at (0,-\ydim) {$0,\textcolor{red}{2}$ \nodepart{second} $3$};
\node[data,scale=\scale] (023) at (0,-2*\ydim) {$0,2,\textcolor{red}{3}$ \nodepart{second} $\emptyset$};
\node[data,scale=\scale] (03) at (\xdim,-\ydim) {$0,\textcolor{red}{3}$ \nodepart{second} $\emptyset$};

\node[data,scale=\scale] (012) at (-4*\xdim,-2*\ydim) {$0,1,\textcolor{red}{2}$ \nodepart{second} $3,4,\textcolor{red}{5}$};
\node[data,scale=\scale] (013) at (-2*\xdim,-2*\ydim) {$0,1,\textcolor{red}{3}$ \nodepart{second} $4$};
\node[data,scale=\scale] (0134) at (-2*\xdim,-3*\ydim) {$0,1,3,\textcolor{red}{4}$ \nodepart{second} $\emptyset$};
\node[data,scale=\scale] (014) at (-1*\xdim,-2*\ydim) {$0,1,\textcolor{red}{4}$ \nodepart{second} $\emptyset$};

\node[data,scale=\scale] (0123) at (-5*\xdim,-3*\ydim) {$0,1,2,\textcolor{red}{3}$ \nodepart{second} $4,5$};
\node[data,scale=\scale] (0124) at (-4*\xdim,-3*\ydim) {$0,1,2,\textcolor{red}{4}$ \nodepart{second} $5$};
\node[data,scale=\scale] (01245) at (-4*\xdim,-4*\ydim) {$0,1,2,4,\textcolor{red}{5}$ \nodepart{second} $\emptyset$};
\node[data,scale=\scale] (0125) at (-3*\xdim,-3*\ydim) {$0,1,2,\textcolor{red}{5}$ \nodepart{second} $\emptyset$};

\node[data,scale=\scale] (01234) at (-6*\xdim,-4*\ydim) {$0,1,2,3,\textcolor{red}{4}$ \nodepart{second} $5$};
\node[data,scale=\scale] (012345) at (-6*\xdim,-5*\ydim) {$0,1,2,3,4,\textcolor{red}{5}$ \nodepart{second} $\emptyset$};
\node[data,scale=\scale] (01235) at (-5*\xdim,-4*\ydim) {$0,1,2,3,\textcolor{red}{5}$ \nodepart{second} $\emptyset$};

\draw[->] (0)--(01) {};
\draw[->] (0)--(02) {};
\draw[->] (0)--(03) {};
\draw[->] (02)--(023) {};
\draw[->] (01)--(012) {};
\draw[->] (01)--(013) {};
\draw[->] (01)--(014) {};
\draw[->] (013)--(0134) {};
\draw[->] (012)--(0123) {};
\draw[->] (012)--(0124) {};
\draw[->] (012)--(0125) {};
\draw[->] (0124)--(01245) {};

\draw[->] (0123)--(01234) {};
\draw[->] (0123)--(01235) {};
\draw[->] (01234)--(012345) {};

\end{tikzpicture}

\caption{Enumeration tree of the subFDAGs of the FDAG of Figure~\ref{fig:ordering:canonical}, using both Algorithm~\ref{algo:reverse:def} and Algorithm~\ref{algo:extendpattern}. The indices of the vertices correspond to the canonical ordering defined in Figure~\ref{fig:ordering:canonical}. In each vertex, the upper part corresponds to the current subFDAG $\Delta$ whereas the lower part stands for the set $S(\Delta)$. Numbers in \textcolor{red}{red} indicate what changes for an heir compared to its parent.}
\label{fig:stuff:enum}
\end{figure}

\section{Frequent subFDAG mining problem}\label{sec:datamining}

Using the reverse search formalism defined in Subsection~\ref{ss:def:reverse}, the \emph{frequent pattern mining problem} can be formulated as follows: from a dataset $X=\lbrace s_1,\dots, s_n \rbrace$ with $s_i \in \mathbb{S}$, and a fixed threshold $\sigma$, find all elements $s\in\mathbb{S}$ that satisfy $\text{freq}(s,X) \geq \sigma$, where $\text{freq}(\cdot)$ is a function that counts the frequency of appearance of $s$ in the dataset $X$. This problem can be solved using Algorithm~\ref{algo:reverse:def} with the function $g(s,X,\sigma) = (\text{freq}(s,X) \geq \sigma )$, which is trivially anti-monotone.

\medskip
\noindent
We emphasize here that each possible definition of ``$s$ appears in $X$'' leads to a different data mining problem. The choice of this definition is therefore of prime importance. In particular, this choice should induce a way of calculating $\text{freq}(s,X)$ that reflects the specificity of the reduction rule $f$, so that $\lbrace s\in f^{-1}(s_0) | g(s)=\top\rbrace$ can be constructed directly, instead of first generating $f^{-1}(s_0)$ and then filtering according to the value of $g$. Indeed, if $g$ is too restrictive, and $f^{-1}(s_0)$ too large, one would have to enumerate objects that are not relevant to the enumeration problem, which is not desirable.

\medskip
\noindent
In this article, the problem we consider is the following: given a set of trees $X=\lbrace T_1,\dots, T_n\rbrace$, account for forests of subtrees that appear simultaneously in different $T_i$'s. In other words, if we denote $\forest_i$ the set of all forests of subtrees appearing in the forest formed by $\lbrace T_i\rbrace$, we are interested in the study of $\cap_{i\in I_\sigma} \forest_i$ where $I_\sigma\subseteq [\![1,n]\!]$, such that $\#I_\sigma\geq \sigma \cdot n$.

\medskip
\noindent
A first, naive strategy would be to first build the $\forest_i$'s, e.g. by using Algorithm~\ref{algo:extendpattern} on $\red(T_i)$, and then construct $\cap_{i\in I_\sigma} \forest_i$ for all possible choices of $I_\sigma$. Obviously, this approach has its weaknesses: (i) many subFDAGs will be enumerated for nothing or in several copies, and (ii) it does not take into account that $X$ is itself a forest. Our aim is to propose a variant of Algorithm~\ref{algo:extendpattern} that, applied to $\red(X)$, would enumerate only subFDAGs appearing in the $\red(T_i)$'s with a large enough frequency.

\medskip
\noindent
Given a forest $F=\lbrace T_1,\dots, T_n\rbrace$ and its DAG compression $D=\red(F)$, we have to retrieve, for each vertex in $D$, their origin in the dataset, that is, from which tree they come from. This issue has already been addressed in a previous article \cite[Section 3.3]{azais2020weight}, and has led to the concept of \emph{origin}. For any vertex $v\in D$, the origin of $v$ is defined as
$$\origin(v) = \big\lbrace i \in [\![1,n]\!] : {\red}^{-1}(D[v])\in \subtrees(T_i)  \big\rbrace, $$
where $\red^{-1}(D[v])$ designates the tree compressed by the subDAG $D[v]$ rooted in $v$. In other words, $\origin(v)$ represents the set of trees for which $\red^{-1}(D[v])$ is a subtree. We state in \cite[Proposition 3.4]{azais2020weight} that origins can be iteratively computed in one exploration of $D$. The proof lies in the property that if $i\in\origin(v)$, then for all $v'\in\des(v)$, $i\in\origin(v')$ -- as $\red^{-1}(D[v'])\in\subtrees(\red^{-1}(D[v]))$.

\medskip
\noindent
Let $\Delta$ be a subFDAG of $D$. For $\Delta$ to compress a forest of subtrees of a tree $T_i$, it is necessary that $i\in\origin(v)$ for all $v\in \Delta$. Therefore, the set of trees for which $\Delta$ compress a forest of subtrees -- the \emph{origin} of $\Delta$, denoted by $\stufforigin(\Delta)$ -- is equal to
$$\stufforigin(\Delta) = \bigcap_{v\in \Delta} \origin(v).$$

\noindent
If $\Delta'= \Delta \cup \lbrace s \rbrace$ is an heir of $\Delta$ -- as defined earlier, then $\stufforigin(\Delta') = \stufforigin(\Delta) \cap \origin(s)$. Algorithm~\ref{algo:extendpattern} can therefore be refined so that $\Delta'$ should be ignored if $\stufforigin(\Delta')=\emptyset$ -- as $\Delta'$ does not anymore compresses any forest of subtrees actually present in the trees of $F$.

\medskip
\noindent
So far we neglected the threshold $\sigma$. We only want to keep subFDAGs that appear in at least $\sigma\%$ of the data. If $\# \stufforigin(\Delta) / \#F < \sigma$, then the successors of $\Delta$ are not investigated. Indeed, as $\stufforigin(\cdot)$ is a decreasing function, successors of $\Delta$ can not exceed the threshold again.

\noindent
\begin{minipage}{0.35\textwidth}
We can finally introduce Algorithm~\ref{algo:extendpatterndata} that solves the frequent subFDAG mining problem for trees. With the notations of Subsection~\ref{ss:def:reverse}, this algorithm builds the set $\lbrace \Delta' \in f^{-1}(\Delta) | \text{freq}(\Delta', F) \geq \sigma \rbrace$, with $\text{freq}(\Delta',F)= \#\stufforigin(\Delta')/\#F$. The set is also built directly, without any posterior filtering, which is suitable as discussed at the beginning of the present subsection.
\end{minipage}
\hfill
\begin{minipage}{0.6\textwidth}
\begin{algorithm}[H]
\caption{\textsc{FrequentHeirs}}\label{algo:extendpatterndata}
\KwIn{$D=\red(F), \big[\Delta, S(\Delta), \stufforigin(\Delta) \big]$, $\sigma$}
Set $L$ to the empty list\\
\For{$s\in S(\Delta)$}{
\If{$\stufforigin(\Delta)\cap \origin(s) \neq \emptyset$ \textup{\textbf{and}} $\#(\stufforigin(\Delta)\cap\origin(s)) \geq \sigma \cdot \#F$}{
Let $S'$ be a copy of $S(\Delta)$\\
$S'\gets S'\setminus \lbrace v' \in S' : \childc(v')\lex{\leq} \childc(s)\rbrace$\\
$S' \gets S'\cup \big\lbrace v'\in D : s\in {\child}(v') \subseteq D_0\cup \lbrace s \rbrace  \big\rbrace$\\
Add $\big[\Delta\cup \lbrace s\rbrace , S', \stufforigin(\Delta)\cap\origin(s)\big]$ to $L$}}
\Return{$L$}
\end{algorithm}
\end{minipage}

\medskip
\noindent
We stated earlier that we wanted to avoid generating unnecessary or multiple copies of subFDAGs, which is achieved with Algorithm~\ref{algo:extendpatterndata}. We now empirically study what we have gained from this, by comparing the use of Algorithm~\ref{algo:extendpatterndata} on $D=\red(F)$, with the use of Algorithm~\ref{algo:extendpattern} on each $\red(T_i)$. As in Subsection~\ref{ss:branching}, we generated 1 000 random FDAGs $D_k$, 10 repetitions for each $k\in \lbrace 1,\dots, 100\rbrace$, creating $D_k$ as in Definition~\ref{def:random}. We assume $D_k=\red(f)$ where $f= \lbrace \red^{-1}(D_k[r]) : r \text{ source of } D_k\rbrace$. For each $D_k$, we have computed the quotient

$$Q(D_k) = \frac{\text{number of subFDAGs of } D_k \text{  enumerated via Algorithm~\ref{algo:extendpatterndata}}}{\displaystyle\sum_{r \text{ source of } D_k} \text{number of subFDAGs of } D_k[r] \text{ enumerated via Algorithm~\ref{algo:extendpattern}}}$$

\noindent
with parameter $\sigma=0$ when using Algorithm~\ref{algo:extendpatterndata}. The results are provided in Figure~\ref{fig:mined_stuff}. Despite a rather marked variability, there is a general trend of decreasing as the number of vertices increases. We obtain fairly low quotients, around 20\%, quite quickly. Given the combinatorial explosion of the objects to be enumerated, such an advantage is of the greatest interest.

\medskip
\noindent
\begin{minipage}{0.45\textwidth}
\subsection*{Implementation}
The \verb+treex+ library for Python \cite{azais2019treex} is designed to manipulate rooted trees, with a lot of diversity (ordered or not, labeled or not). It offers options for random generation, visualization, edit operations, conversions to other formats, and various algorithms. The enumeration of Section~\ref{sec:enumeration} and the algorithms of Section~\ref{sec:mining} and~\ref{sec:datamining} have been implemented as a module of \verb+treex+ so that the interested reader can manipulate the concepts discussed in this paper in a ready-to-use manner. Installing instructions and the documentation of \verb+treex+ can be found from  \cite{azais2019treex}.
\end{minipage}\hfill
\begin{minipage}{0.5\textwidth}
\centering
\includegraphics[width=\textwidth]{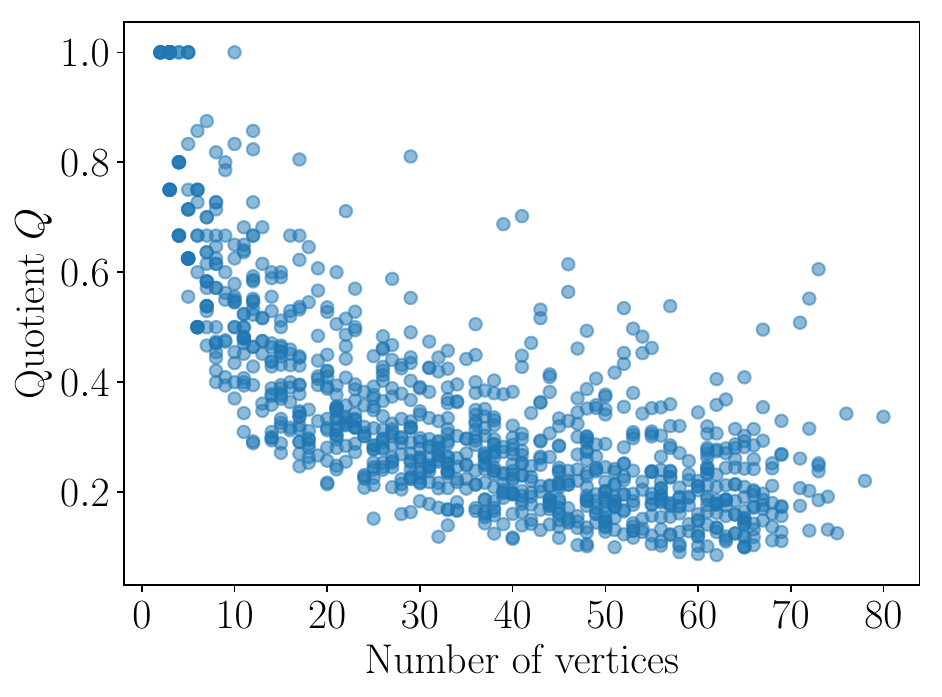}
\captionof{figure}{Quotient $Q(D)$ according to the number of vertices of $D$. Here 1 000 random FDAGs are displayed.}
\label{fig:mined_stuff}
\end{minipage}

\subsection*{Acknowledgments}
This work has been supported by the European Union's H2020 project ROMI. The authors would like to thank Dr. Arnaud Mary for his helpful suggestions on the draft of the article. Finally, the authors would like to thank two anonymous reviewers for their valuable comments, which help them to significantly improve the quality of their manuscript.

{\small

\bibliography{main}}

\appendix

\section{A bijection between FDAGs and row-Fishburn matrices}\label{sec:dagmatrix}

This section is dedicated to the proof of Theorem~\ref{th:bijdagmatrix}, which is in two steps. First, we recall the natural bijection between FDAGs and their adjacency matrices; the latter are then put into bijection with the row-Fishburn matrices.

\paragraph{FDAG $\leftrightarrow$ Reduced adjacency matrix}
Let $D=(v_0,\dots,v_n)$ be a FDAG constructed in $k$ steps from $D_0$ in the enumeration tree defined in Subsection~\ref{ss:enumtree}. The adjacency matrix of $D$ is defined as $A=(A_{i,j})_{i,j \in [\![n,0]\!]^2}$ where, if $m$ is the multiplicity of $v_j$ in $\child(v_i)$, then $A_{i,j}=m$ -- possibly $0$ if $v_j\notin\child(v_i)$. By construction of $D$, as $v_n$ is the last inserted vertex, it has no parents, so $A_{n,\cdot}$ is a column of zeros; and as $v_0$ is a leaf, it has no children, so $A_{\cdot,0}$ is a row of zeros. We define the \emph{reduced} adjacency matrix $M$ as the matrix $A$ deprived of this column and this row. Therefore, $M=(A_{i,j})_{i\in [\![n,1]\!], j\in [\![n-1,0]\!]}$. As a vertex can not be a parent to any vertex introduced after it, we have $A_{i,j}=0$ for all $i\leq j$ -- so that $M$ is an upper-triangular matrix. In addition, as all vertices except $v_0$ have at least one child, there is at least one non-zero entry in each row of $M$. Therefore, $M$ is a row-Fishburn matrix. However, we have no guarantee that this matrix verifies $\text{size}(M)=k$.

\paragraph{Reduced adjacency matrix $\rightarrow$ Incremental adjacency matrix} Let $D=(v_0,\dots,v_n)$ be a FDAG, and $M$ its reduced adjacency matrix. Let $M_i$ be the row of $M$ corresponding to $\childc(v_i)$. The incremental adjacency matrix $\hat{M}$ is defined as:
\begin{align*}[left=\empheqlbrace]
\hat{M}_1&= M_1\\
\hat{M}_{i+1}&=M_{i+1}  \ominus M_i
\end{align*}
where the $\ominus$ operation is defined as follow: given two rows $a_0\cdots a_n$ and $b_0\cdots b_n$, then denoting $j=\min \lbrace i : a_i\neq b_i\rbrace$, and $c=a_j-b_j$,

\begin{center}
\begin{tabular}{c@{\,}c@{\,}c@{\,}c@{\,}c@{\,}c}
  &   $a_0$&$\cdots $&$a_{j-1}$&\colorbox{red!50}{$a_j$}&\colorbox{blue!50}{$a_{j+1}\cdots a_n$}\\
 $ \ominus$&$b_0$&$\cdots$ &$b_{j-1}$&\colorbox{red!50}{$b_j$}&$b_{j+1}\cdots b_n$ \\
\hline
=  &$0$&$\cdots$ &$0$&\colorbox{red!50}{$c$}&\colorbox{blue!50}{$a_{j+1}\cdots a_n$}\\
\end{tabular}\quad.
\end{center}

\noindent
We claim that this new matrix $\hat{M}$ is a row-Fishburn matrix of size $k$, if $D\in E_k$. Actually, since $M$ was already a row-Fishburn matrix, we just have to check that the size is correct. Let us consider $v_i$ and $v_{i+1}$. The vertex $v_{i+1}$ has been constructed from $v_i$ by using either \elongation or \widening, and potentially several \branching after that -- let us say $p\geq 0$ times. Therefore, if the claim is correct, the sum over $\hat{M}_{i+1}$ should be exactly $p+1$. Consider the operation by which $v_{i+1}$ was added in the first place:

\begin{description}[leftmargin=!,labelwidth=2.5em,align=right]
\item[\elongation] $\childc(v_{i+1})$ is reduced to a single element $a$, such that $a\lex{>}\childc(v_i)$. Therefore, the index $j$ of the first non-zero coefficient of $M_{i+1}$ is ahead of the one of $M_i$ so that the coefficient \colorbox{red!50}{$c$} of $\ominus$ is equal to the $j$-th coefficient of $M_{i+1}$ minus zero. Since the \branching rule adds children to respect decreasing words, the $p$ extra coefficients are added to the right of the $j$-th coefficient (including it) and therefore they are kept unchanged in the $\ominus$ operation. Eventually, the sum over $M_{i+1}$ is $p+1$ and so is the sum over $\hat{M}_{i+1}$.
\item[\widening] $\childc(v_{i+1})$ is built from $\childc(v_i)$ with Algorithm~\ref{algo:widening}, and therefore they (i) share a common prefix, possibly empty and (ii) then differ by a single letter. The index of that letter in $M_{i+1}$ corresponds to the index $j$ defined in $\ominus$. Therefore, the coefficient \colorbox{red!50}{$c$} is -- before any \branching -- equal to one. The argument of \branching letters being added to the right of $j$ still hold and therefore the sum over $\hat{M}_{i+1}$ is also $p+1$.
\end{description}

\noindent
To conclude the proof, we have to exhibit the inverse function of the mapping we just defined. This will prove that this mapping is indeed a bijection, and then the theorem holds.

\paragraph{Incremental adjacency matrix $\rightarrow$ Reduced adjacency matrix}
Let $M$ and $\hat{M}$ be constructed as before. From $\hat{M}$, we can define a matrix $M'$ as:
\begin{align*}[left=\empheqlbrace]
M'_1&= \hat{M}_1\\
M'_{i+1}&=M'_{i}  \oplus \hat{M}_{i+1}
\end{align*}
where the $\oplus$ operation is defined as follow: given two words $a_0\cdots a_n$ and $b_0\cdots b_n$, then denoting $j=\min\lbrace i : b_i\neq 0\rbrace$, and $c=a_j+b_j$,
\begin{center}
\begin{tabular}{c@{\,}c@{\,}c@{\,}c}
  &   \colorbox{blue!50}{$a_0\cdots a_{j-1}$}&\colorbox{red!50}{$a_j$}&$a_{j+1}\cdots a_n$\\
 $ \oplus $&$b_0\cdots b_{j-1}$&\colorbox{red!50}{$b_j$}&\colorbox{green!50}{$b_{j+1}\cdots b_n$} \\
\hline
  =&\colorbox{blue!50}{$a_0\cdots a_{j-1}$}&\colorbox{red!50}{$c$}&\colorbox{green!50}{$b_{j+1}\cdots b_n$}\\
\end{tabular}\quad.
\end{center}
\noindent
By construction, $\oplus$ is the inverse operation of $\ominus$, so that we have the following lemma:
\begin{lemma}
The following properties hold:
\begin{itemize}
\item $M_i \oplus ( M_{i+1} \ominus M_i)=M_{i+1}$
\item $(M_i\oplus \hat{M}_{i+1})\ominus M_i=\hat{M}_{i+1}$
\end{itemize}
\end{lemma}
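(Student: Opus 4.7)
The plan is to verify both identities by a direct unfolding of the definitions of $\oplus$ and $\ominus$. The key conceptual observation is that each operation is parameterized by an index $j$, but defined via different minimum conditions: $\ominus$ uses the first position of disagreement between its two operands, whereas $\oplus$ uses the first nonzero position of its right operand. Both identities hinge on these two notions of $j$ coinciding on the intermediate rows produced.

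For the first identity $M_i \oplus (M_{i+1} \ominus M_i) = M_{i+1}$, I would set $j = \min\{k : (M_i)_k \neq (M_{i+1})_k\}$ and compute the inner part. By definition of $\ominus$, the row $M_{i+1} \ominus M_i$ equals $0\cdots 0\, c\, (M_{i+1})_{j+1}\cdots (M_{i+1})_n$ with $c = (M_{i+1})_j - (M_i)_j \neq 0$. When $\oplus$ is applied with $M_i$ as left operand, its parameter is forced to be the same $j$, since $j$ is the first nonzero index of the right operand. The resulting entry at position $j$ is $(M_i)_j + c = (M_{i+1})_j$; entries strictly before $j$ copy $M_i$, which agrees with $M_{i+1}$ there by definition of $j$; entries strictly after $j$ copy $M_{i+1}$. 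Hence the output is $M_{i+1}$.

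For the second identity $(M_i \oplus \hat{M}_{i+1}) \ominus M_i = \hat{M}_{i+1}$, I would set $j = \min\{k : (\hat{M}_{i+1})_k \neq 0\}$. By definition of $\oplus$, the row $M_i \oplus \hat{M}_{i+1}$ agrees with $M_i$ in positions $0,\dots,j-1$, has entry $(M_i)_j + (\hat{M}_{i+1})_j$ at position $j$, and copies $\hat{M}_{i+1}$ beyond. Applying $\ominus$ with $M_i$ then identifies $j$ as the first position of disagreement (non-trivially, since $(\hat{M}_{i+1})_j \neq 0$), and yields zeros before $j$, the value $(\hat{M}_{i+1})_j$ at $j$, and the tail of $\hat{M}_{i+1}$ after $j$; this is exactly $\hat{M}_{i+1}$, using that its entries before $j$ vanish by the choice of $j$.

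The only genuinely delicate point—and the one I would insist on spelling out—is the matching of the two parameters $j$: the $j$ of the outer operation must equal the $j$ of the inner operation whenever the operands are as above. Once this is observed, both identities collapse to a single-coordinate arithmetic check ($a + (b - a) = b$ and $(a + h) - a = h$), and the rest is bookkeeping on prefixes and suffixes. No real obstacle beyond careful notation is involved.
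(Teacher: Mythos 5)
Your proof is correct and follows the same route as the paper, which simply asserts that $\oplus$ is by construction the inverse operation of $\ominus$ and states the lemma without further argument; you supply the coordinate-by-coordinate verification that the paper leaves implicit. Your key observation---that the parameter $j$ of the outer operation (first nonzero entry of the right operand of $\oplus$, respectively first disagreement for $\ominus$) coincides with the $j$ of the inner operation---is precisely what justifies the paper's one-line ``by construction'' claim.
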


\noindent
Thefore, $M=M'$.

\medskip
\noindent
The FDAG of Figure~\ref{fig:ordering:canonical} is reproduced below to illustrates the stages of the proof. This FDAG is constructed in 7 steps, that are (in this order): \elongation, \widening, \widening, \elongation, \widening, \branching and \branching. The matrices $A$, $M$ and $\hat{M}$ are given in Figure~\ref{fig:adjmat}. One can see that $\hat{M}$ is of size 7, as expected.

\begin{figure}[h]
\centering
\begin{minipage}[c]{0.33\textwidth}
\centering
\def\xscale{0.7}
\def\yscale{0.7}
\def\nodescale{0.7}
\begin{tikzpicture}[xscale=\xscale,yscale=\yscale]
\tikzstyle{noeud}=[draw,circle,fill=blue,scale=\nodescale*1]
\node[noeud] (0) at (0,0) {};
\tikzstyle{noeud}=[draw,circle,fill=magenta,scale=\nodescale*1]
\node[noeud] (1) at (-1,1) {};
\tikzstyle{noeud}=[draw,circle,fill=cyan,scale=\nodescale*1]
\node[noeud] (2) at (0,1) {};
\tikzstyle{noeud}=[draw,circle,fill=brown,scale=\nodescale*1]
\node[noeud] (3) at (1,1) {};
\tikzstyle{noeud}=[draw,circle,fill=orange,scale=\nodescale*1]
\node[noeud] (4) at (-1,2) {};
\tikzstyle{noeud}=[draw,circle,fill=yellow,scale=\nodescale*1]
\node[noeud] (5) at (0,2) {};
\tikzstyle{fleche}=[->-,>=latex,black]
\draw[fleche] (1)--(0) {};
\draw[fleche] (2)--(0) {}node [right,midway,scale=\nodescale] {$2$};
\draw[fleche] (3)--(0) {}node [right,midway,scale=\nodescale] {$3$};
\draw[fleche] (4)--(1) {};
\draw[fleche] (5)--(2) {};
\draw[fleche] (5)--(1) {}node [left,midway,scale=\nodescale] {$2$};
\end{tikzpicture}

$\begin{array}{c|c:ccc:cc}
v & \node[blue] & \node[magenta] &\node[cyan]&\node[brown]&\node[orange]&\node[yellow]\\
\topord(v)&0&1&2&3&4&5 \\
\childc(v) & &0&00& 000 & 1 &211
\end{array}$
\end{minipage}~
\begin{minipage}[c]{0.33\textwidth}
\centering
$$\begin{blockarray}{ccccccc}
A&v_5 & v_4 & v_3 & v_2 & v_1 &v_0\\
\begin{block}{c(cccccc)}
  v_5 & . & 0 & 0 & 1 & 2&0 \\
  v_4 &. & .& 0 & 0 & 1& 0\\
  v_3 & . & . & . & 0 & 0& 3\\
  v_2 & . & . & . & . & 0 &2\\
  v_1 & . & . & . & . & . &1\\
    v_0 & . & . & . & . & . &.\\
\end{block}
\end{blockarray}$$
\end{minipage}

\begin{minipage}[c]{0.33\textwidth}
\centering
 $$  \begin{blockarray}{cccccc}
M & v_4 & v_3 & v_2 & v_1 &v_0\\
\begin{block}{c(ccccc)}
  v_5 &  0 & 0 & 1 & 2&0 \\
  v_4 & .& 0 & 0 & 1& 0\\
  v_3 &  . & . & 0 & 0& 3\\
  v_2 &  . & . & . & 0 &2\\
  v_1 &  . & . & . & . &1\\
\end{block}
\end{blockarray}$$
\end{minipage}~
\begin{minipage}[c]{0.33\textwidth}
\centering
$$\begin{blockarray}{cccccc}
\hat{M} & v_4 & v_3 & v_2 & v_1 &v_0\\
\begin{block}{c(ccccc)}
  v_5 &  0 & 0 & 1 & 2&0 \\
  v_4 & .& 0 & 0 & 1& 0\\
  v_3 &  . & . & 0 & 0& 1\\
  v_2 &  . & . & . & 0 &1\\
  v_1 &  . & . & . & . &1\\
\end{block}
\end{blockarray}$$
\end{minipage}
\vspace{-\baselineskip}
\caption{The FDAG of Figure~\ref{fig:ordering:canonical} reproduced (top left), its adjacency matrix $A$ (top right), its reduced adjacency matrix $M$ (bottom left) and its incremental adjacency matrix $\hat{M}$ (bottom right). Dots represent zeros corresponding to $A_{i,j}$ elements with $i\leq j$.}
\label{fig:adjmat}
\end{figure}

\begin{remark}
It should be noted that (general) Fishburn matrices, with at least one non-zero entry on each row and column, are in bijection with FDAGs compressing forests made of a unique tree. Indeed, via the bijection above, as such FDAG have a unique root, it must be the last inserted vertex, and therefore, each column admits at least one non-zero entry (otherwise it would be another root).
\end{remark}

\begin{remark}
It is possible to enumerate row-Fishburn matrices by using the previous bijection and the FDAGs enumeration tree together. Nevertheless, things are a little simpler in this case and the equivalent of the operations \elongation and \widening can be merged, giving two rules for matrix expansion:
\begin{enumerate}
\item[(R1)] Increase one coefficient to the (inclusive) left of the rightmost nonzero coefficient of the top row by 1.
\item[(R2)] Increase the dimension of the matrix by $1$ (to the left and top), all new coefficients set to zero. Set one coefficient of the top row to 1.
\end{enumerate}
\end{remark}

\section{Index of frequent notations}\label{annex:notations}
{\small
\paragraph{Trees \& DAGs} $v$ designates indifferently a vertex of a tree $T$ or a DAG $D$.
\begin{description}[leftmargin=!,labelwidth=7em,align=right]
\item[$\child(v)$] \emph{children} of $v$: all vertices connected to an arc leaving $v$
\item[$\deg(v)$]  \emph{outdegree} of $v$: number of children of $v$
\item[$\des(v)$]  \emph{descendants} of $v$: children of $v$, their children, and so on
\item[$\height(v)$] \emph{height} of $v$: length of the longest path from $v$ to a leaf
\item[$\#T, \#D$] number of vertices
\item[{$T[v], D[v]$}] \emph{subtree/subDAG} rooted in $v$ and composed of $v$ and $\des(v)$
\item[$\leaves(T), \leaves(D)$] \emph{leaves}: vertices without any children
\item[$\deg(T), \deg(D)$]  \emph{outdegree}: maximum outdegree among all vertices
\item[$\subtrees(T)$]  the set of all distinct subtrees of $T$
\item[$\tree$] the set of all trees
\item[$\forest$] the set of all \emph{forests}, i.e. sets of trees such that no tree is a subtree of another
\end{description}

\paragraph{DAG reduction} $D$ designates a FDAG, $F$ a forest.
\begin{description}[leftmargin=!,labelwidth=7em,align=right]
\item[$\red(F)$] DAG reduction of the forest $F$
\item[$\red^{-1}(D)$]  the forest $F$ compressed by $D$, so that $\red(F)=D$
\item[{$\red^{-1}(D[v])$}] the tree $T$ compressed by $D[v]$, so that $\red(\lbrace T\rbrace) = D[v]$
\end{description}

\paragraph{Canonical FDAGs} Let $D$ be a fixed FDAG, $v$ any vertex of $D$ and $v_n$ the vertex with highest index in the canonical ordering.
\begin{description}[leftmargin=!,labelwidth=7em,align=right]
\item[$\psi(\cdot)$] canonical topological ordering of $D$
\item[$\childc(v)$] $\child(v)$ sorted by decreasing order on the indices defined by $\psi(\cdot)$
\item[$\alphabet{=}$] the indices of all vertices of $D$ with same height as $v_n$
\item[$\alphabet{<}$] the indices of all vertices of $D$ with strictly inferior height as $v_n$
\item[{$\lang[<]{}$}] the set of all \emph{decreasing words} on $\alphabet{<}$, i.e. where each letter is greater than or equal to those who follow, w.r.t. the lexicographical order
\item[{ $\lang[<]{\overline{w}}$}] the set of all decreasing words \emph{bounded by $\overline{w}$}, i.e. all words in $\lang[<]{}$ that are greater than or equal to $\overline{w}$, w.r.t. the lexicographical order
\item[$\scut(w)$] \emph{suffix-cut operator}: the word $w$ deprived of its last letter
\end{description}

\paragraph{Enumeration tree}
\begin{description}[leftmargin=!,labelwidth=7em,align=right]
\item[$D_0$] the FDAG with one vertex and no arcs
\item[ $E_k$] the set of FDAGs that are accessible in exactly $k$ steps from $D_0$ in the FDAGs enumeration tree
\item[$\presence_D$] \emph{presence vector}: $\presence_D(i)$ counts how many times the tree $\red^{-1}(D[v_i])$ appears in the forest $\red^{-1}(D)$, for any FDAG $D=(v_0,\dots,v_n)$ and $i\in \lbrace 0,\dots, n\rbrace$.
\end{description}

\paragraph{Forests of subtrees} Let $\Delta$ be a subFDAG of $D$, and $v$ any vertex. Let $F=\red^{-1}(D)$.
\begin{description}[leftmargin=!,labelwidth=7em,align=right]
\item[$S(\Delta)$] \emph{candidate vertices of $\Delta$}:  the set of all vertices $v'$ of $D$ so that $\Delta \cup \lbrace v'\rbrace$ is still a subFDAG of $D$
\item[$\origin(v)$] \emph{origin of $v$}: the set of indices $i$ so that $\red^{-1}(D[v])$ is a subtree of $T_i\in F$.
\item[$\stufforigin(\Delta)$] \emph{origin of $\Delta$}: the set of indices $i$ so that $\red^{-1}(\Delta)$ is a forest of subtrees of $T_i \in F$.
\end{description}}

\end{document}